\newtheorem{lemma}{Lemma}
\newtheorem{claim}{Claim}
\newtheorem{observation}{Observation}
\newtheorem{corollary}{Corollary}
\newtheorem{proposition}{Proposition}
\newtheorem{remark}{Remark}
\newtheorem{theorem}{Theorem}
\newtheorem{definition}{Definition}
\newtheorem{example}{Example}
\newcommand{\xhdr}[1]{\vspace{2mm} \noindent{\bf #1}}
\newcommand{\prof}{\textnormal{\textsf{profit}}}
\newcommand{\util}[3]{U_{#3}\left(#1, #2\right)}
\newcommand{\trade}[2]{x\left(#1, #2\right)}
\newcommand{\rev}{\textsf{rev}}
\newcommand{\gft}{\textsf{GFT}}
\newcommand{\sw}{\textsf{SW}}
\newcommand{\app}[2]{\phi_{#1}\left(#2\right)}
\newcommand{\vs}{v_s}
\newcommand{\vb}{v_b}
\newcommand{\bs}{b_s}
\newcommand{\bb}{b_b}
\newcommand{\vi}{v_i}
\newcommand{\Mplus}{\M^+}
\newcommand{\Meps}{\M_{\varepsilon}}
\newcommand{\Mopt}{M_{\textrm{opt}}}
\newcommand{\Memp}{\hat M_{\textrm{opt}}}
\newcommand{\simplify}{\textsc{Simplify-the-Best-Mechanism}\xspace}
\newcommand{\augment}{\textsc{Augment-the-Best-Mechanism}\xspace}
\newcommand{\E}[1]{\mathbb{E}\left[#1\right]}
\newcommand{\Var}[1]{\mathbb{V}\left[#1\right]}
\newcommand{\Esub}[2]{\mathbb{E}_{#1}\left[#2\right]}
\newcommand{\G}{\mathcal G}
\newcommand{\X}{\mathcal{X}}
\newcommand{\eqd}{\stackrel{\text{(d)}}{=}}
\newcommand{\Ei}[1]{\mathbb{E}^i\left[#1\right]}
\newcommand{\Ez}[1]{\mathbb{E}^0\left[#1\right]}
\newcommand{\Eo}[1]{\mathbb{E}^1\left[#1\right]}
\newcommand{\Et}[1]{\mathbb{E}^2\left[#1\right]}
\renewcommand{\P}[1]{\mathbb{P}\left(#1\right)}
\newcommand{\Psub}[2]{\mathbb{P}_{#1}\left(#2\right)}
\newcommand{\Pz}[1]{\mathbb{P}^0\left(#1\right)}
\newcommand{\Pt}[1]{\mathbb{P}^2\left(#1\right)}
\newcommand{\Pbz}{\mathbb{P}^0}
\newcommand{\Pbo}{\mathbb{P}^1}
\newcommand{\Pbt}{\mathbb{P}^2}
\renewcommand{\Pi}[1]{\mathbb{P}^i\left(#1\right)}
\newcommand{\Pbi}{\mathbb{P}^i}
\newcommand{\KL}{\mathrm{KL}}
\newcommand{\kl}[2]{\mathcal{D}_{\mathrm{KL}}\left(#1,#2\right)}
\newcommand{\tv}{\mathrm{TV}}
\newcommand{\eps}{\varepsilon}
\newcommand{\A}{\mathcal{A}}
\newcommand{\C}{\mathcal{C}}
\newcommand{\D}{\mathcal{D}}
\newcommand{\cE}{\mathcal{E}}
\newcommand{\F}{\mathcal{F}}
\newcommand{\M}{\mathcal{M}}
\newcommand{\cN}{\mathcal{N}}
\newcommand{\R}{\mathbb{R}}
\newcommand{\ind}[1]{\mathbbm{1}_{\{{#1\}}}}
\title{Nearly Tight Regret Bounds for Profit Maximization\\in Bilateral Trade\thanks{S.D.G. and F.F.~were supported by the PNRR MUR project
IR0000013-SoBigData.it and the FAIR (Future Artificial Intelligence Research) project
PE0000013, funded by the NextGenerationEU program
within the PNRR-PE-AI scheme (M4C2, investment 1.3,
line on Artificial Intelligence). S.D.G. was also supported by the Institute for Complex Systems (Italian National Research Council). C.S.~was supported by a Google Research Award and by the Independent Research Fund Denmark (DFF) under a Sapere Aude Research Leader grant No 1051-00106B.}}
\author{Simone Di Gregorio\thanks{Dept. of Computer, Control and Management Engineering, Sapienza University of Rome,Rome, Italy. Email: \texttt{simone.digregorio@uniroma1.it}}
\and Paul D\"utting\thanks{Google Research, Z\"urich, Switzerland. Email: \texttt{duetting@google.com}}
\and Federico Fusco\thanks{Dept. of Computer, Control and Management Engineering, Sapienza University of Rome,Rome, Italy. Email: \texttt{fuscof@diag.uniroma1.it}}
\and Chris Schwiegelshohn\thanks{Dept. of Computer Science, Aarhus University,
Aarhus, Denmark. Email: \texttt{schwiegelshohn@cs.au.dk}}}
\date{}
\begin{document}

\maketitle

\begin{abstract}
    Bilateral trade models the task of intermediating between two strategic agents, a seller and a buyer, willing to trade a good for which they hold private valuations. We study this problem from the perspective of a broker, in a regret minimization framework. At each time step, a new seller and buyer arrive, and the broker has to propose a mechanism that is incentive-compatible and individually rational, with the goal of maximizing profit. 

    We propose a learning algorithm that guarantees a {nearly} tight {$\Tilde{O}(\sqrt{T})$} regret in the stochastic setting when seller and buyer valuations are drawn i.i.d. from a fixed and possibly correlated unknown distribution. We further show that it is impossible to achieve sublinear regret in the non-stationary scenario where valuations are generated upfront by an adversary. Our ambitious benchmark for these results is the best incentive-compatible and individually rational mechanism.
    This separates us from previous works on efficiency maximization in bilateral trade, where the benchmark is a single number: the best fixed price in hindsight. 

    A particular challenge we face is that uniform convergence for all mechanisms' profits is impossible. We overcome this difficulty via a careful chaining analysis that proves convergence for a provably near-optimal mechanism at (essentially) optimal rate. 
    We further showcase the broader applicability of our techniques by providing nearly optimal results for the joint ads problem.
\end{abstract}

\pagenumbering{arabic}

\section[Introduction]{Introduction 
{}}
\label{sec:introduction}
    
    Bilateral trade is a {fundamental} economic model where two agents---a seller and a buyer---are interested in trading a good {\citep[e.g.,][]{MyersonS83}.} Both agents are characterized by a private valuation for the item, and their goal is to maximize their own utility. 
    {Previous} work {in computer science} {\citep[e.g.,][]{ColiniBaldeschiKLT16,BrustleCWZ17,BabaioffGG20,BlumrosenD21,DuttingFLLR21,DengMSW21,kang22fixed,LiuR023,CaiW23,BernasconiCCF24}} has 
    mainly focused on maximizing economic efficiency, measured in terms of welfare or gains-from-trade. We consider this problem from the perspective of a broker whose goal is to maximize its profit {(payment from the buyer minus payment to the seller)}
    while ensuring that (i) agents behave according to their true preferences (\emph{dominant-strategy incentive compatibility}) and (ii) the utility for participating in the mechanism of each agent is non-negative (\emph{individual rationality}). 

    Profit maximization is a canonical objective in economics (already studied in the seminal paper by \citet{MyersonS83}), and it is a well-motivated objective in the applications (e.g., all e-commerce platforms that connect sellers to buyers). 
    It is also gaining attention from a theoretical computer science perspective (most notably in \citet{HajiaghayiHPS25}). The predominant approach considered in prior work is \emph{Bayesian}, as it assumes that both agents' valuations are drawn from known and typically independent distributions.  In many practical applications, however, the availability of such data is a non-trivial assumption, and valuations may not be independent. 
    
    In this work, we study profit maximization in repeated bilateral trade through the lens of online learning, with the goal of designing an algorithm that minimizes the {regret} with respect to the best mechanism in hindsight. Note {that unlike} prior research {on repeated bilateral trade}, which has focused on gains-from-trade \citep[e.g.,][]{Cesa-BianchiCCF24,Cesa-BianchiCCF24jmlr,AzarFF24,BernasconiCCF24} and has thus considered \emph{fixed-price mechanisms}, here we have the ambitious goal of converging to the profit of {a provably complex optimal mechanism}. In fact, while fixed-price mechanisms to some extent \emph{characterize} efficiency-maximizing mechanisms, they are far from optimal in our problem {(we refer to \Cref{app:efficiency} for further details)}. The nature of our benchmark {also} differentiates us {from} the vast majority of {the} online learning {literature} {on economic problems,} where the learning objective is intrinsically parametric, e.g., the best fixed price(s), \citep[e.g.,][]{BlumKRW03,KleinbergL03,GatmiryKSW24}, the best reserve price(s) \citep[e.g.,][]{Cesa-BianchiGM15,RoughgardenW19}, the best linear contract \citep{ZhuBYWYJ22,DuettinGSW23} or the best fixed bid \citep[e.g.,][]{FengPS18,CesaBianchiCCFS24}. 

    As it is traditional in online learning, we study two data generation models: the adversarial input model, where the agent's valuations are decided upfront by an oblivious adversary, and the stochastic i.i.d.~case, when the agent's valuations are drawn i.i.d. from a possibly correlated but fixed unknown distribution. Note that, in the stochastic i.i.d.~case, the benchmark coincides with the best Bayesian mechanism, but unlike in the typical ``one-shot'' Bayesian mechanism, the input distribution is not known in advance, can be correlated, and has to be learned on the fly. 

    \subsection[Our Results]{Our Results}
    \label{sec:results}

    \xhdr{Stochastic Setting.}
    {As our main result, we show that the stochastic i.i.d.~setting, with possibly correlated buyer and seller valuations, admits a (polynomial-time) learning algorithm with a nearly tight regret bound of $\tilde{O}(\sqrt{T})$}
    \footnote{We use $\tilde{O}(\cdot)$ to suppress poly-logarithmic terms.}.
    A particular challenge is that uniform convergence for all mechanisms' profits is impossible. We overcome this via a careful chaining analysis that proves convergence for a provably near-optimal mechanism at (essentially) optimal rate. 
    We provide a more detailed summary 
    of our analysis in \Cref{sec:challenges}.

        \medskip 
        \noindent
        \textbf{Key Result 1} (\Cref{thm:regret-bound}). 
        For the stochastic i.i.d.~setting, there is a (polynomial-time) learning algorithm that achieves a regret of $\tilde{O}(\sqrt{T})$ with respect to the best mechanism in hindsight. 
        \medskip

    We complement this result with a lower bound of $\Omega(\sqrt{T})$  on the regret of any learning algorithm (\Cref{thm:iid_lower_bound}), showing that our algorithm is optimal up to poly-logarithmic terms. This is accomplished by embedding the standard hard instance for prediction with experts \citep[e.g.,][]{nicolo06}.  
    Our analysis also (essentially) settles the offline sample complexity of learning an expected $ \eps $-optimal mechanism, in the spirit of the sample complexity literature \cite[e.g.,][]{ColeR14}. More precisely, the analysis of our $\tilde{O}(\sqrt{T})$-regret algorithm and the corresponding lower bound implies that to recover a mechanism whose expected profit is $\eps$-optimal, having access to approximately $\nicefrac{1}{\eps^2}$ is sufficient as well as necessary.
    
    \xhdr{Adversarial Setting.}
    Moving to the more challenging adversarial setting, we construct a family of hard instances that defuses any learning algorithm, thus implying that no-regret is unattainable in such regime. We actually prove a stronger result: in the adversarial setting, no learning algorithm can achieve sublinear regret with respect to a $\nicefrac {2}{3} + \eps$ fraction of the optimal profit. 

    \medskip
    \noindent
    \textbf{Key Result 2} (\Cref{thm:adv_lower_bound}). 
    For the adversarial setting, no learning algorithm can achieve sublinear regret with respect to a $\nicefrac{2}{3} + \eps$ fraction of the optimal profit. 
    \medskip

                Our results confirm an intriguing separation between the adversarial and the stochastic model, that has been recently uncovered in many other economically motivated scenarios {\cite[e.g.][]{Cesa-BianchiCCF24,CesaBianchiCCFS24,AggarwalBDF24}}. This difference is indeed not observable in many classical online learning problems, such as prediction with experts, bandits, feedback graphs, or partial monitoring \citep{BartokFPRS14,AlonCDK15}.          A natural open question is whether there are meaningful \emph{intermediate} data generation models beyond the stochastic setting where no-regret is achievable, as in, e.g., smoothed analysis, random order, adversarial corruption \citep[see, e.g.,][]{Roughgarden20}.  {A promising direction here is the notion of a $\sigma$-smooth adversary \citep{HaghtalabRS24}, that has been successfully adopted to move beyond adversarial lower bounds for e.g., online classification \citep{BlockDGR22}, bilateral trade \citep{Cesa-BianchiCCF24jmlr} and general auctions \citep{DurvasulaHZ23}.} We leave this question as an {exciting} 
                open problem. 

        \xhdr{Joint Ads Problem.} We finally exploit a {novel and intriguing} 
        correspondence between bilateral trade and non-excludable mechanism design to (essentially) close an open problem of \citet{AggarwalBDF24}, by 
        {reducing an $O(T^{3/4})$ to an}
        $\tilde{O}(\sqrt{T})$ regret bound 
        for the joint ads problem. 
        In this problem, two advertisers, e.g., a brand and a merchant, both derive a private possibly correlated value from an ad, and the mechanism's decision is to either show the ad or not, and define payments from either party to the mechanism, with the goal of maximizing revenue. 

    \subsection{Challenges}

        Consider a generic mechanism $M$ that belongs to the set $\M$ of all dominant-strategy incentive compatible (DSIC) and individually rational (IR) mechanisms. $M$ is specified by two functions---an allocation and a payment rule---mapping bids to allocation and payments. Standard mechanism design arguments (see \Cref{sec:model}) enable a simple geometric interpretation of the structure of $M$. Its allocation rule corresponds to a subdivision of the unit square $[0,1]^2$ into two regions: one in which the trade happens and one where this is not the case. This partition has to be ``north-west'' monotone so that increasing the buyer's bid or decreasing the seller's does not decrease the likelihood of observing a trade. The payments are then given by the ``south-east'' projection of the valuation pair on the boundary of the allocation region {(see \Cref{fig:myerson_prices}).} 
        
        This characterization simplifies our task from learning two functions to a single curve/region. However, this is far from trivializing the problem: (i) the action space is (still) non-parametric, and (ii) the $\prof$ function mapping mechanisms to their induced profit may be highly non-regular. These two features defuse the standard learning approaches, as we briefly detail in the following. 
        \begin{figure}[t!]
            \centering
            \begin{subfigure}{0.3\textwidth}
                \centering
                \includegraphics[width=\linewidth]{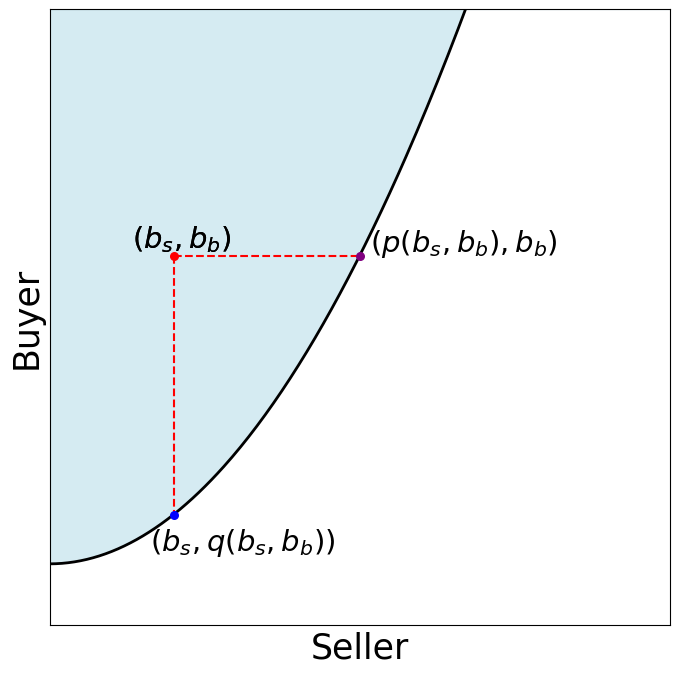}
                \caption{Myerson's Payments.}
                \label{fig:myerson_prices}
            \end{subfigure}
            \hfill
            \begin{subfigure}{0.3\textwidth}
                \centering
                \includegraphics[width=\linewidth]{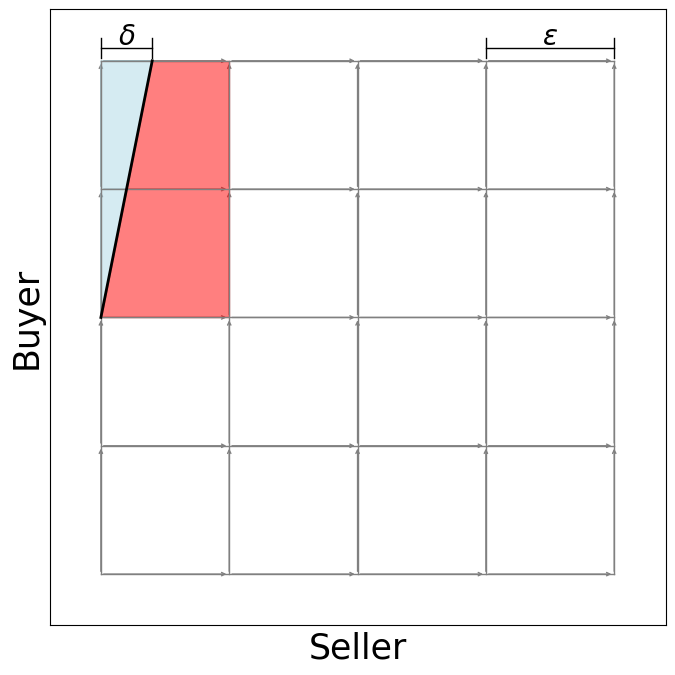}
                \caption{Visualization of \Cref{ex:fixed_grid}.}
                \label{fig:fixed_grid}
            \end{subfigure}
            \hfill
            \begin{subfigure}{0.3\textwidth}
                \centering
                \includegraphics[width=\linewidth]{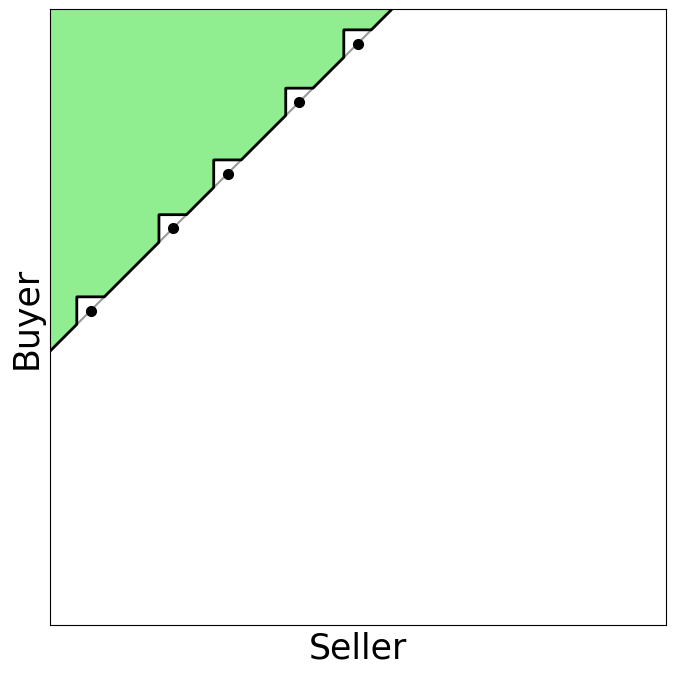}
                \caption{Hard Instance of \Cref{ex:hard_distribution}.}
                \label{fig:hard_distribution}
            \end{subfigure}
            \caption{Visualization of Myerson's Payment, and \Cref{ex:fixed_grid,ex:hard_distribution}.}
        \end{figure}
        
    \xhdr{Insufficiency of Fixed Discretization.}
    When faced with a large action space, the natural  approach in online learning is to discretize it (e.g., \citet{KleinbergSU19}). This method splits the problem in two: finding offline a finite subset of actions that well approximates the optimal action \emph{uniformly over all possible inputs} and then running online a \emph{discrete} learning algorithm on the discretized action space. The most natural discretization of our mechanism space is to cast a uniform $\eps$-grid on the $[0,1]^2$ square and consider all the $\binom{\nicefrac 2\eps}{\nicefrac 1\eps}$ mechanisms $\M^{\perp}_{\eps}$ with allocation regions that can be described as the union of the cells of the grid.\footnote{The family $\M^{\perp}_{\eps}$ is a uniform $\eps$-coverage of the mechanism space, for a suitable definition of distance between mechanisms. Similar ideas have been used successfully with some additional assumptions on the data generation model, e.g., smoothness of the underlying distributions \citep{DurvasulaHZ23,AggarwalBDF24}.}
    Unfortunately, as we show in the following example, this rich class of mechanisms does not provide a uniform approximation for our problem. This is due to the non-continuity (and thus non-Lipschitzness) of our objective function.  
    \begin{example}[A Fixed Discretization Does Not Work]
    \label{ex:fixed_grid}
        Let $\delta>0$ be an arbitrarily small parameter, and consider the uniform distribution $\D$ on the points  $(0,\nicefrac{1}{2})$ and $(\delta,1)$. The optimal mechanism allocates in the triangle of vertices $(0,\nicefrac{1}{2})$,$(0,1)$, and $(\delta,1)$, so that any valuations in the support of $\D$ result in a trade, with expected revenue approximately $\nicefrac 34$ (we consider $\delta$ negligible). Consider now any fixed discretization parameter $\eps > \delta$\footnote{The parameter $\delta$ can be arbitrarily small, while $\eps$ is fixed in advance by the learner. Note, for the same argument, one can adjust this construction to defuse \emph{any} fixed grid.}, the support of $\D$ is contained in the first column of the grid, so the profit extracted by any mechanism $M \in \M^{\perp}_{\eps}$ only depends (up to $O(\delta)$ payment to the seller) on the horizontal segment that crosses such column. The best possible profit corresponds to crossing the first column approximately at height $\nicefrac 12$, for an expected utility of $\nicefrac 12$, which is a constant factor away from the optimal one (see also \Cref{fig:fixed_grid}).
    \end{example}

    Fixed discretization of the action space is one of the main tools for positive results in the adversarial case when faced with a continuous action space \citep[e.g.,][]{KleinbergL03,KleinbergSU19}. This fundamental gap between the best DSIC and IR mechanism and the best-discretized mechanism is, in fact, at the heart of our $\Omega(T)$ adversarial lower bound (\Cref{thm:adv_lower_bound}), where we use a ``needle in a haystack'' construction to hide the optimal mechanism to the learner. 
    
    \xhdr{Failure of Local Search.} A second, natural, approach consists in implementing some sort of local search in the mechanism space, as in e.g., stochastic gradient ascent. Regardless of the metric or notion of concavity we could impose on the non-parametric mechanism space $\M$, the crucial ingredient we would need for this approach to succeed is the uniqueness or ``vicinity'' of optimal mechanisms. Unfortunately, this is not the case, as detailed in the following example.
    
    \begin{example}[Non-Uniqueness of Optimal Mechanisms]
        Consider the uniform distribution over valuations $(0,1),$ and $(\nicefrac{1}{4}, \nicefrac{3}{4})$. Under this distribution, there are two optimal mechanisms: the mechanism $M_1$ that only allocates in $\{(0,1)\}$, and $M_2$ that allocates in the $[0,\nicefrac 14]\times [\nicefrac 34,1]$ rectangle. These two mechanisms achieve expected utility $\nicefrac 12$ and are ``well-separated'' (for instance, as soon as we restrict the allocation region of $M_2$, we lose the trade opportunity given by $(\nicefrac 14,\nicefrac 34)$, and the utility drops by a constant to $\nicefrac 14$).
    \end{example}

    The possibility of having two well-separated maxima allows us, in \Cref{thm:iid_lower_bound}, to embed the standard hard instance for prediction with two experts \citep[e.g.,][]{nicolo06} in our setting, thus providing a $\Omega(\sqrt T)$ lower bound on the minimax regret rate for the stochastic setting.

    \xhdr{Impossibility of Uniform Learning.} The third standard strategy leverages the uniform learnability of the objective functions. Suppose for the sake of the argument that we were able to prove a statement of the following type: for any precision $\eps>0$, it is possible to reconstruct the expected profit of \emph{all} mechanisms at the same time (up to an $\eps$ precision error) using $n(\eps)$ i.i.d. samples from the underlying distribution $\D$. This would imply that a follow-the-leader strategy incurs an overall regret of order $\sum_t \eps_t$, where $\eps_t$ is the precision achievable with $t$ samples. The road-block to pursuing this strategy in our case is that the family of profit function is complex from the statistical point of view, as it exhibits unbounded pseudo-dimension and constant Rademacher complexity (even the range space of monotone allocation region has unbounded VC dimension, see \Cref{app:complexity} for further details). As it turns out, these negative clues are not a coincidence: uniform learnability \emph{is not possible} in our problem, as {we demonstrate in the following example (See also \Cref{thm:uniform-learning})}.

    \begin{example}[A Hard Distribution]
    \label{ex:hard_distribution} Consider the uniform distribution $\D$ on the segment joining $(0,\nicefrac{1}{2})$ and $(\nicefrac 12, 1)$. For any number $n$, and any realization of $n$ i.i.d. samples $S$ from $\D$, consider mechanism $M_S$ that allocates in the $(0,1)$, $(0,\nicefrac{1}{2})$ and $(\nicefrac 12, 1)$ triangle, save for puncturing out the points $S$, as in \Cref{fig:hard_distribution}. We have the following two properties: (i) the expected profit of $M_S$ under $\D$ is at least $\nicefrac 14$, but (ii) the empirical profit of $M_S$ on the samples is zero. This implies that uniform learning is not possible against $\D$.
    \end{example}

    \xhdr{Our Techniques.} We overcome these barriers by controlling a stochastic process that simulates a branch-and-bound strategy among sets of increasingly powerful mechanisms. The branch-and-bound strategy initially considers fixed-price mechanisms (rectangular allocation regions), then iteratively branches off multiple refined mechanisms by modifying the boundaries of simpler ones. As we refine these mechanisms, they become richer and exhibit potentially larger profits. Thus, we implicitly explore the entire search space for mechanisms, avoiding the barrier inherent in local improvements. The issue with considering richer mechanisms is that we no longer can show that their empirical profit is concentrated around the real expected profit, due to the impossibility of uniform learning. Our important insight is that the impossibility of uniform learning in some sense mainly affects the mechanisms with poor empirical profit, thus we can abstain from search paths that explore mechanisms with small profit and focus on refining profitable mechanisms. 
    
     We adapt our characterization of profitable mechanisms as we explore the search space. This addresses the lack of good fixed discretization strategies, as we are learning a (data and distribution dependent) discretization on the fly. For the profitable mechanisms, this branch-and-bound strategy mimics a chaining analysis, a powerful tool for bounding suprema of stochastic processes (see \cite{Talagrand14}). 
    Chaining has been used in a wide variety of applications including generalization bounds \citep{AudibertB07,Cohen-AddadLS25,LiLS00}, dimensionality reduction \citep{BourgainDN15,NarayananN19}, error correcting codes \citep{RudraW14},  graph sparsification \citep{JambulapatiLS23,Lee23}, data structures \citep{BravermanCINWW17}, and compression \citep{Cohen-AddadLSS22,Cohen-AddadD0SS25}; see \cite{Nelson16} for more applications and references.
    The chaining analysis gives us very tight control over the profit of good mechanisms, which allows us to learn a sequence of mechanisms with nearly optimal regret. We give a more technical overview of the key ingredients in our proof in \Cref{sec:challenges}.

\subsection{Further Related Work}
\label{sec:related}

    \xhdr{Efficiency and Profit in Bilateral Trade.} The celebrated result of \citet{MyersonS83} on bilateral trade (strengthening an earlier result of \citet{Vickrey61}) states that even for independent and regular valuation distributions, it is impossible to simultaneously achieve (a) incentive compatibility---even the weaker notion of Bayesian incentive compatibility (BIC)--- (b) perfect efficiency,
    ({i.e., trade whenever the buyer has a higher valuation than the seller}), (c) individual rationality, and (d) budget balance (requiring that the broker does not run a deficit). On the positive side, they also characterize the optimal BIC mechanisms (which turn out to be DSIC) for social welfare, gains from trade, and profit under some assumptions on the valuations' distributions.

    Motivated by this impossibility result, a fruitful line of work in computer science has aimed at mechanisms that are incentive compatible and budget balanced, while achieving a provable worst-case approximation guarantee --- with respect to social welfare \citep{LiuR023,CaiW23,BlumrosenD21,DuttingFLLR21,kang22fixed} or gains-from-trade \citep{Mcafee08,Fei22,CaiGMZ21,DengMSW21,BrustleCWZ17}. 
    While these results hold in the Bayesian setting, where the agents' valuations are drawn independently from some known distribution, some of these approximation mechanisms actually make use of only a few (sometimes just a single sample) from the underlying distributions to achieve constant-factor approximations \citep{BlumrosenD21,DuttingFLLR21,kang22fixed}.

    The recent work of \citet{HajiaghayiHPS25} studies the compatibility of profit and efficiency (both gains-from-trade and social welfare) in the Bayesian setting. While they demonstrate positive results for fixed-price mechanisms, they present strong impossibility results for general mechanisms.
    
    \xhdr{Sample Complexity.} Beyond the rich line of work on online learning, a related approach considers the \emph{offline} learning problem of economic design problems in the spirit of PAC-learning \citep[e.g.,][]{ColeR14,MorgensternR15,Devanur0P16,CaiD17,GonczarowskiW21}. While the standard online-to-offline reduction tells that sample-complexity results imply online regret bounds, we are not aware of any prior work yielding non-trivial results for our settings. For instance, \citet{ColeR14} consider the setting where bidders’ valuations are independent and exhibit some regularity; we consider a bilateral trade with arbitrary and possibly correlated distributions. \citet{MorgensternR15} provide results for finite pseudo-dimension; {in \Cref{app:complexity}} we prove that such measure is unbounded in our problem. \citet{GuoHGZ21} show general sample complexity bounds for product distributions. Their results imply sample complexity bounds for a range of applications, via appropriate discretizations of the value space. In contrast, a feature of our problem is the lack of a viable discretization (see \Cref{ex:fixed_grid}). They also provide results for \emph{monotone} problems, something our setting does not enjoy.

\section{The Learning Model and Preliminaries}
\label{sec:model}

    \xhdr{Profit in Bilateral Trade.} In the (static) bilateral trade problem, a seller and a buyer interact with each other. The seller holds a single good. The seller's and buyer's  (private) valuations for the good are $\vs \in [0,1]$ and $\vb \in [0,1]$, respectively. Seller and buyer submit a bid $\bs \in [0,1]$ and $\bb \in [0,1]$ (not necessarily truthful) to a broker running mechanism $M$. A mechanism is characterized by an allocation region $A \subseteq[0,1]^2$, and pricing rules $p ,q:[0,1]^2 \to [0,1]$. 
    The trade happens if and only if the bids $(\bs,\bb)$ belong to the allocation region $A$, while the payments are made according to $p$ and $q$. 
    Payments are from the broker to the seller, and from the buyer to the broker.
    That is, the seller receives $p(\bs,\bb)$, 
    while the buyer pays $q(\bs,\bb)$.
    We require that $p(\bs,\bb) = q(\bs,\bb) = 0$ whenever $(\bs,\bb) \not\in A$, i.e., there is no trade.\footnote{The requirement that payments are bounded in $[0,1]$ and zero when there is no trade is without loss of generality, given our focus on DSIC and IR mechanisms and the goal of maximizing the broker's profit. See \Cref{rem:wlog}.}

    The buyer's and seller's utility with valuations $\vb$ and $\vs$ 
    under bids $(\bb, \bs) \in [0,1]^2$ are: 
    \begin{equation*}
    \label{eq:utilities}
    \util{\bs}{\bb}{s} = \vs - \ind{(\bs,\bb)\in A} \cdot \vs + p(\bs, \bb), \util{\bs}{\bb}{b} = \ind{(\bs,\bb)\in A} \cdot \vb - q(\bs, \bb) \notag
    \end{equation*}
     We focus on dominant-strategy incentive compatible (DSIC) and individually rational (IR) mechanisms. There, each player maximizes his utility by a truthful bid regardless of the other player's bid, and the utility from participating in the mechanism is at least as high as that from not participating in the mechanism (i.e., it is at least $\vs$ for the seller and at least $0$ for the buyer).
    In formulas,
    \begin{align*}
        \text{DSIC:}\quad \util{\vs}{\bb}{s} &\geq \util{\bs}{\bb}{s} &&\forall \ \vs \in [0,1], (\bs, \bb) \in [0, 1]^2 \\
        \util{\bs}{\vb}{b} & \geq \util{\bs}{\bb}{b} &&\forall \ \vb \in [0,1], (\bs, \bb) \in [0, 1]^2 \nonumber \\
        \text{IR}:\quad\;\;\; \util{\vs}{\bb}{s} &\geq \vs, \,\, \util{\bs}{\vb}{b} \geq 0 &&\forall \ (\vs, \vb) \in [0, 1]^2, (\bs, \bb) \in [0, 1]^2 \nonumber
    \end{align*}

    We denote the class of all DSIC and IC mechanisms by $\M$. Since we consider DSIC and IR mechanisms, we will henceforth assume that the players' bids are equal to their respective valuations. 
    The profit earned by a broker running mechanism $M$ 
    is the difference between the payment the mechanism collects from the buyer and the payment that it makes to the seller. That is,
    $\prof(M,(\vs,\vb)) = q(\vs,\vb) - p(\vs,\vb)$.

    \xhdr{The Learning Protocol.} We study the repeated bilateral trade problem, defined as follows. At each time step $t$, a new seller-buyer pair arrives, with private valuations $v^t = (\vs^t,\vb^t) \in [0,1]^2$. The learner/broker proposes a mechanism $M^t$, and the agents bid. The trade occurs according to $M^t$ and the bids of the agents, which then leave forever. The broker earns the difference between the price paid \emph{by} the buyer and that paid \emph{to} the seller, i.e., $\prof(M^t, v^t)$.
    Formally, we consider the following learning protocol. 
    
        \begin{algorithm}[h!]
    \begin{algorithmic}[ht]
        \For{time $t=1,2,\ldots$}
            \State a new pair of agents arrives with private valuations $v^t=(\vs^t,\vb^t) \in [0,1]^2$
            \State the learner declares a mechanism $M^t$, and then the agents declare their bids
            \State the trade takes place according to mechanism $M^t$ and the bids
            \State the learner gains $\prof(M^t, v^t) \in [-1,1]$
        \EndFor
    \end{algorithmic}
    \caption*{\textbf{The Learning Protocol}}
    \end{algorithm}

    We aim to design a learning algorithm that minimizes the regret: 
    \[
        \sup_{M \in \M} \E{\sum_{t=1}^{T} \prof(M, v^t) - \sum_{t=1}^{T} \prof(M^t, v^t)}.
    \]
    The typical goal is to achieve sublinear worst-case regret; this implies that on any input, the algorithm's profit converges, on average, to that of the benchmark.

    We consider both the adversarial setting, where the sequence of valuations is generated up-front by an oblivious adversary, and the stochastic i.i.d. one, where the agents' valuations  $v^t = (\vs^t,\vb^t)$ are drawn i.i.d. from an unknown distribution. In the latter model, the $v^t$ are independent across time, but the seller and buyer valuations can be arbitrarily correlated.
    
    \xhdr{Characterization of Mechanisms.}    
    We next observe that the allocation regions and the pricing functions of DSIC and IR mechanisms have a distinctive structure: the allocation region needs to respect a certain monotonicity property, while the allocation region uniquely induces the payments. We formally introduce monotonicity and Myerson payments \citep{myerson81}.

    \begin{definition}[Monotone Allocation Region]
    \label{def:monotone_allocation}
        We define the partial ordering $\preceq$ on $[0,1]^2$ as follows: for any two points $x=(x_1,x_2)$ and $y=(y_1,y_2)$, we say that $x\preceq y$ ($y$ dominates $x$) if $x_1 \ge y_1$ and $x_2 \le y_2.$ 
        An allocation region $A\subseteq[0,1]^2$ is monotone (with respect to $\preceq$) if for any $x \in A$ and $y \in [0,1]^2$ with $x \preceq y$ it holds that $y \in A$.
    \end{definition}

    \begin{definition}[Myerson Payments]
    \label{def:payments}
        Let $A$ be a monotone allocation region, then the associated Myerson payments are defined as follows:
        \begin{align*}
            p(\bs,\bb) =& \ind{(\bs,\bb) \in A} \cdot \max\{x \in [0,1] \mid (x,\bb) \in A\} &&\forall\; (\bs,\bb) \in [0,1]^2\\
            q(\bs,\bb) = &\ind{(\bs,\bb) \in A}\cdot \min\{y \in [0,1] \mid (\bs,y) \in A\} &&\forall\; (\bs,\bb) \in [0,1]^2 
        \end{align*}
    \end{definition}

    In words, for an allocation region to be monotone it should be ``closed'' in a north-west direction. That is, if a point $(\bs,\bb)$ is in $A$ then any point $(\bs',\bb)$ with $\bs' \leq \bs$ and any point $(\bs,\bb')$ with $\bb' \geq \bb$ should be in $A$ as well. The payments of a point $(\bs,\bb) \in A$, in turn, correspond to the south-east projection to the allocation boundary. See \Cref{fig:myerson_prices} for illustrations.


    \begin{restatable}{proposition}{characterization}
    \label{thm:mechanisms}
        A mechanism $M$ for bilateral trade is dominant-strategy incentive compatible and individually rational if and only if its allocation region is monotone and the payments are Myerson payments.
    \end{restatable}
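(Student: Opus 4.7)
The plan is to establish both directions by reducing to the standard single-agent Myerson characterization, applied separately to the buyer's and the seller's report while holding the other fixed.

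For the ``if'' direction, I would start with a monotone allocation region $A$ and the Myerson payments from \Cref{def:payments} and verify DSIC and IR directly. Fix $\bb$ and consider the seller's utility $U_s$ as a function of her bid $\bs$. By monotonicity of $A$, the set $\{x : (x,\bb) \in A\}$ is of the form $[0, \tau(\bb)]$ (possibly empty), and the Myerson rule sets $p(\bs, \bb) = \tau(\bb)$ whenever $(\bs, \bb) \in A$ and $0$ otherwise. A direct case analysis on whether $\vs \le \tau(\bb)$ shows that reporting $\bs = \vs$ achieves utility $\max\{\vs, \tau(\bb)\}$, which is the best the seller can do, and is always at least $\vs$; this yields DSIC and IR for the seller. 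The symmetric argument, using that $\{y : (\bs, y) \in A\}$ is of the form $[\sigma(\bs), 1]$ with $q = \sigma(\bs)$ on $A$, handles the buyer.

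For the ``only if'' direction, I would fix $\bb$ and view the seller as a single agent facing a mechanism with allocation $a_\bb(\bs) = \ind{(\bs,\bb)\in A}$ and payment $p_\bb(\bs) = p(\bs,\bb)$ (similarly fix $\bs$ for the buyer). The seller's utility is $\vs(1 - a_\bb(\bs)) + p_\bb(\bs)$, so DSIC is exactly the standard single-agent truthfulness condition for a procurement setting with ``cost'' $\vs$. The classical Myerson argument (e.g., via the taxation principle, or by comparing the IC inequalities at two reports $\vs < \vs'$) forces $a_\bb(\cdot)$ to be weakly decreasing in $\bs$ and $p_\bb(\cdot)$ to be constant on the trade region, equal to the threshold $\sup\{x : a_\bb(x) = 1\}$; IR then rules out the constant-zero payment option when the threshold is positive and pins the payment to precisely the threshold. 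Applying the symmetric argument to the buyer yields that $\{y : (\bs, y) \in A\}$ must be upward-closed with $q(\bs, \bb)$ equal to the corresponding buyer threshold. Combining the two one-dimensional monotonicity statements gives exactly the partial order $\preceq$ of \Cref{def:monotone_allocation}, and the threshold formulas are precisely the Myerson payments.

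The only delicate step I expect is making the ``threshold'' well-defined, since in principle the trade set in a given row or column could be any downward-/upward-closed subset of $[0,1]$ rather than a closed interval, which affects whether the $\max$ and $\min$ in \Cref{def:payments} are attained. I would handle this by a standard argument: DSIC forces the payment to be constant on the trade set (otherwise an agent at a report where the payment is suboptimal would deviate to a better report in the trade set), and IR combined with DSIC at the boundary forces the threshold to be attained (or equivalently, one can take $A$ to be the closure of its trade rows/columns without loss of generality, as the measure-zero boundary does not affect profit). This cleanup is routine but worth writing out carefully to avoid a hidden continuity assumption.
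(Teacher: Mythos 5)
Your proposal is correct and follows essentially the same route as the paper: the ``if'' direction by direct verification of the threshold structure of Myerson payments for each agent with the other's bid fixed, and the ``only if'' direction by the standard one-dimensional Myerson/taxation argument (the paper writes it out via the two-report IC sandwich inequality $z(x(z)-x(y))\le p(z)-p(y)\le y(x(z)-x(y))$ and a limit, and pins the additive constant using the model's normalization $p=q=0$ off the trade region rather than IR alone). The closedness subtlety you flag is handled in the paper exactly by your second suggestion, namely the standing convention that allocation regions are closed.
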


    The proof of \Cref{thm:mechanisms} follows by standard arguments \citep{myerson81,MyersonS83}. For completeness, we provide a proof in \Cref{app:mechanisms}. An important consequence of \Cref{thm:mechanisms} is that it reduces the learning problem to learning the monotone allocation region $A$, with the learner's profit given by Myerson payments.
    Note, though, that this is far from trivializing the problem, as we formally argue in \Cref{app:complexity}.

    \xhdr{Profit and Properties of Optimal Mechanism.} 
    We conclude the preliminaries with some observations regarding the tradeoffs involved in choosing different (monotone) allocation regions. The first observation is that by shrinking the allocation region we loose potentially profitable trades, while we increase the profit of those trades that still happen.

    \begin{observation}[Trades vs.~Profit Trade-off]
        \label{prop:trade-off}
        Consider any two mechanisms $M,M'$ in $\M$ with allocation regions $A, A'$ such that $A \subseteq A'$. 
        For any  $v \in [0,1]^2$ we have  one of three cases: 
        \begin{itemize}
            \item $v \notin A'$, and  $\prof(M,v)=\prof(M',v)=0$
            \item $v \in A'\setminus A$, and $\prof(M,v)=0$
            \item $v \in A$, and $\prof(M,v) \ge \prof(M',v)$
        \end{itemize}
    \end{observation}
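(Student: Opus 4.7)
The plan is to invoke \Cref{thm:mechanisms} so that both $M$ and $M'$ have allocation regions that are monotone and payments that follow the Myerson formula from \Cref{def:payments}, and then do a straightforward case analysis on the location of $v = (\vs,\vb)$ relative to $A$ and $A'$. The only structural fact I will use beyond the definitions is the monotonicity $A\subseteq A'$, which lets me compare the suprema/infima appearing inside the Myerson payment expressions on a row-by-row and column-by-column basis.

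For the first two cases, the argument is essentially a bookkeeping step. If $v\notin A'$, then by $A\subseteq A'$ also $v\notin A$; the indicator $\ind{(\vs,\vb)\in A}$ in both the $p$ and $q$ expressions of \Cref{def:payments} vanishes for both mechanisms, so the profit $q(\vs,\vb)-p(\vs,\vb)$ is zero for each. If $v \in A'\setminus A$, then the same indicator vanishes for $M$ (though not for $M'$), giving $\prof(M,v)=0$. I will write these two cases out in a single line each.

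The interesting case is $v\in A$, where $v\in A'$ as well and all four Myerson payments are nonzero. Here I plan to use set inclusion restricted to the ``slices'' through $v$: since $\{x\in[0,1]:(x,\vb)\in A\}\subseteq\{x\in[0,1]:(x,\vb)\in A'\}$ and similarly for the vertical slice at $\vs$, taking $\max$ on the seller side and $\min$ on the buyer side yields $p_M(\vs,\vb)\le p_{M'}(\vs,\vb)$ and $q_M(\vs,\vb)\ge q_{M'}(\vs,\vb)$. Subtracting gives $\prof(M,v)=q_M-p_M\ge q_{M'}-p_{M'}=\prof(M',v)$, which is the desired inequality.

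There is no real obstacle: the whole statement is a direct consequence of \Cref{def:payments} together with the monotonicity of set-indexed min/max operations, and the case split is dictated by the indicator in the Myerson formulas. The only minor wrinkle worth a sentence is that the $\min$ and $\max$ in \Cref{def:payments} are well-defined on the relevant slices because a monotone allocation region is ``closed in the north-west direction,'' making each nonempty slice a closed interval extending to the appropriate boundary; this ensures the extrema are attained and the comparisons are valid.
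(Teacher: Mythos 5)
Your proof is correct and complete. The paper itself does not supply a proof for this observation---it is stated as immediate from \Cref{def:payments} and the monotonicity of allocation regions---and your argument is exactly the natural one a reader would reconstruct: the first two cases follow from the indicator in the Myerson formulas, and the third from the fact that set inclusion of the horizontal slice $\{x : (x,\vb)\in A\}\subseteq\{x : (x,\vb)\in A'\}$ raises the seller payment while inclusion of the vertical slice lowers the buyer payment, so the difference $q-p$ is only larger for the smaller region. Your closing remark about the extrema being attained (because the allocation regions are taken to be closed) matches the convention the paper adopts in the remark following \Cref{lem:Mplus}.
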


    This trade-off allows us to derive a useful property of the optimal mechanism: its allocation region does not contain the lower-right triangle! Denote with $U$ the upper-left triangle, $U =\{(x,y) \in [0,1]^2\, | \, x\le y\}$, and with $\Mplus$ the subfamily of $\M$ containing the mechanisms whose allocation region is contained in $U$. We have the following lemma. 
    
    \begin{lemma}
    \label{lem:Mplus}
        For any mechanism $M\in \M$ there exists a mechanism $M^+ \in \Mplus$ such that, for any valuation $v$, the following inequality holds:
        \(
            {\prof(M^+,v)} \ge  {\prof(M,v)}.
        \)
    \end{lemma}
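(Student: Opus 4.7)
The plan is to construct $M^+$ explicitly from $M$ by simply truncating its allocation region to the upper-left triangle. Concretely, if $A$ denotes the allocation region of $M$, I would set $A^+ := A \cap U$ and let $M^+$ be the mechanism with allocation region $A^+$ and the corresponding Myerson payments (justified by \Cref{thm:mechanisms}). The intuition is that any trade that takes place on a bid pair $(\bs,\bb)$ with $\bs > \bb$ must be unprofitable for the broker: by monotonicity of $A$, the Myerson payment to the seller is at least $\bs$ while the payment collected from the buyer is at most $\bb$, so the profit contribution from the lower-right triangle is nonpositive. Removing exactly those bid pairs therefore cannot hurt.

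The first step is to check that $A^+ = A\cap U$ is a valid monotone allocation region so that, by \Cref{thm:mechanisms}, $M^+$ lies in $\M$ (and in fact in $\Mplus$). This is routine: if $(x,y)\in A^+$ and $(x',y')\succeq(x,y)$, i.e.\ $x'\le x$ and $y'\ge y$, then $(x',y')\in A$ by monotonicity of $A$, and $x'\le x\le y\le y'$ shows $(x',y')\in U$, hence $(x',y')\in A^+$.

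For the profit comparison, fix any valuation $v=(\vs,\vb)$ and split into three cases. If $v\notin A$, then $v\notin A^+$ and $\prof(M,v)=\prof(M^+,v)=0$. If $v\in A^+$, then since $A^+\subseteq A$, \Cref{prop:trade-off} directly yields $\prof(M^+,v)\ge\prof(M,v)$ (shrinking the allocation region only raises the Myerson payment from the buyer and only lowers the payment to the seller). The interesting case is $v\in A\setminus A^+$, i.e., $\vs>\vb$: here $\prof(M^+,v)=0$, while by \Cref{def:payments} and monotonicity of $A$ we have $p(\vs,\vb)\ge\vs>\vb\ge q(\vs,\vb)$, so $\prof(M,v)=q(\vs,\vb)-p(\vs,\vb)<0$. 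Combining the three cases gives $\prof(M^+,v)\ge\prof(M,v)$ pointwise, as required.

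The only delicate point is the argument in the lower-right triangle case, and even that reduces to reading off the two Myerson bounds $p(\vs,\vb)\ge\vs$ and $q(\vs,\vb)\le\vb$ directly from \Cref{def:payments}; I do not anticipate a real obstacle.
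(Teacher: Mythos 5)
Your proposal is correct and follows essentially the same route as the paper: construct $M^+$ by intersecting the allocation region with $U$, verify monotonicity, reduce to the case $v \in A\setminus A^+$ via \Cref{prop:trade-off}, and conclude that the profit there is nonpositive. The only cosmetic difference is that you read the bound $q(v)\le \vb \le \vs \le p(v)$ directly from the Myerson payment formulas, whereas the paper cites individual rationality — the same fact in light of \Cref{thm:mechanisms}.
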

    \begin{proof}
        Fix any mechanism $M \in \M$ with allocation region $A$ and payment functions $p,q$, and consider the mechanism $M^+$ defined by the allocation region $A^+=A \cap U$, where $A$ is the allocation region of $M$. The set $A^+$ is monotone and thus is a valid allocation region. Fix any valuation pair $v = (\vs,\vb)$; we want to prove that $\prof(M^+,v) \ge \prof(M,v).$ From \Cref{prop:trade-off}, we know we only need to consider the case in which 
        $v \in A \setminus A^+$. If $v \in A \setminus A^+$, then it does not belong to $U$ and we have the following inequality: 
        \(  \prof(M,v) = (q(v) - p(v)) \le (\vb - \vs) \le 0 = \prof(M^+,v),
        \)
        where the first inequality holds by individual rationality, and the second because $v \notin U$. 
    \end{proof}

    
    \begin{remark}[The Allocation Regions Are Closed]
        For convenience, we always assume that the allocation regions are closed. This is without loss of generality; in fact, for any allocation region in $U$, if we consider its closure, we get a mechanism that extracts at least as much revenue. This assumption allows us to simplify and formalize some arguments; for instance, if $A$ is closed, then the $\max$ and $\min$ operations in the definition of the payments are well defined. 
    \end{remark}

\section{Proof Ideas at a Glance}
\label{sec:challenges}

    In this section, we briefly summarize the ideas behind our $\tilde{O}(\sqrt T)$ result in the stochastic setting, that are formally laid down in \Cref{sec:algorithm}. In this setting, valuations $v^1, v^2, \dots$ are drawn i.i.d. according from a fixed but unknown distribution, and our goal is to approximate the best mechanism $\Mopt$ on the valuation distribution.\footnote{For simplicity, we assume here that the best mechanism exists.}
    {Consider what happens at the generic time step $t+1$, after receiving samples $v^1,\ldots,v^{t}$. If we could determine}
    a mechanism $M^{t+1}$ such that 
    \begin{equation}
     \label{eq:generalization} \mathbb{E}\left[\prof(\Mopt,v)\right] \leq \mathbb{E}\left[\prof(M^{t+1},v) \right] + t^{-\alpha}  
    \end{equation}
    for some exponent $\alpha \in (0,1)$, then summing up over all time step, we would get an overall regret $\sum_{t=1}^T t^{-\alpha} \in {O}(T^{1-\alpha})$. In particular, for $\alpha = \nicefrac 12$, we would obtain the optimal rate of $\sqrt{T}$. 
    {From \Cref{ex:fixed_grid} (and \Cref{thm:uniform-learning}), we know that past samples are not enough to estimate \emph{at the same time} all mechanisms' expected profit. It is, in fact, possible that the empirical profit of some mechanism can be arbitrarily far from its true expected profit. This rules out a Follow-The-Leader approach, which would update $M^t$ to be the empirical optimum after every sample.} 



    
    Since we are ultimately only interested in Equation \ref{eq:generalization}, a natural idea is to constrain the candidate mechanisms to some subset $\mathcal{M}^*\subset \mathcal{M}$ such that: (i) we can show that uniform convergence holds for all $M\in \mathcal{M}^*$ and (ii) the best mechanism in $\mathcal{M}^*$ has nearly the same profit as $\Mopt$.
    Indeed, {given an arbitrary mechanism $M$ and precision parameter $\eps$, we can construct an ``$\eps$-simple'' mechanism $M'$ that well approximates the profit of $M$. The simplicity of $M'$} is captured by the fact that the boundary of its allocation region is determined by at most $O(\nicefrac 1{\eps})$-axis parallel segments, while the approximation property is formalized by the following inequality:
    \begin{equation}
    \label{eq:weakguarantee}
        \mathbb{E}\left[\prof(M,v)\right] \leq \mathbb{E}\left[\prof(M',v)\right] + \varepsilon.
    \end{equation}
    Limiting the mechanisms to the set $\mathcal{M}_{\varepsilon}$ of such $\eps$-simple mechanisms turns out to be a fruitful idea as proving a uniform convergence for these mechanisms is indeed possible. 
    Our algorithm \simplify computes the mechanism $\Memp$ that maximizes the empirical profit and then posts the mechanism in $\Meps$ that approximates it. 
    {While the family of $\eps$-simple mechanisms is inspired by a similar construction in \citet{AggarwalBDF24}, their work only implies a convergence rate of $\alpha = \nicefrac 14$ (for a regret rate of order $T^{\nicefrac 34}$) via a standard VC dimension argument. As a warm-up, we next present a uniform convergence result on $\Meps$ ($\alpha = \nicefrac 13$), then we argue how to further refine our discretization argument to attain the optimal $\alpha = \nicefrac 12.$}
    
    \subsection[Warm Up: A T to the 2/3 Regret Bound.]{Warm Up: A $\Tilde{O}\left(T^{2/3}\right)$ Regret Bound.}
    There are infinite mechanisms that can be described by $O(\nicefrac 1 \eps)$ axis parallel segments, so we do not have a finite bound on the size of $\mathcal{M}_{\varepsilon}$. One might be tempted to impose a fixed set of such lines as candidate boundary segments. Unfortunately, this approach does not work (\Cref{ex:fixed_grid}) without prior knowledge of the distribution.

    However, we are not interested in learning \emph{all} $\eps$-simple mechanisms, but only those relevant to the realized samples. While the distribution itself does not have finite support, the empirical one induced by the samples necessarily does. We then restrict our attention to the $\eps$-grid induced by coordinates of the samples, leading to $\binom{t}{\nicefrac 1{\varepsilon}} \approx \exp(\nicefrac{\log t}\eps)$ many candidate mechanisms $\mathcal{M}_{\varepsilon}$. 
    Using the randomness of the sample to prove concentration for mechanisms induced by the sample introduces dependencies; e.g., a mechanism's profit is highly correlated with the $O(\nicefrac{1}{\eps})$ samples that determine its boundary.
    We address these correlations
    by removing these $O(\nicefrac{1}\eps)$ samples when proving a union bound, which still leaves us with $t-O(\nicefrac{1}\eps)$ fresh samples to prove concentration. With this setup, we can show via standard concentration inequalities that
    \begin{equation}
    \label{eq:unionboundbasic}
        \sup_{M\in \mathcal{M}_{\varepsilon}}\left\vert \frac{1}{t}\sum_{i=1}^{t}\prof(M, v^i) - \mathbb{E} \left[\prof(M, v)\right]\right\vert \lesssim \sqrt{\frac{\log|\mathcal{M}_{\varepsilon}|}t} \lesssim \sqrt{\frac{\log T}{\eps t}}
    \end{equation}

    The optimal mechanism $\Mopt$ is one single mechanism, so we can safely assume its empirical performance concentrates on past data. We have the following chain of inequalities:
    \begin{align}
        \E{\prof(\Mopt,v)}&\lesssim  \frac{1}{t} \sum_{i=1}^{t} \prof(\Memp,v^i) + \sqrt{\frac{\log T}{t}} \tag{$\Mopt$ concentrates and by design of $\Memp$}\\
        &\lesssim  \frac{1}{t} \sum_{i=1}^{t} \prof(M^{t+1},v^i) + \eps + \sqrt{\frac{\log T}{t}} \tag{\Cref{eq:weakguarantee} on empirical distribution}\\
        &\lesssim \E{\prof(M^{t+1},v)} + \eps + \sqrt{\frac{\log T}{t}} 
 + \sqrt{\frac{\log T}{\eps t}}, \label{eq:final_ineq}
    \end{align}
    where the last inequality is due to \Cref{eq:unionboundbasic}.
    The $\tilde O(T^{\nicefrac 23})$ bound follows by setting $\eps \approx t^{-\nicefrac 13}$. 



        \begin{figure}[t!]
            \centering
            \begin{subfigure}{0.3\textwidth}
                \centering
                \includegraphics[width=\linewidth]{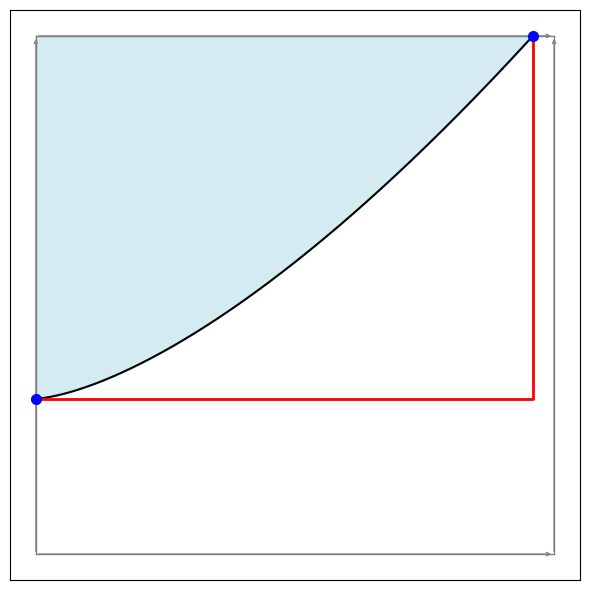}
                \caption{Boundary of $\app{0}{M}$.}
                \label{fig:approx_0}
            \end{subfigure}
            \hfill
            \begin{subfigure}{0.3\textwidth}
                \centering
                \includegraphics[width=\linewidth]{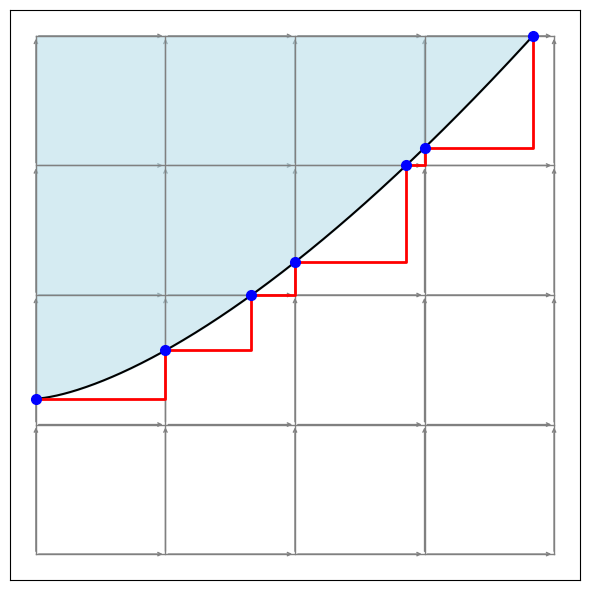}
                \caption{Boundary of $\app{2}{M}$.}
                \label{fig:approx_2}
            \end{subfigure}
            \hfill
            \begin{subfigure}{0.3\textwidth}
                \centering
                \includegraphics[width=\linewidth]{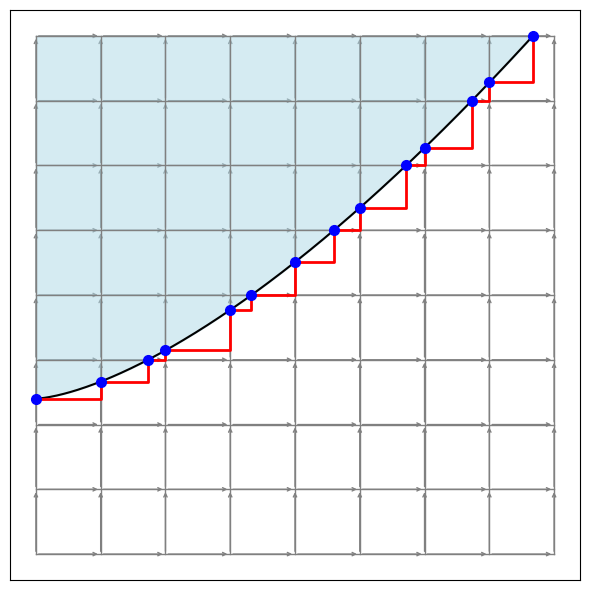}
                \caption{Boundary of $\app{3}{M}$.}
                \label{fig:approx_3}
            \end{subfigure}
            \caption{Visualization of approximating sequence (in red) of mechanism $M$ (blue allocation region).}
            \label{fig:approx_grid}
        \end{figure}
    \subsection[Main Result: sqrt T Regret]{Main Result: $\Tilde{O}(\sqrt{T})$ Regret}

    The $\Tilde{O}(T^{\nicefrac 23})$ provides the important insight that among the $\exp(\nicefrac{(\log t)}{\eps})$ many candidate mechanisms in $\Meps$ induced by the samples, there is an $M^t$ whose expected profit is with an additive error of $t^{-\nicefrac 13}$ of the expected profit of $\Mopt$. 
    To ensure that the empirical profit of $M^t$ is close to the expected profit of $M^t$, the previous approach consisted in simply performing a union bound over all such candidate $\exp(\nicefrac{(\log t)}{\eps})$ mechanisms.
    This approach is very natural, as it is difficult to characterize profitable from unprofitable mechanisms a priori.
    At the same time, the approach is lossy: We actually do not need strong concentration for unprofitable mechanisms in $\mathcal{M}_{\varepsilon}$.

    We now describe how our chaining approach, originally designed to perform very efficient union bounds for uniform convergence, can provide some intuition on how to identify profitable mechanisms on the samples.
    While we cannot guarantee that we estimate the profit of all mechanisms in $\mathcal{M}_{\varepsilon}$ accurately, we show that this is possible for profitable mechanisms, while unprofitable mechanisms are either identified as such, or have their profit underestimated. 
    Suppose we have a set of mechanisms $\mathcal{N}_{h}$, such that for every $M \in \M$, there exists $M'\in \mathcal{N}_{h}$ with 
    \begin{equation}
      \label{eq:desire}
      \left\vert\prof(M,v^i)-\prof(M',v^i)\right\vert \leq 2^{-h}\quad \forall v^1,v^2, \dots, v^{t}
    \end{equation}
    A crucial ingredient in chaining is to decompose the target profit $\prof(M,\cdot)$ via a telescopic argument that uses a sequence of approximating mechanisms as follows
    \begin{equation*}
        \prof(M,v^i) = \prof(\app{0}{M},v^i) + \sum_{h=0}^{\log \nicefrac{1}{\eps}} \prof(\app{h+1}{M},v^i) - \prof(\app{h}{M},v^i),
    \end{equation*}
    where each $\app{h}{M}$ belongs to the corresponding $\cN_h$. This sequence of approximating mechanisms is visualized in \Cref{fig:approx_grid}. Our goal is to learn the expected profit of $M$, which translates to learning separately the terms of the above formula. The zero-th term can be addressed with standard union bound techniques by noting that $\app{0}{M}$ is a fixed-price mechanism (see \Cref{fig:approx_0}), so we only focus on learning the increments.  
    The idea inherent in chaining is that instead of performing a union bound on all target mechanisms in one shot, we perform a union bound on the set of successive increments. Specifically, by combining basic union bounds similar to those used in \Cref{eq:unionboundbasic} and a reduced variance due to \Cref{eq:desire}, we have
    \begin{align*}
    \sup_{\substack{(M_{h\!+\!1},M_{h}) \\ \in\mathcal{N}_{h\!+\!1}\times  \mathcal{N}_{h}}}\! \left\vert \sum_{i=1}^t\frac{1}{t}\big(\prof(M_{h\!+\!1},v^i) \!-\! \prof(M_{h},v^i)\big)\!  - \!\E{\prof(M_{h\!+\!1},v) \!- \! \prof(M_{h},v)}\right\vert \!\lesssim\!\sqrt{\frac{\log|N_{h\!+\!1}| |\cN_{h}|}{t\cdot 2^{-2h}}}.
    \end{align*}
    At this point, we wish to find a careful tradeoff between the magnitude of increments and the size of the families $\cN_{h}$. 
    Optimistically, we would attempt to use $\mathcal{M}_{\varepsilon}$ as $\mathcal{N}_{h}$, for appropriate values of $\varepsilon$, which would bound the right-hand-side of the above inequality with 
    \begin{equation}
        \label{eq:increments}
        \frac{1}{2^h}\sqrt{\frac{\log|\cN_{h+1}| |\cN_{h}|}{t}} \lesssim  \frac{1}{2^h}\sqrt{\frac{2^{h}\log t}{t}}\lesssim \sqrt{\frac{\log t}{t\cdot 2^h}}.
    \end{equation}
    Summing these terms for all increments would then yield
    \begin{align*}
        &\sup_{M \in \M_{\eps}} \left\vert \frac{1}{t}\sum_{i=1}^t \prof(M,v^i) - \mathbb{E}\left[\prof(M,v)\right] \right\vert \lesssim \sum_{h\geq 0} \sqrt{\frac{\log t}{t\cdot 2^h}} \lesssim \sqrt{\frac{\log T}{t}},
    \end{align*}
    achieving the desired regret of $\Tilde{O}\left(\sqrt{T}\right)$ (via an analysis similar to the one in \Cref{eq:final_ineq}).
    
    Unfortunately, there is a major issue with this (ideal) analysis: mechanisms in $\mathcal{M}_{\varepsilon}$ do not satisfy \Cref{eq:desire}, as they only provide a \emph{one-sided} bound!
    This requires several major adjustments to the chaining analysis; overcoming them constitutes the core technical contribution of our work.
    Our first important ingredient is to show that for any mechanism $M$, there exists $M'\in\mathcal{M}_{\varepsilon}$ such that 
    \begin{equation}
        \label{eq:desire2}
        \prof(M,v^i)-\prof(M',v^i) \leq \varepsilon \quad \forall i\in \{1,\ldots t\}.
    \end{equation}
    Notice the absence of absolute values when comparing this guarantee with \Cref{eq:desire}. 
    By itself, \Cref{eq:desire2} does not facilitate the chaining analysis. 
    However, for any telescoping sequence in which the increments are strictly increasing, that is $\prof(\app{h+1}{M},v^i) \geq \prof(\app{h}{M},v^i)$ for all $v^i$, it turns out that \Cref{eq:desire2} is sufficient to prove concentration.

    Let us interpret when this happens and more importantly when this does not happen. If $\prof(\app{h+1}{M},v^i) \geq \prof(\app{h}{M},v^i)$ for all $v^i$, then $\app{h+1}{M}$ extracts slightly more profit than $\app{h}{M}$ from the distribution. In this case, $\app{h+1}{M}$ is clearly preferable to $\app{h}{M}$ on the sample. Fortunately, \Cref{eq:desire2} implies $|\prof(\app{h+1}{M},v^i) - \prof(\app{h}{M},v^i)|\leq 2^{-h}$, which implies with the optimistic chaining analysis implies that the empirical profit of $\app{h+1}{M}$ is concentrated around its expectation.

    Next, let us consider the more challenging case where $\prof(\app{h+1}{M},v^i) \leq  \prof(\app{h}{M},v^i)$. In this case, a possible configuration is $\prof(\app{h+1}{M},v^i)\leq 0$ and $\prof(\app{h}{M},v^i) \geq \nicefrac 12$ for a few samples\footnote{These numbers are possible, but chosen arbitrarily for the sake of the argument. The main point is to emphasize that the absolute value of differences can be a fairly large constant.}. Based on a single sample $v^i$, it is typically not possible to determine whether $\app{h+1}{M}$ is more profitable than $\app{h}{M}$ as it can exclude the potential profit of $v^i$ for more profit on other candidate samples.
    Unfortunately, in this case, we also have
    \begin{equation}
        \label{eq:badcase}
        \left\vert \prof(\app{h+1}{M},v^i) -\prof(\app{h}{M},v^i)\right\vert \geq \tfrac 12.
    \end{equation}
    This is fatal for a chaining analysis: If this happens too many times, then the increments (i.e. the right-hand-side of \Cref{eq:increments}) can only be bounded by $\sqrt{\nicefrac{2^{h}}{t}\cdot \log t}$, and summing up over all increments yields no concentration whatsoever.

    Our main insight is that if \Cref{eq:badcase} happens for too many samples $v^i$, then $\app{h+1}{M}$ \emph{has less profit than $M_h$ and cannot be the empirical optimum!} Moreover, the same also holds for any mechanism that refines $\app{h+1}{M}$ via \Cref{eq:weakguarantee}, allowing us to remove all of these mechanisms for the search space of potentially optimal mechanisms (i.e., we are breaking the chain).
    Thus, we can show that a (slightly weaker) form of \Cref{eq:increments} holds for mechanisms with sufficiently large profit, which still allows us to obtain the desired $\Tilde{O}(\sqrt{T})$ regret bound.

    In a nutshell, we have the following dichotomy working in our favor (formalized in \Cref{prop:local}): 
    \begin{itemize}
        \item Either all but a very tiny fraction of the increments are positive. In this case, we can continue the chaining analysis.
        \item Or sufficiently many increments have a large negative value. In this case, the mechanism is suboptimal, and we no longer have to prove concentration.
    \end{itemize}
    
    While the remaining analysis still has substantial technical challenges, the on-the-fly construction and truncation of chains ultimately leading towards optimal or nearly optimal mechanisms is our main novelty.  We therefore believe that our work can be a starting point for proving low regret bounds for other problems lacking strong characterizations of optima. Indeed, as a consequence of our work, we also manage to extend our analysis to the Joint-Ads problem.

\section{A \texorpdfstring{$\tilde O(\sqrt{T})$}{} Learning Algorithm for the Stochastic Setting}
\label{sec:algorithm}

    In this section, we present the learning algorithm \simplify and prove that it enjoys a $\tilde O(\sqrt{T})$ regret rate in the stochastic setting.
    
    \subsection[Simplify-the-Best-Mechanism]{\simplify}
    \label{subsec:simplify}
        For each time step $t$, the learning algorithm \simplify behaves in two phases. First, it computes the best mechanism $M^\star_t$ on the past data; this operation can be reduced efficiently to computing the shortest path on a suitably defined directed graph. Second, it finds a simplified version of the optimal mechanism $M^\star_t$ by adapting its allocation region to a uniform grid of step-size $\eps_t \approx t^{-\nicefrac 12}$. We refer to the pseudocode for further details, while we present separately in \Cref{subsec:maximizer} and \Cref{subsubsec:simplify} the implementation details on the two phases of the algorithm. 

        \begin{algorithm}[ht]
        \begin{algorithmic}[ht]
            \State \textbf{Environment}: Valuations $\{(v_1^t,v_2^t)\}$ are drawn i.i.d.    
            \State  Propose any mechanism $M^1$ and observe $(v^1_1,v_2^1)$ 
            \For{time $t=2,\ldots,T$}
                \State Set precision parameter $\eps_t = {200 \sqrt{\nicefrac{\log^3 T}{t-1}}}$
                \State Let $M^\star_t$ be the profit-maximizing mechanism over valuations $(v^1_1,v_2^1), \dots, (v^{t-1}_1,v_2^{t-1})$
                \State Let $M^t$ be the simplified mechanism with precision $\eps_t$ associated with $M^\star_t$
                \State Propose mechanism $M^t$ and observe $(v^t_1,v_2^t)$
            \EndFor
        \end{algorithmic}
        \caption*{\textbf{\simplify}}
        \label{alg:adaptive}
        \end{algorithm}

    \subsubsection{Finding the Best Mechanism}
    \label{subsec:maximizer}

        The task of finding the best mechanism is inherently offline: given a set $S$ of $n$ points in the $[0,1]^2$ square, how can we compute the empirical best mechanism with respect to $S$? Surprisingly, this problem can be reduced to finding the shortest path on a suitable directed graph. We have the following definition: 
        \begin{definition}[Point-Based Grid]
        \label{def:grid}
            Given a set $S = \{(v_s^i,v_b^i)\}$ of $n$ points, we introduce the projections over the $x$ and $y$ axis. $S_x$ (respectively $S_y$) contains in increasing order the values $0$, $1$, and $v_s^i$ (respectively $v_b^i$) for $i=1, \dots, n.$ The grid $G_S$ associated to $S$ is a directed graph defined as follows: 
            \begin{itemize}
                \item The nodes of $G_S$ are represented by the points in the cartesian product $S_x \times S_y.$ 
                \item Horizontal and vertical segments represent the edges of $G_S$: 
                \begin{itemize}
                    \item $(x_i,y) \to (x_{i+1},y)$ for any $y \in S_y$ and $x_i\in S_x\setminus \{1\}$
                    \item $(x,y_j) \to (x,y_{j+1})$ for any $x \in S_x$ and $y_j\in S_y\setminus \{1\}$
                \end{itemize}
            \end{itemize}
        \end{definition}

        Beyond the formal definition, the construction of the directed graph $G_S$ is pretty simple: it is just an axis-aligned orthogonal grid that contains all the points in $S$, the origin and $(1,1)$, and joins them via horizontal (pointing right) and vertical (pointing up) edges. We refer to \Cref{fig:point_based_grid}.

        \begin{remark}
            The grid $G_S$ --- as well as the other grids used in the rest of the paper --- have a double nature: they are directed graphs, but they live embedded in the $[0,1]^2$ square. Therefore, it is natural to treat a node in $G_S$ as a point in the square and an edge as a segment. So, we refer to the geometric properties of nodes and edges via this natural embedding. 
        \end{remark}

        We introduce grids to discretize the space of allocation regions. In particular, it is natural to associate mechanisms to paths on grids, as formalized by the following definition. 
        
        \begin{definition}[Complete Path and Associated Mechanism]
        \label{def:complete_path}
            A path $\pi$ in a point-based grid $G$ is called complete if it starts in $(0,x_1)$ and terminates in $(x_2,1),$ for some $x_1,x_2 \in [0,1]$. Any complete path $\pi$ univocally identifies a mechanism $M_{\pi}$ whose allocation region is defined as follows: 
            \[
                A_{\pi} = \{v \in [0,1]^2 \text{ such that } \exists v' \in \pi \text{ with } v\succeq v'\}\footnote{\text{Recall, $(v_1,v_2)\succeq (v_1',v_2')$ if $v_1 \le v_1'$ and $v_2 \ge v_2'.$}}.
            \]
        \end{definition}

        Complete paths on a point based grid $G_S$ have a simple structure: they start from a point on the left side of the $[0,1]^2$ square, then move either right or up in $G_S$, and terminate in some node on the upper side of the square. From a combinatorial perspective, it is fairly easy to \emph{count} how many complete paths are on a point based grid. 

        \begin{lemma}[Number of Complete Paths in $G_S$]
            \label{lem:n_of_paths}
            Let $G_S$ be a point induced grid with $|S| = n\ge 5$, then there are at most $(2e)^{2n}$ distinct complete paths in $G_S$.
        \end{lemma}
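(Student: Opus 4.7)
The plan is to count complete paths directly and then bound the resulting binomial coefficient. Write $n_x = |S_x|$ and $n_y = |S_y|$, so that $n_x, n_y \leq n+2$ by Definition~\ref{def:grid}, and order $S_x = \{0 = x_1 < x_2 < \cdots < x_{n_x} = 1\}$ and $S_y = \{0 = y_1 < y_2 < \cdots < y_{n_y} = 1\}$. A complete path is then a monotone lattice path in the $n_x \times n_y$ orthogonal grid that starts at some $(0, y_j)$ on the left edge and ends at some $(x_i, 1)$ on the top edge, using only rightward and upward unit-step edges.

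The first step is to count paths with fixed start and end: from $(0,y_j)$ to $(x_i,1)$ the path must take exactly $i-1$ right steps and $n_y - j$ up steps, so the number of such paths is $\binom{(i-1)+(n_y-j)}{i-1}$. Summing over all admissible pairs $(i,j)$ and changing variables to $r = i-1$, $u = n_y - j$ yields
\begin{equation*}
    N \;=\; \sum_{r=0}^{n_x-1} \sum_{u=0}^{n_y-1} \binom{r+u}{r}.
\end{equation*}
Applying the hockey-stick identity to the inner sum gives $\sum_{u=0}^{n_y-1}\binom{r+u}{r} = \binom{r+n_y}{n_y-1}$, and a second application to the outer sum collapses everything to $N = \binom{n_x + n_y}{n_y} - 1$.

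To finish, plug in the crude estimates $\binom{n_x+n_y}{n_y} \leq 2^{n_x+n_y} \leq 2^{2n+4} = 16 \cdot 4^n$ and note that $(2e)^{2n} = 4^n \cdot e^{2n}$, so the target inequality $16 \cdot 4^n \leq (2e)^{2n}$ reduces to $e^{2n} \geq 16$, which holds for every $n \geq 2$ and in particular for $n \geq 5$. The argument is essentially a routine counting exercise, and no substantial obstacle is expected; the only place to tread carefully is the bookkeeping of starts on the left edge versus ends on the top edge, so that the two applications of hockey-stick align correctly and the $-1$ constant is honestly accounted for.
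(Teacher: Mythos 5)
Your proof is correct and takes a genuinely different route from the paper's. You compute the path count \emph{exactly}: grouping paths by their start $(0,y_j)$ and end $(x_i,1)$, then applying the hockey-stick identity twice, you get the clean closed form $N=\binom{n_x+n_y}{n_y}-1$, which together with $n_x,n_y\le n+2$ gives $N\le 2^{2n+4}=16\cdot 4^n\le (2e)^{2n}$. The paper instead uses a cruder decomposition: it bounds the number of (start, end) pairs by $2n^2$, then bounds the number of monotone paths between any fixed pair by $\binom{2(n+1)}{n+1}$ (the count for the longest possible such path, from $(0,0)$ to $(1,1)$), and multiplies: $2n^2\cdot\binom{2(n+1)}{n+1}\le 4en^2(2e)^n\le (2e)^{2n}$ for $n\ge 5$. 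Your approach is tighter -- it effectively shows the true count is $O(4^n)$ rather than $O((4e^2)^n)$ -- and as a bonus never actually needs the hypothesis $n\ge 5$ (you only need $e^{2n}\ge 16$, i.e.\ $n\ge 2$), whereas the paper's argument invokes $n\ge 5$ both for $(n+2)^2\le 2n^2$ and for $4en^2\le (2e)^n$. The trade-off is that your route relies on the two hockey-stick applications (the second with a change of index, and the honest $-1$), which is slightly more bookkeeping, whereas the paper's overcounting argument is shorter to state if one is content with the looser constant. Both are valid; yours is sharper.
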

        \begin{proof}
            Consider the nodes of $G_S$. They form an orthogonal grid in the $[0,1]^2$ square composed by $(n+2)^2 \le 2n^2$ nodes, where in the previous inequality we used that $n\ge 5$. There are at most $2n^2$ ways of choosing the starting and the terminal node (at most $n+2$ nodes on the left side and $n+2$ nodes of the upper side of the square) and, once these two points are fixed, a complete path is characterized by at most $\binom{2(n+1)}{n+1}$ ways in which it can arise, since its length is at most $2(n+1)$ (from $(0, 0)$ to $(1, 1)$) and a $2(n+1)$-long complete path must go ``up'' $n+1$ times and ``right'' $n+1$ times. All in all, the number of complete paths is
            at most:
            \[
            2n^2 \cdot \binom{2(n+1)}{n+1} \le 4e n^2 (2e)^n \leq (2e)^{2n},
            \]
            where we use the inequalities $\binom{n}{k} \le (e\cdot \nicefrac{n}{k})^k$ and $2n^2 \le (2e)^n$ for $n\ge 5$.
        \end{proof}
        
        As a further step, we introduce the notion of edge weights. 
        \begin{definition}[Edge-Weights]
        \label{def:edge_weights}
            Given any point-induced grid, we define the weight $w_e$ and the influence region $A_e$ of the generic edge $e = (u;v)$\footnote{To avoid confusion, we denote with $(a,b)$ a point with coordinates $a$ and $b$, while $(a;b)$ represents the edge that goes from node $a$ to node $b.$} as follows:
            \[
            \begin{cases}
                A_e = [0,x] \times [u_2,v_2) \text{ and } w_e =-x \text{ if $u_1 = v_1 = x$}  \\
                A_e = [u_1,v_1) \times [y,1]  \text{ and } w_e  =y \text{ if $u_2 = v_2 = y$} 
            \end{cases}
            \]
            where we adopt the convention that if $v_2 = 1$ in the first case or $v_1 = 1$ in the second case, then the corresponding interval extreme is closed. 
        \end{definition}

        \begin{figure}[t!]
            \centering
            \includegraphics[width=0.3\linewidth]{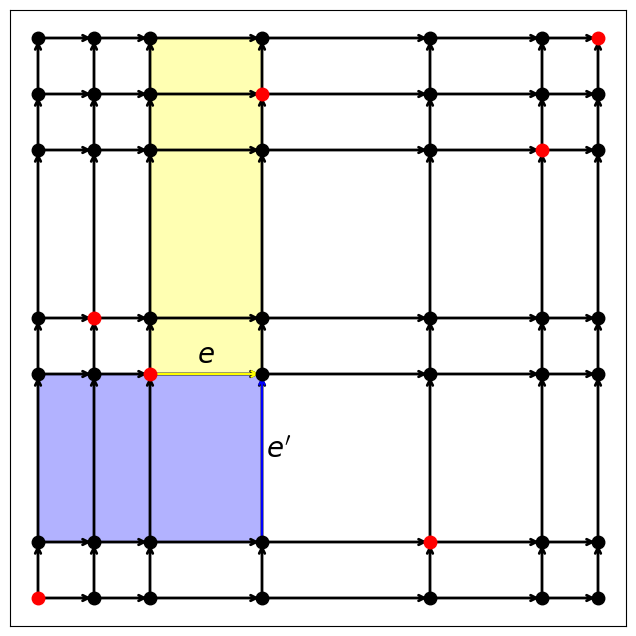}
            \caption{An instance of a point-based grid, as defined in \Cref{def:grid}. The points in $S$ are highlighted in red. The area filled with yellow is the influence region $A_e$ for the yellow horizontal edge $e$; the one filled with blue is the influence region $A_{e'}$ for the blue vertical edge $e'$.}
            \label{fig:point_based_grid}
        \end{figure}
        The edge weights are crucial in establishing a connection between mechanisms and paths.  

        \begin{lemma}
        \label{lem:decomposition}
            For any set $S$ of points and complete path $\pi$ on $G_S$, the following equality holds true:
            \[                
                \sum_{v \in S}\prof(M_{\pi},v) = \sum_{e \in \pi} w_e \cdot |A_e \cap S|.
            \]
        \end{lemma}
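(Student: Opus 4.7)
My plan is to reduce the global identity to a pointwise one. Exchanging the order of summation on the right-hand side gives
\[
\sum_{e \in \pi} w_e \, |A_e \cap S| \;=\; \sum_{v \in S}\sum_{e \in \pi} w_e \, \mathbf{1}\{v \in A_e\},
\]
so it is enough to prove, for every fixed $v \in S$, the pointwise identity $\prof(M_\pi,v) = \sum_{e \in \pi} w_e \, \mathbf{1}\{v \in A_e\}$. From here I would split on whether $v$ lies in the allocation region $A_\pi$.

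If $v \notin A_\pi$ both sides vanish: the profit is zero by definition of $M_\pi$, and $v$ cannot belong to any $A_e$ because membership in the influence region of, say, a horizontal edge with left endpoint $(u_1,y)$ would force $v_s \le u_1$ and $v_b \ge y$, so $v \succeq (u_1,y) \in \pi$ and $v \in A_\pi$, a contradiction (the vertical case being symmetric). The substantive case is $v \in A_\pi$, where I would exploit the monotone staircase structure of $\pi$: since the half-open $x$-ranges of the horizontal edges and the half-open $y$-ranges of the vertical edges each form disjoint covers along $\pi$, there is exactly one horizontal edge $e^h \in \pi$ and exactly one vertical edge $e^v \in \pi$ with $v \in A_e$, and the per-valuation sum collapses to $w_{e^h} + w_{e^v}$.

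The crux is then to identify these two weights with the Myerson payments. Geometrically, because the path traces the southeast boundary of $A_\pi$, the horizontal edge of $\pi$ crossing the vertical line $x=v_s$ sits at height $q(v) = \min\{y : (v_s,y) \in A_\pi\}$, and the vertical edge crossing the horizontal line $y=v_b$ sits at $x$-coordinate $p(v) = \max\{x : (x,v_b) \in A_\pi\}$; hence $w_{e^h} = q(v)$ and $w_{e^v} = -p(v)$, giving $w_{e^h} + w_{e^v} = q(v) - p(v) = \prof(M_\pi, v)$ exactly as required. I expect the main technical obstacle to be the careful bookkeeping needed to match the half-open conventions in the definition of $A_e$ (in particular the closure rule when $v_1 = 1$ or $v_2 = 1$) with the $\max/\min$ in the Myerson payments at grid nodes of $\pi$ where two edges meet, so that every $v \in A_\pi \cap S$ is unambiguously assigned to the correct pair $(e^h, e^v)$; conceptually the argument is a direct translation of the Myerson payment picture in \Cref{fig:myerson_prices}.
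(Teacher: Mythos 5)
Your proof follows essentially the same route as the paper's: both reduce the identity to a pointwise statement, showing that each $v \in A_\pi \cap S$ lies in the influence region of exactly one horizontal and one vertical edge of $\pi$, whose weights equal the Myerson payments $q(v)$ and $-p(v)$, while points outside $A_\pi$ contribute to neither side. One small slip in your zero case: for a horizontal edge $e$ from $(u_1,y)$ to $(v_1,y)$, membership $v \in A_e = [u_1,v_1)\times[y,1]$ forces $v_s \ge u_1$ (not $v_s \le u_1$), so $v$ need not dominate the left endpoint; the contradiction is instead obtained by noting that $v$ dominates the point $(v_s,y)$, which lies on $e \subseteq \pi$, hence $v \in A_\pi$ (your argument is fine as stated for vertical edges). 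With that correction the argument is complete and matches the paper's.
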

        \begin{proof}
            Consider any complete path $\pi$ and denote with $E_v$ its vertical edges and with $E_h$ its horizontal ones. By definition of influence regions $A_e$, it is immediate to verify that both $\{A_e\}_{e \in E_h}$ and $\{A_e\}_{e \in E_v}$ partition the allocation region $A_{\pi}$. Consider then any point $v \in S$. If $v\notin A_{\pi}$, then $M$ does not extract any profit from it; similarly, $v$ does not belong to any $A_e$ for $e$ edge of $\pi$, so that $v$ does not contribute to the right-hand side of the statement.  
            
            Consider now the other case: if $v$ belongs to $A_{\pi}$, then there exists exactly two edges --- a vertical edge $e$ and a vertical one $e'$ --- such that $v \in A_{e} \cap A_{e'}$. By definition of payments, the profit induced by $v$ according to $M_{\pi}$ is exactly $w_e + w_{e'}$. Summing up over all the points in $S$ yields the statement. 
        \end{proof}

        This structural lemma allows us to argue that the optimal mechanism on a fixed and finite set $S$ of points lies in the augmented grid $G_S$.

        \begin{theorem}
        \label{thm:complete_optimality}
            For any finite set of points $S$, the following equality holds true:
            \[
                \max_{M \in \M} \sum_{v \in S} \prof(M,v) = \sum_{v \in S} \prof(M_{\pi},v),
            \]
            for some complete path $\pi$ in $G_S.$
        \end{theorem}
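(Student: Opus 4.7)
The plan is to take an optimal mechanism $M^\star \in \M$ and show that it can be ``snapped'' to the grid $G_S$ without loss of empirical profit. First, I would note that $\max_{M \in \M}\sum_{v\in S}\prof(M,v)$ is actually attained, since the empirical profit of any $M$ depends only on the finite data $\{(p(v), q(v), \ v\in A)\}_{v\in S}$, so there are only finitely many equivalence classes to consider. I would then apply \Cref{lem:Mplus} to assume without loss of generality that $A^\star \subseteq U$.

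Next, I would define the ``gridded'' region as the upward closure of $A^\star \cap S$ under the partial order $\preceq$:
\[
\tilde A \;=\; \bigcup_{v \in A^\star \cap S}\, \{v' \in [0,1]^2 : v' \succeq v\}.
\]
By construction $\tilde A$ is monotone, and since $A^\star$ is monotone and contains $A^\star\cap S$ it must also contain this upward closure, giving $\tilde A \subseteq A^\star$; in particular $\tilde A \cap S = A^\star \cap S$ (the reverse inclusion follows from $v \succeq v$). The key observation is that shrinking from $A^\star$ to $\tilde A$ weakly improves the Myerson payments pointwise: for each $v \in A^\star \cap S$ the $\max/\min$ in \Cref{def:payments} is taken over a smaller set, so
\[
\tilde p(v) \le p^\star(v) \qquad \text{and} \qquad \tilde q(v) \ge q^\star(v).
\]
Summing $\tilde q(v) - \tilde p(v) \ge q^\star(v)-p^\star(v)$ over the common allocated set yields $\sum_{v\in S}\prof(M_{\tilde A},v) \ge \sum_{v\in S}\prof(M^\star,v)$, so $M_{\tilde A}$ is also optimal.

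The final step is to exhibit $\tilde A$ as $A_\pi$ for a complete path $\pi$ in $G_S$. Let $\{(a_1,b_1),\dots,(a_k,b_k)\}$ be the $\preceq$-maximal elements of $A^\star \cap S$; ordering them so that $a_1<\cdots<a_k$, maximality forces $b_1<\cdots<b_k$ (otherwise one would dominate another). Since $\tilde A = \bigcup_i [0,a_i]\times[b_i,1]$, tracing its lower-right boundary produces the staircase
\[
(0,b_1) \to (a_1,b_1) \to (a_1,b_2) \to (a_2,b_2) \to \cdots \to (a_k,b_k) \to (a_k,1),
\]
whose corners all lie in $S_x \times S_y$ and whose horizontal/vertical sides decompose into consecutive right/up atomic edges of $G_S$; this gives a complete path $\pi$ from $(0,b_1)$ to $(a_k,1)$, and a direct check comparing both families of upper-left quadrants confirms $A_\pi = \tilde A$. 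The degenerate case $A^\star\cap S=\emptyset$ is handled by the trivial path $(0,1)\to(1,1)$, whose induced allocation region extracts zero profit on $S$. I expect the main obstacle to be the geometric-to-combinatorial translation in this last step: ensuring that the staircase really corresponds to a complete path in $G_S$ and that $A_\pi$ matches $\tilde A$ on the nose, while correctly handling ties, boundary effects, and the conventions about closedness of the allocation region.
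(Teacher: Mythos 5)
Your proposal is correct and takes essentially the same route as the paper: replace the optimal region by the upward closure of $A^\star\cap S$ (the paper's $A_S$), note that shrinking the region weakly improves the Myerson profit at every still-allocated sample (the paper's \Cref{prop:trade-off}), and read off the staircase boundary of the resulting union of rectangles as a complete path in $G_S$. Two cosmetic caveats: under the paper's convention ($y$ dominates $x$ when $y$ is north-west of $x$) the non-redundant generators of the staircase are the $\preceq$-\emph{minimal}, not maximal, elements of $A^\star\cap S$, and your ``finitely many equivalence classes'' justification of attainment is shaky as stated (payments are real-valued), but attainment follows anyway because your snapping argument dominates \emph{every} $M\in\M$ by one of the finitely many path mechanisms of $G_S$ --- which is exactly how the paper argues, applying the construction to an arbitrary $M$ rather than to an optimizer.
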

        \begin{proof}
            For any set $S$ of points and mechanism $M \in \M$, we prove that there exists a complete path $\pi$ in $G_S$ such that the profit of $M_{\pi}$ on $S$ is an upper bound to the profit of $M$ on the same points. 

            Fix then any set of points $S$ and mechanism $M$ with allocation region $A$; we introduce the following set $A_S$:
            \[
                A_S =  \{v \in [0,1]^2|\, \exists\, v' \in A \cap S \text{ such that } v' \preceq v\}.
            \]
            The set $A_S$ is a monotone allocation region: consider in fact any $v \in A_S$ and any $\hat v \in [0,1]^2$ such that $v \preceq \hat v$. By definition of $A_S$ there exists $v' \in A \cap S$ such that $v' \preceq v$, which implies that $v' \preceq \hat v$ (by the transitive property); therefore $\hat v \in A_S$ as well.

            For what follows, we need to prove that there exists a complete path $\pi$ in $G_S$ such that $A_S = A_{\pi}$. For any $v=(v_1,v_2) \in [0,1]^2$, we define the monotone rectangle $R_v$ as $R_v = [0,v_1] \times [v_2,1]$; these rectangles are useful to characterize $A_S$. More precisely, the set $A_S$ is the union for $v \in S \cap A$ of $R_v$ (see \Cref{fig:complexity_plots} and the constructions in Appendix $\ref{app:complexity}$). This means that $A_S$ is a monotone region whose boundary is composed by the union of axis-oriented orthogonal segments. These segments - with rightward and upward orientations for horizontal and vertical edges respectively - all belong to the grid $G_S$, being included in the sides of the rectangles $R_v$ for some $v \in S$. Therefore, there exists a complete path $\pi$ so that $A_S$ is bounded by the $(0,0)$-$(0,1)$ segment, the $(0,1)$-$(1,1)$ segment, and by $\pi.$
            
            Finally, $A_\pi = A_S \subseteq A$, therefore $\prof(M,v) \le \prof(M_{\pi},v)$ for all $v \in A \cap S$ (by \Cref{prop:trade-off}). On the other hand, both $M$ and $M_{\pi}$ do not extract any profit from the points in $S \setminus A$, thus concluding the proof. 
        \end{proof}

        The above theorem implies that --- to find the profit maximizing mechanism on the observed data --- it is enough to solve a shortest path problem. 

        \begin{corollary}[Finding the Optimal Mechanism]
        \label{cor:empirical_maximizer}
            Given a set $S$ of points, it is possible to find efficiently a mechanism $M^{\star} \in \M^+$ supported on $G_S$ such that 
            \[
                \sum_{v \in S} \prof(M^{\star},v) = \max_{M \in \M} \sum_{v \in S} \prof(M,v). 
            \]
        \end{corollary}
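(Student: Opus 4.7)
The plan is to combine \Cref{thm:complete_optimality} with \Cref{lem:decomposition} to reduce the problem to a standard longest-path computation on a DAG. By \Cref{thm:complete_optimality}, there exists a complete path $\pi$ in $G_S$ whose associated mechanism $M_\pi$ attains $\max_{M \in \M} \sum_{v \in S} \prof(M,v)$, so it suffices to search efficiently among complete paths in $G_S$. \Cref{lem:decomposition} then tells us that the objective $\sum_{v \in S} \prof(M_\pi, v)$ decomposes additively along the edges of $\pi$ as $\sum_{e \in \pi} w_e \cdot |A_e \cap S|$. This additive edge structure is exactly what we need to cast the search for the empirical optimum as a longest-path problem.

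Concretely, I would construct a weighted DAG $\widetilde{G}_S$ as follows. Start from $G_S$ (which is already a DAG, since all its edges point either rightward or upward), and assign to each edge $e$ the weight $\omega_e := w_e \cdot |A_e \cap S|$. To handle complete paths, which may start at any point of the form $(0, x_1)$ on the left side and terminate at any point $(x_2, 1)$ on the top side, I would introduce a super-source $\sigma$ with zero-weight edges to every node of the form $(0, y)$ for $y \in S_y$, and a super-sink $\tau$ with zero-weight edges from every node of the form $(x, 1)$ for $x \in S_x$. A complete path $\pi$ in $G_S$ corresponds bijectively to a $\sigma$-$\tau$ path in $\widetilde{G}_S$, and the total weight of that $\sigma$-$\tau$ path equals $\sum_{e \in \pi} \omega_e = \sum_{v \in S} \prof(M_\pi, v)$ by \Cref{lem:decomposition}. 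Hence, a maximum-weight $\sigma$-$\tau$ path yields an empirically optimal mechanism.

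For the running time, the grid has $|S_x|\cdot|S_y| = O(n^2)$ nodes and $O(n^2)$ edges. All edge weights can be pre-computed in $O(n^2)$ time via a two-dimensional prefix-sum table indexed by the coordinates of points in $S$, which provides $|A_e \cap S|$ in $O(1)$ per query. Since $\widetilde{G}_S$ is a DAG, the maximum-weight $\sigma$-$\tau$ path can be computed in linear time in the size of the graph by processing nodes in topological order. This yields an overall polynomial time algorithm.

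The only remaining point is the requirement $M^\star \in \Mplus$, i.e., the allocation region must be contained in the upper triangle $U$. The cleanest way to enforce this is to apply \Cref{lem:Mplus} as a post-processing step: from the optimal $M_{\pi^\star}$ obtained above, form the mechanism with allocation region $A_{\pi^\star} \cap U$, which belongs to $\Mplus$ and has pointwise at least as much profit. Alternatively, one can restrict $\widetilde{G}_S$ a priori to nodes and edges lying in $U$ and reroute paths along the diagonal boundary; since \Cref{lem:Mplus} guarantees that some empirical optimizer already lies in $\Mplus$, this restriction loses nothing. I expect this last step to be the only subtle part of the argument, as one must verify that either restriction preserves both the empirical optimality and the fact that the resulting allocation is realized by a complete path in $G_S$; both checks, however, follow directly from \Cref{prop:trade-off} and the explicit geometric construction used in the proof of \Cref{thm:complete_optimality}.
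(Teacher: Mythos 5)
Your construction is essentially the paper's: you cast the search as a longest-path problem on the DAG obtained from $G_S$ with edge weights $w_e \cdot |A_e \cap S|$ and super-source/sink, which is the same algorithm up to sign (the paper negates the weights and calls it a shortest path). One caveat on your handling of the $\Mplus$ requirement: the post-processing alternative (intersecting $A_{\pi^\star}$ with $U$ via \Cref{lem:Mplus}) does not in general yield a mechanism \emph{supported on $G_S$}, since the clipped boundary will contain segments of the diagonal $\{x=y\}$, which are not edges of $G_S$; this property is actually used later (e.g., in \Cref{lem:few_corners}), so that route doesn't deliver the corollary as stated. Your second alternative---deleting all nodes of $G_S$ outside $U$ before the path computation---is the one the paper uses, and it is sound: the proof of \Cref{thm:complete_optimality}, applied to the best mechanism in $\Mplus$ (which exists by \Cref{lem:Mplus}), produces a complete path whose allocation region lies entirely in $U$, so the restricted graph still contains an optimal complete path.
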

        \begin{proof}
            The algorithm is simple: construct the point-induced graph $G_S$ and find a shortest complete path $\hat \pi$ in $G_S$, where the edge weights are defined as follows: 
            \begin{equation}
                \label{eq:new_weights}
                w^S_e = - w_e \cdot |A_e \cap S|.
            \end{equation}
            Note, the graph $G_S$ is acyclic, so it is possible to find efficiently the shortest path using the Bellman-Ford algorithm. Moreover, to restrict the optimization to complete paths, it is enough to add a dummy source node $s$ (with outgoing zero-weight edges pointing to all the nodes on the $(0,0)$-$(0,1)$ segment) and a dummy target node $t$ (with incoming zero-weight edges from all the nodes on the $(0,1)$-$(1,1)$ segment). To force the path not to contain any point outside $U$, the upper-left triangle, then it is enough to remove all nodes in $G_S$ that are not in $U$. As already argued in \Cref{lem:Mplus}, this is without loss of generality. 

            Let  $\pi^\star$ be the complete path satisfying the equality in \Cref{thm:complete_optimality}.
            We have the following chain of inequalities: 
            \begin{align*}
                \sum_{v \in S} \prof(M_{\hat \pi},v) &= \sum_{e \in \hat \pi} w_e|A_e \cap S| \tag{By \Cref{lem:decomposition}}     \\
                &=- \sum_{e \in \hat \pi} w_e^S \tag{By weight definition in \Cref{eq:new_weights}}\\
                &\ge -  \sum_{e \in \pi^\star} w_e^S \tag{$\hat \pi$ shortest path}\\
                &= \sum_{e \in \pi^\star} w_e|A_e \cap S| \tag{By weight definition in \Cref{eq:new_weights}}\\
                &=\sum_{v \in S} \prof(M_{\pi^\star},v)  \tag{By \Cref{lem:decomposition}}\\
                &= \max_{M \in \M}\sum_{v \in S} \prof(M_{\pi^\star},v),
            \end{align*}
            where the last equality holds by the choice of $\pi^\star$. This concludes the proof.
            
        \end{proof}

    \subsubsection{The Simplification Procedure}
    \label{subsubsec:simplify}

        We introduce a procedure that allows us to associate to each mechanism $M$ a ``simplified'' version that is adherent to a certain uniform grid.
        \begin{definition}[Boundary]
        \label{def:boundary}
            For any mechanism $M$, we denote with $\partial M$ its boundary:
            \[
                \partial M = \left\{v \in A_M |\, \forall \eps > 0, \left([0,1]^2\setminus A_M\right) \cap B_{\eps}(v) \neq \emptyset\right\}    
            \]
        \end{definition}

        In words, each mechanism is characterized by the curve $\partial M$ that starts from the $(0,0)$-$(0,1)$ vertical segment and terminates in the $(0,1)$-$(1,1)$ horizontal one. In particular, for point-based grids, the complete path that induces a mechanism is its boundary. From a different perspective, $\partial M$ is the (topological) boundary $\partial A$ of the corresponding allocation region $A$, when considering the subspace topology on $[0, 1]^2$.
        
        \begin{observation}
            The boundary $\partial M$ of a mechanism $M$ characterizes the allocation $A$ as any point in $A$ is dominated by at least a point on $\partial M$.
        \end{observation}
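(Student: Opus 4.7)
The plan is to produce, for each $v \in A$, an explicit boundary witness $v' \in \partial M$ that lies in the appropriate $\preceq$-relation with $v$. Geometrically, I would ``slide'' $v$ towards the south-east corner $(1,0)$ of the unit square along a straight line segment, and take as witness the last point on this segment that still lies in $A$. This matches the way $A$ is recovered from a complete path in \Cref{def:complete_path} (dominance closure of $\partial M$) and makes the characterization concrete.

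First, I would reduce to the non-degenerate regime $A \subsetneq [0,1]^2$. If $A = [0,1]^2$, then $\partial M = \emptyset$ by the very definition of $\partial M$, so the statement is treated vacuously; this case is in any event ruled out by \Cref{lem:Mplus}, which lets us assume $A \subseteq U$. Under this reduction, monotonicity of $A$ forces $(1,0) \notin A$: indeed, if $(1,0)$ lay in $A$, then every $y \in [0,1]^2$ (which trivially dominates $(1,0)$ in $\preceq$) would belong to $A$, contradicting $A \subsetneq [0,1]^2$.

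Next, I would parametrize the segment $\gamma(t) = (1-t)\,v + t\,(1,0)$ for $t \in [0,1]$ and compute $\gamma(t)_1 = (1-t) v_1 + t \ge v_1$ and $\gamma(t)_2 = (1-t) v_2 \le v_2$, so that every $\gamma(t)$ stands in the requisite coordinate-wise relation with $v$. Since $\gamma(0) = v \in A$ and $\gamma(1) = (1,0) \notin A$, the supremum $t^\star = \sup\{t \in [0,1] : \gamma(t) \in A\}$ lies in $[0,1)$. Closedness of $A$ (invoked via the Remark following \Cref{lem:Mplus}) gives $\gamma(t^\star) \in A$, while by the definition of the supremum there exist $t_n \downarrow t^\star$ with $\gamma(t_n) \notin A$; hence every $\varepsilon$-ball around $\gamma(t^\star)$ meets $[0,1]^2 \setminus A$, so $\gamma(t^\star) \in \partial M$. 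Setting $v' := \gamma(t^\star)$ produces the desired boundary point in the required dominance relation with $v$.

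The argument is essentially an intermediate-value-theorem walk, so the only real subtlety is the degenerate case $A = [0,1]^2$ where $\partial M$ is empty; this is what makes the invocation of \Cref{lem:Mplus} (or an explicit vacuous-case clause) the main conceptual step rather than a routine reduction. Everything else---the existence of $t^\star$, that $\gamma(t^\star) \in A$, and that $\gamma(t^\star)$ is on $\partial M$---is a clean consequence of closedness of $A$ and the definition of the topological boundary.
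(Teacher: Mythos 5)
Your argument is correct in substance, and it is worth noting that the paper states this observation without any proof, so there is nothing to compare it against line by line; your south-east ray construction is a perfectly good way to make it rigorous, and it matches the intended reading of the statement (the allocation region is the $\succeq$-upward closure of $\partial M$, exactly as in \Cref{def:complete_path}; the converse inclusion $\partial M \subseteq A$ plus monotonicity gives the other direction for free). The core steps are sound: $(1,0)\notin A$ by monotonicity once $A\neq[0,1]^2$, the set $\{t:\gamma(t)\in A\}$ has a supremum $t^\star$ whose image lies in $A$ by closedness, and points $\gamma(t)$ with $t>t^\star$ witness that every ball around $\gamma(t^\star)$ meets $[0,1]^2\setminus A$, so $\gamma(t^\star)\in\partial M$ per \Cref{def:boundary}. (Your claim that $t^\star<1$ follows ``since $\gamma(1)\notin A$'' is slightly out of order --- a priori the supremum could equal $1$ --- but the same closedness argument you invoke immediately afterwards rules that out, so nothing breaks.)

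Two small inaccuracies in the framing deserve a correction. First, the case $A=[0,1]^2$ is not \emph{vacuous}: there $\partial M=\emptyset$ in the subspace topology while $A$ is nonempty, so the statement literally fails rather than holding trivially; it is a degenerate exception that the paper silently ignores. Second, \Cref{lem:Mplus} does not ``rule out'' this case --- it only says that some mechanism in $\Mplus$ weakly dominates any given mechanism, and the full-square allocation region remains a legitimate member of $\M$. The honest resolution is simply that connectedness of $[0,1]^2$ makes $A=[0,1]^2$ the \emph{only} monotone closed region with empty boundary, and the observation is only ever used in the paper for mechanisms where this degenerate case does not arise (e.g.\ those with $A\subseteq U$). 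With that caveat stated explicitly instead of attributed to \Cref{lem:Mplus}, your proof is complete.
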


        We refer to points on the curve sorted according to the natural left-to-right and bottom-up parametrization: $v \in \partial M$ comes before $v'\in \partial M$ if $v_i \le v_i'$ for $i=1,2$, where this total ordering is well posed because of the monotonicity of the allocation region. In general, $\partial M$ may be utterly complicated (see also \Cref{app:complexity}); for instance, if we look at the profit-maximizing mechanism over $n$ points, the corresponding $\partial M$ may have $O(n)$ ``corners''. We present a generic procedure to simplify allocation regions without affecting too much the corresponding profit.
        
        For any $\eps>0$, we denote with $G_{\eps}$ the grid based on the points $(i\eps, i\eps)$ for $i=1, \dots, \lfloor \nicefrac 1{\eps}\rfloor$, according to \Cref{def:grid}. For any mechanism $M$ and precision $\eps$, we denote with $S_M^{\eps}$ the set of points defined as the union between the points of the form $(i\eps, i\eps)$ for $i=1, \dots, \lfloor \nicefrac 1{\eps}\rfloor$ and the points at the intersections of $G_{\eps}$ and $\partial M$. Note, we are intersecting a curve $\partial M$ with the union of segments/edges in $G_{\eps}$, and we adopt the convention that if $\partial M$ and $G_{\eps}$ share a segment, then the intersection only contains the first and last shared point of such segment (where first and last are defined according to the edge direction or equivalently to the ordering previously specified for the curve $\partial M$). We denote with $G_M^{\eps}$ the grid induced by the points $S_{M}^{\eps}$, again according to \Cref{def:grid}.

        \begin{figure}[t!]
            \centering
            \includegraphics[width=0.3\linewidth]{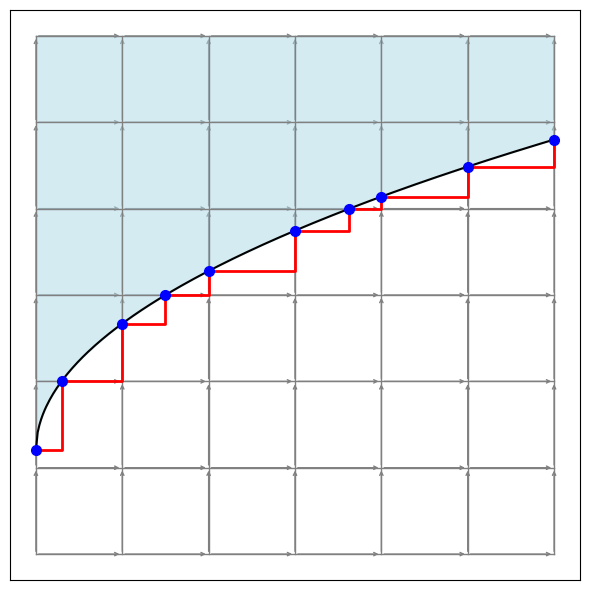}
            \caption{Visualization of the approximation (red) of a mechanism $M$ (blue allocation region), with precision $\eps = \nicefrac{1}{6}$. Note, the blue points are $\partial M \cap G_\eps$.}
            \label{fig:curve_aproximation}
        \end{figure}
        
        \paragraph{Constructing the Simplified Mechanism.} We now explain how to construct the complete path $\pi_{M}^{\eps}$ in $G_M^{\eps}$ that defines the simplified mechanism $M_{\eps}$ (we refer to \Cref{fig:curve_aproximation} for a visualization). $\pi_{M}^{\eps}$ starts from the intersection of $\partial M $ with the $(0,0)$-$(0,1)$ segment and terminates at the intersection between $\partial M$ and the $(0,1)$-$(1,1)$ segment. It contains all the points in the intersection of $\partial M$ and $G_\eps$, so we only need to specify how to interpolate between two consecutive  points there. Based on the specified total ordering for the boundary, let $v=(v_1,v_2)$ and $v'=(v'_1,v_2')$ be two consecutive points in the intersection of $\partial M$ and $G_\eps$; we have two cases: 
        \begin{itemize}
            \item If there exists an edge $G_M^{\eps}$ that connects them, then add this edge to the path.
            \item If no such edge exists, consider the node $u$ of coordinates $(v_1',v_2)$, which belongs to $G_M^{\eps}$. The path $\pi_M^{\eps}$ contains the directed edges $v$-$u$ and $u$-$v'$. 
        \end{itemize}

        \begin{definition}[Simplified Mechanism]
        \label{def:simplification}
            For any mechanism $M \in \M$ and precision $\eps$, we denote with $M_{\eps}$ the simplified mechanism induced by $\pi_M^{\eps}$. 
            
        \end{definition}

        \paragraph{Algorithmic Aspects.} We have presented a procedure to compute the path $\pi_M^{\eps}$, and we observe that this can be performed efficiently by our algorithm. From an algorithmic perspective, the only unclear passage is how to find intersections of $\partial M^\star$ and $G_{\eps}$, where $M^\star$ is the optimal mechanism on the points observed so far, from Corollary \ref{cor:empirical_maximizer}. This can be done efficiently as $\partial M^\star$ and $G_\eps$ are (finite) unions of horizontal and vertical segments/edges of known coordinates. In particular, $\partial M^\star$ contains at most $O(n)$ segments, while $G_{\eps}$ has $O(\nicefrac{1}{\eps^2})$ segments.

        \paragraph{Properties of the Simplified Mechanism.} Before moving on, we present some simple but crucial properties of simplified mechanisms. 
        \begin{proposition}[Properties of Simplified Mechanisms]
        \label{prop:simplified_mechanisms}
            Consider any mechanism $M \in \M$ and precision $\eps$, then the following properties are verified:
            \begin{itemize}
                \item[(i)] $A_{M} \subseteq A_{M_{\eps}}$
                \item[(ii)] If $v \in A_M$, then $\prof(M,v) - \prof(M_{\eps},v) \le 2\eps$
                \item[(iii)] $S^\eps_{M_\eps} = S^\eps_{M}$
            \end{itemize}
        \end{proposition}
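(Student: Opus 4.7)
The plan is to handle all three properties by a local analysis between two consecutive intersection points of $\partial M$ with $G_\eps$. Enumerate the points of $\partial M \cap G_\eps$ as $v^{(1)} \prec \cdots \prec v^{(k)}$ in the boundary order. By the very definition of ``consecutive'' together with the convention on shared segments, the sub-arc of $\partial M$ between any $v = v^{(j)}$ and $v' = v^{(j+1)}$, as well as the corresponding piece of $\pi_M^\eps$, is confined to the closed rectangle $[v_1,v_1']\times[v_2,v_2']\subseteq[0,1]^2$, whose width and height are each at most $\eps$. Inside this rectangle the simplified piece is either a single axis-parallel edge (only possible, by monotonicity, when $\partial M$ itself is forced to coincide with that edge) or the south-east L-shape through $u=(v_1',v_2)$.

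For (i), the key geometric observation is that any north-east monotone path from $v$ to $v'$ inside $[v_1,v_1']\times[v_2,v_2']$ lies weakly north-west of the south-east L-shape through $u$. Stitching this local inclusion along the whole boundary shows that $\pi_M^\eps$ lies weakly south-east of $\partial M$, and taking the associated north-west closures (the allocation regions) flips the inclusion to give $A_M \subseteq A_{M_\eps}$.

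For (ii), combining (i) with \Cref{prop:trade-off} immediately yields $\prof(M_\eps,v) \le \prof(M,v)$ whenever $v \in A_M$. Using the Myerson formulas of \Cref{def:payments}, the gap splits as
\begin{equation*}
    \prof(M,v) - \prof(M_\eps,v) = \bigl(q^M(v) - q^{M_\eps}(v)\bigr) + \bigl(p^{M_\eps}(v) - p^M(v)\bigr),
\end{equation*}
and each bracket equals the horizontal (resp.\ vertical) offset between the south-east projection of $v$ onto $\partial M$ and onto $\pi_M^\eps$. Both projections fall into the same $\eps$-cell of $G_\eps$ by the local argument above, so each offset is bounded by $\eps$ and the total gap by $2\eps$.

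For (iii), by construction $\pi_M^\eps$ passes through every point of $\partial M \cap G_\eps$ and shares the diagonal anchor points, which already gives $S_M^\eps \subseteq S_{M_\eps}^\eps$. For the reverse inclusion I would argue that any new meeting point of $\pi_M^\eps$ with $G_\eps$ must arise either from an L-shape corner $u=(v_1',v_2)$ or from the endpoints of a segment of $\pi_M^\eps$ shared with a grid line of $G_\eps$; a short case analysis on whether each of $v,v'$ sits on a horizontal or vertical grid line of $G_\eps$ shows that the coordinates of any such meeting point are inherited from $v$, $v'$, or a diagonal anchor, and hence are already present in $S_M^\eps$. I expect (iii) to be the main technical obstacle, since it requires correctly enumerating the orientations of the grid lines hitting $v$ and $v'$ and carefully applying the ``first and last shared point'' convention, rather than relying on a single clean geometric picture as in (i) and (ii).
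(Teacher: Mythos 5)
Parts (i) and (ii) of your proposal are correct and follow essentially the paper's own route: between two consecutive points of $\partial M\cap G_\eps$ both the original arc and the simplified piece stay inside a rectangle of side at most $\eps$ whose south-east corner lies on the simplified boundary, which yields $A_M\subseteq A_{M_\eps}$ and bounds each of the two payment shifts by $\eps$.

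The gap is in (iii), precisely where you flagged it. Your closing inference --- that any new meeting point has coordinates ``inherited from $v$, $v'$, or a diagonal anchor, and hence [is] already present in $S_M^\eps$'' --- does not follow: $S_M^\eps$ consists of the diagonal points together with points of $\partial M$ itself, and the L-corner $u=(v_1',v_2)$, although its coordinates come from $v$ and $v'$, in general does not lie on $\partial M$. This matters in a configuration your sketch does not rule out: if $v$ lies on a horizontal grid line of $G_\eps$ (i.e.\ $v_2$ is a multiple of $\eps$) and $v_1<v_1'$, then the horizontal leg from $v$ to $u$ runs along that grid line, so $\partial M_{\eps}$ shares a segment with $G_\eps$ whose last point is $u$; under the stated first/last-shared-point convention, $u$ then appears in $\partial M_{\eps}\cap G_\eps$ even though $u\notin S^\eps_M$ (symmetrically when $v'$ sits on a vertical grid line). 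For instance, with $\eps=\nicefrac{1}{2}$ and a boundary that crosses $y=\nicefrac{1}{2}$ vertically at $(0.2,0.5)$ and next meets the grid when it exits the square at $(0.45,1)$, the simplified path contains the run from $(0.2,0.5)$ to $(0.45,0.5)$, and $(0.45,0.5)$ is a meeting point with $G_\eps$ that is not in $S^\eps_M$. So the case analysis you propose cannot terminate with the set equality in the form you assert; you would need either to argue that such shared-segment endpoints can be discarded (for example, because they lie on a straight run of $\pi_M^\eps$ and hence do not alter any further simplification, which is the property that (iii) is actually used for in \Cref{prop:simple_properties}), or to restrict attention to genuine crossing points. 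Note that the paper itself disposes of (iii) with a one-line ``true by construction,'' so you are not missing a slick published argument; but, as written, your justification of (iii) does not go through.
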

        \begin{proof}
            Consider any point $v \in A_M$: it is dominated by a point $v'$ on the boundary of $M$. If $v'$ belongs to $S^\eps_M$ then there is nothing else to prove, as $v' \in A_{M_{\eps}}$; otherwise, there exists two consecutive points $u=(u_1,u_2)$ and $w=(w_1,w_2)$ in $S^\eps_M$ such that $v'$ belongs to the section of $\partial M$ between $u$ and $w$. If $u_1\neq w_1$ and $u_2 \neq w_2$, this means that $v'$ belongs to the $[u_1,w_1] \times [u_2,w_2]$ rectangle, which dominates (in the sense of the relation $\preceq$) the point $(u_2, w_1)$, which is in the boundary of $M_{\eps}$ (by construction). This proves that $v$ belongs to $A_{M_{\eps}}$ when the coordinates of $v'$ are different. In particular, this argument tells us that each such point on the boundary of $M$ belongs to a rectangle in $A_{M_{\eps}}$, such that this rectangle has sides of length at most $\eps$ and is dominated by a point on the boundary of $M_{\eps}$. Also, the right and lower side of this rectangle belong to $\partial M_\eps$, by construction. The same holds - with a segment of length at most $\eps$ instead of a rectangle - when $u$ and $w$ coincide for one coordinate.

            Consider now point (ii), and fix any point $v = (u_1, w_2) \in A_M$. We have $\prof(M,v) = q-p$, as defined in \Cref{def:payments}. This means that there exists two points $u = (u_1,q)$ and $w = (p,w_2)$ in $\partial M$. By point (i), we know that $v\in A_{M_{\eps}}$, and that $\prof(M_{\eps},v) = q_{\eps} - p_{\eps}$, induced by the points $u_{\eps} = (u_1,q_{\eps})$ and $w_{\eps} = (p_\eps,w_2)$ in $\partial M_{\eps}$. By the construction used in the previous point, we have that $q_\eps$ is at most $q + \eps$, while $p_{\eps}$ is at least $p-\eps$. This implies point (ii). 
            
            We finally address point (iii). This is equivalent to proving that $\partial M_\eps$ and $\partial M$ have the same intersection with $G_\eps$. This is true directly by construction of $M_\eps$.
        \end{proof}

    \subsection{Approximating Sequence of Mechanisms}

        In this section, we show how to iteratively apply the simplification procedure to construct a sequence of approximating mechanisms, and analyze its properties. 

        \begin{definition}[Approximating Sequence] 
        \label{def:approximating_sequence}
            Let $M \in \M$ be any mechanism, then the sequence of approximating mechanism $\app{h}{M}$ is defined as $\app{h}{M}=M_{2^{-h}}$, for $h \in \mathbb N$.
        \end{definition}

        Stated differently, the sequence $\app{h}{M}$ approximates the target mechanism $M$ via the simplification procedure (see \Cref{def:simplification}) for geometrically decreasing precisions $2^{-h}$. We have some immediate properties that stem from the definition of $\app{h}{M}$.

        \begin{proposition}
        \label{prop:simple_properties}
            For any mechanism $M \in \M$, and indices $h'\ge h$, it holds that 
            \begin{itemize}
                 \item{(Coarser Composition Rule)} $\app{h}{M} = \app{h}{\app{h'}{M}}$
                \item{(Monotonicity)} $A_{\app{h'}{M}} \subseteq A_{\app{h}{M}}$
            \end{itemize}
        \end{proposition}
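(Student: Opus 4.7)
The plan is to establish the Coarser Composition Rule first, from which the Monotonicity statement drops out almost immediately via property (i) of \Cref{prop:simplified_mechanisms}.

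For the composition rule, the key structural observation is that dyadic grids nest: since $h' \geq h$, every vertical grid line $x = i \cdot 2^{-h}$ of $G_{2^{-h}}$ coincides with the line $x = (i \cdot 2^{h'-h}) \cdot 2^{-h'}$ of $G_{2^{-h'}}$, and the same holds for horizontal lines, so $G_{2^{-h}} \subseteq G_{2^{-h'}}$ as subsets of $[0,1]^2$. Applying property (iii) of \Cref{prop:simplified_mechanisms} at precision $2^{-h'}$ yields $\partial \app{h'}{M} \cap G_{2^{-h'}} = \partial M \cap G_{2^{-h'}}$; intersecting both sides further with $G_{2^{-h}}$ then gives $\partial \app{h'}{M} \cap G_{2^{-h}} = \partial M \cap G_{2^{-h}}$, and hence $S^{2^{-h}}_{\app{h'}{M}} = S^{2^{-h}}_{M}$, since the extra diagonal points $(i\cdot 2^{-h}, i \cdot 2^{-h})$ in the definition of $S^{\eps}_{\cdot}$ do not depend on the underlying mechanism.

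The last step will be to verify that the path $\pi^{\eps}_{M}$ produced by the simplification procedure, and thus the resulting mechanism $M_\eps$, is entirely determined by the set $S^\eps_M$: the total order used to pick consecutive points is purely coordinate-based, and for any two consecutive points $v, v' \in S^\eps_M$ both branches of the interpolation rule (direct edge, or L-shape through the corner $(v'_1, v_2)$) depend only on the coordinates of $v$ and $v'$, since that corner is always a node in the induced grid. Combined with $S^{2^{-h}}_{\app{h'}{M}} = S^{2^{-h}}_{M}$, this shows $\pi^{2^{-h}}_{\app{h'}{M}} = \pi^{2^{-h}}_{M}$, i.e.~$\app{h}{\app{h'}{M}} = \app{h}{M}$.

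The monotonicity claim then follows in one line: applying property (i) of \Cref{prop:simplified_mechanisms} with the mechanism $\app{h'}{M}$ and precision $2^{-h}$ gives $A_{\app{h'}{M}} \subseteq A_{(\app{h'}{M})_{2^{-h}}} = A_{\app{h}{\app{h'}{M}}}$, and the composition rule rewrites the right-hand side as $A_{\app{h}{M}}$. The main obstacle I expect is the last step of the composition argument: the simplification is specified only operationally, so one has to carefully unwind the construction of both $\pi^{2^{-h}}_M$ and $\pi^{2^{-h}}_{\app{h'}{M}}$ and verify that identical intersection sets force identical interpolation choices in the refined grids $G^{2^{-h}}_M$ and $G^{2^{-h}}_{\app{h'}{M}}$; handling the endpoints on the axes $x=0$ and $y=1$ is where the bookkeeping is most delicate.
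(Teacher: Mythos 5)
Your proof is correct and follows essentially the same route as the paper: use property (iii) of \Cref{prop:simplified_mechanisms} together with the nesting $G_{2^{-h}}\subseteq G_{2^{-h'}}$ to get $\partial\app{h'}{M}\cap G_{2^{-h}}=\partial M\cap G_{2^{-h}}$, conclude the composition rule because the grid intersection determines the simplified mechanism, and then derive monotonicity from property (i) plus the composition rule. The only difference is that you spell out in more detail why the simplification procedure is determined by $S^{\eps}_{M}$, a step the paper asserts in one line.
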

        \begin{proof}
            From point (iii) of \Cref{prop:simplified_mechanisms}, we know that $\partial M \cap G_{2^{-h'}} = \partial \app{h'}{M} \cap G_{2^{-h'}}$. Since $h'\ge h$, we have that $G_{2^{-h}} \subseteq G_{2^{-h'}}$, which, in turn, implies\footnote{$A \cap C = B\cap C$ implies $A\cap C' = B \cap C'$ if $C' \subseteq C$.} that $\partial M \cap G_{2^{-h}} = \partial \app{h'}{M} \cap G_{2^{-h}}$; since the intersection of the boundary with the grid characterizes the mechanism, this  concludes the proof of the first property: $\app{h}{M} = \app{h}{\app{h'}{M}}$. 

            We move to the second property. We have that $A_{\app{h}{M}} = A_{\app{h}{\app{h'}{M}}}$ as the two mechanisms are the same, while point (i) of \Cref{prop:simplified_mechanisms} implies that that $A_{\app{h}{\app{h'}{M}}} \supseteq A_{\app{h'}{M}}.$ All in all, we can conclude that $A_{\app{h'}{M}} \subseteq A_{\app{h}{M}}.$
        \end{proof}

        In the following proposition, we summarize the properties of approximating sequences for $\M^+$, the subset of $\M$ whose mechanisms have allocation region contained in the upper-left triangle of $[0, 1]^2$.

        \begin{proposition}[Properties of the Approximating Sequence]
        \label{prop:approximating_seq}
            Let $M$ be any mechanism in $\M^+$, the following properties holds true for any point $v \in [0,1]^2$:
            \begin{itemize}
                \item[\textnormal{(i)}] $\prof(\app{h'}{M},v) - \prof(\app{h}{M},v) \le 2 \cdot 2^{-h}$ for any $h'\ge h.$
                \item[\textnormal{(ii)}] $\prof(M,v) - \prof(\app{h}{M},v) \le 2 \cdot 2^{-h}$ for any $h$
                \item[\textnormal{(iii)}] If $v \notin A_{\app{h}{M}}$, then $\prof(\app{h'}{M},v) = 0$ for all $h' \ge h$
                \item[\textnormal{(iv)}]  If $v \in A_{\app{h}{M}}$, then $\prof(\app{h'}{M},v)$ is non-decreasing for any $h' \le h$
    
            \end{itemize}
        \end{proposition}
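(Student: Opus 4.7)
The plan is to derive the four properties in the order (iii), (iv), (ii), (i) from the tools already in place: for (iii) and (iv), the Monotonicity part of \Cref{prop:simple_properties} together with \Cref{prop:trade-off}; for (ii), the one-step simplification bound of \Cref{prop:simplified_mechanisms} (ii) together with a small grid-cell calculation for the case $v \notin A_M$; and for (i), the Coarser Composition Rule of \Cref{prop:simple_properties} to reduce the two-index inequality to (ii).

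Property (iii) is immediate: by \Cref{prop:simple_properties} (Monotonicity), $A_{\app{h'}{M}} \subseteq A_{\app{h}{M}}$ for every $h' \ge h$, so $v \notin A_{\app{h}{M}}$ forces $v \notin A_{\app{h'}{M}}$ and hence $\prof(\app{h'}{M},v) = 0$. For (iv), fix $v \in A_{\app{h}{M}}$ and any $h'' \le h' \le h$. The same monotonicity gives the chain $A_{\app{h}{M}} \subseteq A_{\app{h'}{M}} \subseteq A_{\app{h''}{M}}$, so $v$ belongs to all three regions; applying \Cref{prop:trade-off} to the nested pair $A_{\app{h'}{M}} \subseteq A_{\app{h''}{M}}$ (a smaller allocation region extracts at least as much profit on the points it still allocates) yields $\prof(\app{h'}{M},v) \ge \prof(\app{h''}{M},v)$, i.e.\ the required non-decreasing statement.

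For (ii), I would split on whether $v \in A_M$. When $v \in A_M$, the bound $\prof(M,v) - \prof(\app{h}{M},v) \le 2 \cdot 2^{-h}$ is exactly \Cref{prop:simplified_mechanisms} (ii) with $\eps = 2^{-h}$. When $v \notin A_M$ we have $\prof(M,v) = 0$, so the claim reduces to $\prof(\app{h}{M},v) \ge -2 \cdot 2^{-h}$; this is trivial when $v \notin A_{\app{h}{M}}$, and when $v \in A_{\app{h}{M}} \setminus A_M$ the point $v$ lies inside a single cell of $G_{2^{-h}}$ where the simplified boundary is an L-shape between two consecutive knots of $\partial M \cap G_{2^{-h}}$ (the degenerate sub-cases where consecutive knots share a coordinate introduce no extra region). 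Because both knots lie on $\partial M \subseteq U$ they satisfy the diagonal relation, and reading off the Myerson payments for $\app{h}{M}$ at such a $v$ gives $q - p \ge -2^{-h}$, which is stronger than needed.

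Finally, for (i), the Coarser Composition Rule $\app{h}{M} = \app{h}{\app{h'}{M}}$ identifies $\app{h}{M}$ with the $2^{-h}$-simplification of $\app{h'}{M}$, while the knots of $\app{h}{M}$ are still $\partial M \cap G_{2^{-h}} \subseteq U$. Applying \Cref{prop:simplified_mechanisms} (ii) to $\app{h'}{M}$ handles the case $v \in A_{\app{h'}{M}}$, and the same grid-cell argument used for (ii) handles the case $v \in A_{\app{h}{M}} \setminus A_{\app{h'}{M}} \subseteq A_{\app{h}{M}} \setminus A_M$; together these give the stated bound. The main technical obstacle is precisely the grid-cell computation inside (ii): the hypothesis $M \in \Mplus$ is essential, as it forces the boundary knots into the upper triangle $U$ and thereby caps from below the per-trade profit that simplification can introduce in the extra region; without it the same extra region could harbor arbitrarily negative profit and the inequalities would fail.
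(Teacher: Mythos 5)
Your proof is correct and follows essentially the same route as the paper: (iii) and (iv) from monotonicity of the allocation sequence plus the trade-off observation, (ii) via the one-step bound of \Cref{prop:simplified_mechanisms}(ii) together with the $\M^+$-based grid-cell calculation for $v \in A_{\app{h}{M}}\setminus A_M$, and (i) by reducing to (ii) through the Coarser Composition Rule. One small remark in your favor: the paper states (i) follows from (ii) ``by the Coarser Composition Rule'' without spelling out that applying its own three-case argument to $\app{h'}{M}$ in place of $M$ would nominally require $\app{h'}{M}\in\M^+$ (which need not hold, since simplification enlarges the allocation region); your split into $v\in A_{\app{h'}{M}}$ (covered by \Cref{prop:simplified_mechanisms}(ii), which needs no $\M^+$ hypothesis) and $v\in A_{\app{h}{M}}\setminus A_{\app{h'}{M}}\subseteq A_{\app{h}{M}}\setminus A_M$ (already covered by the grid-cell case of (ii)) is the clean way to close that loop, and your bound $q-p\ge-2^{-h}$ in the grid-cell case is the correct inequality (the paper has a sign typo there).
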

        \begin{proof}
            We start proving point (ii), which implies, by the Coarser Composition Rule (\Cref{prop:simple_properties}) point (i). Consider any point $v \in [0,1]^2$. We have three cases:
        \begin{itemize}
            \item If $v$ does not belong to the allocation region of $\app{h}{M}$, then it does not belong to that of $M$ either (by point (i) of \Cref{prop:simplified_mechanisms} we know that $A_M \subseteq A_{\app{h}{M}}$). Therefore $\prof(M,v)=\prof(\app{h}{M},v)=0.$
            \item If $v$ belongs to the allocation region of $M$, then it also belongs to that of $\app{h}{M}$ (once again, by point (i) of \Cref{prop:simplified_mechanisms}), moreover, by point (ii) of \Cref{prop:simplified_mechanisms}, we have that: 
            \[
                \prof(M,v) - \prof(\app{h}{M},v) \le 2\cdot 2^{-h}. 
            \]
            \item Consider now the situation in which $v$ belongs to the allocation region of $\app{h}{M}$, but not of $M$. By construction, $v$ belongs to a rectangle with sides of length at most $2^{-h}$, such that the right and lower sides of the rectangle belong to $\partial \app{h}{M}$. Therefore, under $\app{h}{M}$, the price paid to the seller is at most $v_1+2^{-h}$, and that paid by the buyer is at least $v_2-2^{-h}$, for a revenue of at least $v_2-v_1-2 \cdot 2^{-h}$. If $v \in U$, the upper-left corner of $[0, 1]^2$, then $v_2-v_1 \ge 0$, therefore $\prof(\app{h}{M}, v) \ge -2\cdot 2^{-h}$, which verifies the inequality since $\prof(M, v) = 0$ in this case. If $v$ is not in $U$, then it cannot be too far from $U$, as it is contained in the allocation region of $\app{h}{M}$, and the allocation region of $M$ is contained in $U$. In particular, since $M\in \M^+$, the lower-left and upper-right vertexes of the rectangle described above need to belong to $U$, since by construction the union of the lower side and the right side connects two points in $A_M\subseteq U$. This means that the lower-right vertex $u$, whose coordinates give the Myerson's prices $q = u_2$ and $p = u_1$ for $v$, is such that $q-p$ is at most $-2^{-h}$. This follows by considering the vertical and horizontal distance between $u$ and the main diagonal characterizing $U$.
            \end{itemize}

            Consider now the points (iii) and (iv). We know that the set sequence $A_{\app{h}{M}}$ is monotonically decreasing with respect to the inclusion (\Cref{prop:simple_properties}), therefore $A_{\app{h'}{M}} \subseteq A_{\app{h}{M}}$ for any $h'\ge h$. This means that $v \notin A_{\app{h}{M}}$ implies $v \notin A_{\app{h'}{M}}$, and thus $\prof(A_{\app{h'}{M}},v) =0$. Conversely, if $v \in A_{\app{h'}{M}}$, then it also belongs to $A_{\app{h}{M}}$ and, by the monotonicity of the profit function (\Cref{prop:trade-off}), we have that $\prof(\app{h'}{M},v) \ge \prof(\app{h}{M},v).$
        \end{proof}
    
        When the valuation pair is sampled from some distribution $\D$, we may use $\E{\prof(M)}$ instead of $\Esub{V \sim \D}{\prof(M,V)}$ when there is no ambiguity on the distribution we are referring to. We have the following, crucial structural lemma on approximating sequences. 
    
        \begin{proposition}[Local Dichotomy of Approximating Sequences]
        \label{prop:local}
            Let $M$ be any mechanism in $\M^+$, and $h$ be any integer. Then for any distribution at least one of the following two properties holds:
            \begin{itemize}
                \item[\text{(i)}] $\E{\prof(\app{h}{M}) - \prof(\app{h+1}{M})} \ge 31 \cdot 2^{-h}$
                \item[\text{(ii)}] $\E{(\prof(\app{h+1}{M}) - \prof(\app{h}{M}))^2} \le 136 \cdot 2^{-h}$
            \end{itemize}
        \end{proposition}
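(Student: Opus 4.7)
My plan is a pointwise analysis of the increment
\[
    D(v) := \prof(\app{h+1}{M},v) - \prof(\app{h}{M},v),
\]
followed by a one-parameter case split on the size of its negative part. The slogan is: the ``bad'' event where $D$ is strongly negative is exactly the set $A_{\app{h}{M}} \setminus A_{\app{h+1}{M}}$ on which the finer mechanism gives up a trade, and mass on that event can be charged directly to the first moment $\E{-D}$ that appears in option (i).

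First, I would use \Cref{prop:approximating_seq} to pin down $D$ pointwise. Part (i) with $h' = h+1$ gives the uniform upper bound $D(v) \le 2 \cdot 2^{-h}$ everywhere. For a lower bound, partition the square into three pieces. If $v \notin A_{\app{h}{M}}$, then by monotonicity $v \notin A_{\app{h+1}{M}}$ either, so both profits are zero and $D(v)=0$. If $v \in A_{\app{h+1}{M}}$, then the ``finer mechanism is at least as profitable'' direction of part (iv) gives $D(v) \ge 0$. Finally, if $v \in A_{\app{h}{M}} \setminus A_{\app{h+1}{M}}$, then $\prof(\app{h+1}{M},v) = 0$ by part (iii), so $D(v) = -\prof(\app{h}{M},v) \in [-1,\, 2 \cdot 2^{-h}]$, where the lower endpoint just uses $\prof \le 1$.

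Second, I would split $D = X - Y$ into its positive part $X = \max(0,D)$ and negative part $Y = \max(0,-D)$. From the previous step, $0 \le X \le 2 \cdot 2^{-h}$ everywhere, $0 \le Y \le 1$, and $Y$ is supported on the ``give-up'' set $A_{\app{h}{M}} \setminus A_{\app{h+1}{M}}$. Since $X$ and $Y$ are positive and negative parts of the same random variable, $XY \equiv 0$, so $\E{D^2} = \E{X^2} + \E{Y^2}$ and $\E{-D} = \E{Y} - \E{X}$. Now pick a threshold $\tau \cdot 2^{-h}$. If $\E{Y} \ge \tau \cdot 2^{-h}$, then $\E{-D} \ge (\tau - 2) \cdot 2^{-h}$, which yields option (i) for any $\tau \ge 33$. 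Otherwise, $\E{X^2} \le \|X\|_\infty \cdot \E{X} \le 4 \cdot 4^{-h}$ and $\E{Y^2} \le \|Y\|_\infty \cdot \E{Y} \le \tau \cdot 2^{-h}$, so $\E{D^2} \le (4 \cdot 2^{-h} + \tau) \cdot 2^{-h}$, which is comfortably below $136 \cdot 2^{-h}$, giving option (ii).

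The main obstacle I anticipate is essentially bookkeeping rather than a conceptual hurdle: one has to make sure the single threshold $\tau$ is used consistently for both alternatives, and that the uniform bounds $X \le 2 \cdot 2^{-h}$ (from part (i)) and $Y \le 1$ (from $\prof \in [-1,1]$) are not inflated when converting first moments to second moments. The only genuine structural step is the identification of $\{v \in A_{\app{h}{M}} \setminus A_{\app{h+1}{M}}\}$ as the sole source of negative increments; everything else flows mechanically from $XY \equiv 0$ and a linear-in-$\E{Y}$ split. This clean correspondence between the first-moment event in (i) and the second-moment bound in (ii) is what makes the dichotomy quantitatively sharp and is, I expect, the place where exploiting $M \in \Mplus$ (and hence working within $U$) is tacitly used to keep $\prof(\app{h}{M},\cdot)$ bounded by $1$ on the give-up set.
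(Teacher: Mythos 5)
Your proof is correct, and it reaches the same dichotomy as the paper through slightly different bookkeeping. The structural core is identical: using the nesting $A_{\app{h+1}{M}}\subseteq A_{\app{h}{M}}$ together with points (i), (iii), (iv) of \Cref{prop:approximating_seq}, the increment $D=\prof(\app{h+1}{M},\cdot)-\prof(\app{h}{M},\cdot)$ is at most $2\cdot 2^{-h}$ everywhere and can only be negative on the give-up set $A_{\app{h}{M}}\setminus A_{\app{h+1}{M}}$, where $\prof(\app{h+1}{M},\cdot)=0$; this is exactly the paper's set $B$ (there defined as the points where the drop exceeds $2\cdot 2^{-h}$). Where you differ is in how the first and second moments are tied together: the paper dyadically stratifies $B$ into level sets $B_i$ according to the size of $\prof(\app{h}{M},\cdot)$ and thresholds on the weighted mass $(\#)=\sum_i 2^{2i-2h}\,\Pb(B_i)$, which simultaneously lower-bounds $\E{\prof(\app{h}{M})\ind{V\in B}}$ and, up to a factor $4$, upper-bounds the restricted second moment; you instead split $D$ into positive and negative parts, threshold directly on $\E{Y}$, and use $\lVert Y\rVert_\infty\le 1$ to get $\E{Y^2}\le\E{Y}$, which is more elementary (no dyadic decomposition) and in fact yields the sharper bound $\E{D^2}\le 37\cdot 2^{-h}$ in alternative (ii), comfortably implying the stated $136\cdot 2^{-h}$, while matching the constant $31$ in (i). One marginal remark: the bound $\prof(\app{h}{M},\cdot)\le 1$ on the give-up set needs nothing beyond payments lying in $[0,1]$; the hypothesis $M\in\Mplus$ is really consumed earlier, inside \Cref{prop:approximating_seq}, to obtain the uniform upper bound $D\le 2\cdot 2^{-h}$ that you invoke.
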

        \begin{proof}
            We partition the $[0,1]^2$ square according to the revenue that the various points induce. In particular, we denote with $B$ the points that induce significant different revenues in the two mechanisms:
            \[
                B=\left\{v \in [0,1]^2 \mid \prof(\app{h}{M},v) - \prof(\app{h+1}{M},v) > \tfrac{2}{2^h}\right\}
            \]
            Moreover, for $i=1, \cdots, h-1$ we define:
            \[
                B_i=\left\{v \in B \mid \tfrac{2^{i}}{2^h} < {\prof(\app{h}{M},v)} \le \tfrac{2^{i+1}}{2^h}\right\}.
            \]
            \begin{claim}
            \label{claim:partition_b}
                The sets $B_i$ are a partition of $B$. Moreover for $v \in B$ the following two formulae hold: $\prof(\app{h+1}{M},v) = 0$ and $\prof(\app{h}{M},v)>\nicefrac{2}{2^h}$.
            \end{claim}
            \begin{proof}[Proof of the Claim]
                 The revenue function is at most $1$. Therefore, to prove that $\{B_i\}_{i=1}^{h-1}$ is a partition of $B$, we only need to prove that no $v \in B$ is such that $\prof(\app{h}{M},v)\le \nicefrac{2}{2^h}$. Clearly, if $v$ is in $B$, then it cannot belong to the allocation region of both mechanisms (otherwise \Cref{prop:approximating_seq} would imply that $\prof(\app{h+1}{M},v)\ge \prof(\app{h}{M},v))$. Similarly, if $v$ does not belong to the allocation region of $\app{h}{M}$, then it induces zero revenue for both $\app{h}{M}$ and $\app{h+1}{M}$, which is incompatible with being in $B$. The only possibility is that $v$ belongs to the allocation region of $\app{h}{M}$ but not of $\app{h+1}{M}$. This implies that $\prof(\app{h+1}{M},v) = 0$ and $\prof(\app{h}{M},v)>\nicefrac{2}{2^h}$ by definition of $B$. 
            \end{proof}
            If we focus on the fraction of revenue generated by the points in $B$, we get the following:
            \begin{align}
            \nonumber
                \Esub{V \sim \D}{\prof(\app{h}{M},V) \cdot \ind{V \in B}} &= \sum_{i=1}^{h} \Esub{V \sim \D}{\prof(\app{h}{M},V) \cdot \ind{V \in B_i}} \tag{By \Cref{claim:partition_b}}\\
                &\ge \sum_{i=1}^{h}\tfrac{2^{i}}{2^h} \cdot\Psub{V \sim \D}{V \in B_i} \tag{By def. of $B_i$}\\
            \label{eq:def_hashtag}
                &\ge \underbrace{\sum_{i=1}^{h}\tfrac{2^{2i}}{2^{2h}} \cdot \Psub{V \sim \D}{V \in B_i}}_{(\#)},
            \end{align}
            where, in the last inequality we used that $h \ge i$, and thus $2^{i-h} \ge 2^{2(i-h)}$.
            We now divide the analysis into two cases, according to whether the right-hand side of \Cref{eq:def_hashtag} (let's denote it with $(\#)$) is larger or smaller than $33 \cdot 2^{-h}$.
    
            \paragraph{First Case.} If $(\#) \ge 33 \cdot 2^{-h}$, then from \Cref{eq:def_hashtag} we can immediately conclude that $\E{\prof(\app{h}{M},V) \cdot \ind{V \in B}} \ge 33 \cdot 2^{-h}$, and we can write the following:
            \begin{align*}
                \E{\prof(\app{h}{M})} &= \E{\prof(\app{h}{M},V)\cdot \ind{V \in B}}+ \E{\prof(\app{h}{M},V)\cdot \ind{V \notin B}}\\
                &\ge 33 \cdot 2^{-h} + \E{\prof(\app{h}{M},V)\cdot \ind{V \notin B}} \tag{By bound on $(\#)$}\\
                &\ge 33 \cdot 2^{-h} + \E{(\prof(\app{h+1}{M},V) - \tfrac{2}{2^h})\cdot \ind{V \notin B}}\\
                &\ge \E{\prof(\app{h+1}{M})} + 31 \cdot 2^{-h}.
            \end{align*}
            Note, in the second to last inequality we applied point (i) of \Cref{prop:approximating_seq}, while in the last passage we used the observation that $\prof(\app{h+1}{M},v) = 0$ for $v \in B.$
    
            \paragraph{Second case.} If $(\#) < 33 \cdot 2^{-h}$, then we can look at the second moment of the difference:
            \begin{align*}
                &\E{(\prof(\app{h+1}{M}) - \prof(\app{h}{M})^2}\\
                &= \E{\prof(\app{h}{M},V)^2 \ind{V \in B} } + \E{(\prof(\app{h+1}{M}) - \prof(\app{h}{M})^2\ind{V \notin B} }\\
                &\le \E{\prof(\app{h}{M},V)^2 \ind{V \in B} } + \tfrac{4}{2^{2h}}\P{V \notin B}\\
                &\le 4 \cdot \sum_{i=1}^{h} \P{V \in B_i}  \tfrac{2^{2i}}{2^{2h}} + \tfrac{4}{2^{2h}} \le 136 \cdot 2^{-h}. \tag{By the case analysis and \Cref{claim:partition_b}}
            \end{align*}
            Note, in the first equality we use once again that $\prof(\app{h+1}{M},v) = 0$ for $v \in B$. In the first inequality we use that for $v \notin B$ it holds that $\prof(\app{h}{M},v) - \prof(\app{h+1}{M},v) \le 2\cdot 2^{-h}$ which, combined with Property (i) of \Cref{prop:approximating_seq} implies that $|\prof(\app{h}{M},v) - \prof(\app{h+1}{M},v)| \le 2\cdot 2^{-h}.$
        \end{proof}

    \subsection{Target Net and Regret Bound}
        The approximating sequences and their properties play a crucial role in our results, as they allow us to control the learning guarantees of the simplified mechanism $M^t$ from \Cref{subsec:simplify} that is played by the algorithm \simplify. In the rest of the section we forget about the online nature of the problem at hand and study the associated \emph{sample-complexity} problem, then concluding our argument via the standard offline-to-online reduction for online learning with full feedback. Fix then a distribution $\D$ and consider a sample (multi-)set $S$ of $n$ points drawn i.i.d. from $\D$. For any set $I$ of indices in $[n] = \{1,2,\dots, n\}$, we denote with $S(I)$ the projection of $S$ along the indices in $I$, i.e., all the samples corresponding to the indices in $I$. Note that $S(I)$ could be a multi-set. Since we consider sets of indices of cardinality $\nicefrac{4}{\eps}$, we make the assumption that $n\ge \nicefrac 4\eps$.
        
        \begin{definition}[Data-Dependent Net]
        \label{def:data_dependent_net}
            Fix any precision $\varepsilon = 2^{-H}$ for some $H \in \mathbb N$. Consider $n$ i.i.d. samples $S$ from $\D$, and let $I$ be any subset of $\nicefrac{4}{\eps}$ indices in $[n]$. We denote with $G^{\eps}_{S(I)}$ the grid induced by the points $S(I)$ and by the uniform $\eps$-grid. Furthermore, we denote with $\M^{\eps}_{S(I)}$ the set of all mechanisms induced by complete paths on $G^{\eps}_{S(I)}$.
        \end{definition}

        Recall, the grid is defined as in \Cref{def:grid}, while for complete paths we refer to \Cref{def:complete_path}. We complement this definition by providing an upper bound on the cardinality of $\cup_I \M^{\eps}_{S(I)}$, when the set of indices $I$ span over all the subset of $\nicefrac{4}{\eps}$ indices in $[n]$.
        
        \begin{proposition}
        \label{prop:cardinality_net}
            Let $n > 16$, then the following holds: 
            \[
                 \log \Big(\Bigl\lvert\bigcup_{\substack{
                 I \subseteq [n]\\|I| = \nicefrac{4}{\eps}}} \M^{\eps}_{S(I)}\Bigr\rvert \Big) \le  \frac{10}{\eps} \log n.     
            \] 
        \end{proposition}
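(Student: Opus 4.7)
The union on the left-hand side decomposes into two sources of combinatorial complexity: (i) the choice of the index subset $I \subseteq [n]$ with $|I| = 4/\eps$, and (ii) the number of mechanisms supported on the point-based grid $G^{\eps}_{S(I)}$ once $I$ is fixed. I will bound each factor separately and combine them with a standard union bound / product argument, then take logarithms and use the hypothesis $n > 16$ to absorb the multiplicative constants.

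For (i), the number of choices of $I$ is at most $\binom{n}{4/\eps} \le n^{4/\eps}$; note that $4/\eps$ is a non-negative integer because $\eps = 2^{-H}$. For (ii), observe that the grid $G^{\eps}_{S(I)}$ is the point-based grid induced (in the sense of \Cref{def:grid}) by the union of the $4/\eps$ sample points of $S(I)$ and the $\lfloor 1/\eps \rfloor$ diagonal points $(i\eps, i\eps)$ defining $G_\eps$. This union has at most $5/\eps$ distinct points, and since $\eps \le 1$ this quantity is at least $5$, so \Cref{lem:n_of_paths} applies and yields at most $(2e)^{2 \cdot 5/\eps} = (2e)^{10/\eps}$ complete paths, hence at most that many mechanisms in $\M^{\eps}_{S(I)}$.

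Combining (i) and (ii),
\[
\Bigl|\bigcup_{I \subseteq [n],\, |I| = 4/\eps} \M^{\eps}_{S(I)}\Bigr| \;\le\; n^{4/\eps} \cdot (2e)^{10/\eps}.
\]
Taking logarithms,
\[
\log\Bigl|\bigcup_{I} \M^{\eps}_{S(I)}\Bigr| \;\le\; \frac{4}{\eps}\log n \;+\; \frac{10}{\eps}\log(2e).
\]
To conclude the $(10/\eps)\log n$ bound it suffices to verify $\frac{6}{\eps}\log n \ge \frac{10}{\eps}\log(2e)$, i.e.\ $\log n \ge \frac{5}{3}\log(2e) \approx 2.82$. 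Since $n > 16$ implies $n \ge 17$ and $\log 17 > 2.82$, this inequality holds and the claim follows.

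The argument is essentially a two-level union bound, and the only real subtlety is matching the constants: the $10/\eps$ in the statement comes from the factor of $2$ in the $(2e)^{2m}$ bound of \Cref{lem:n_of_paths} applied to $m = 5/\eps$ points, and the $n > 16$ threshold is precisely what is needed to dominate the residual $\log(2e)$ term by $\log n$. I do not foresee any conceptual obstacle beyond careful bookkeeping.
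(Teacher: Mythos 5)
Your proof is correct and follows essentially the same route as the paper: count the $\binom{n}{\nicefrac{4}{\eps}}$ index sets, bound the number of complete paths per grid of at most $\nicefrac{5}{\eps}$ points via \Cref{lem:n_of_paths}, and use $n > 16$ (hence $n \ge 17$) to absorb the $(2e)^{\nicefrac{10}{\eps}}$ factor into $n^{\nicefrac{6}{\eps}}$. The only cosmetic difference is that you bound $\binom{n}{\nicefrac{4}{\eps}}$ by $n^{\nicefrac{4}{\eps}}$ rather than by $(e n \eps/4)^{\nicefrac{4}{\eps}}$ as the paper does, which makes your final numerical check slightly tighter but still valid.
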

    \begin{proof}
        There are $\binom{n}{\nicefrac{4}{\eps}}$ ways of choosing the indices $I$, and, for each such choice, each grid $G^{\eps}_S(I)$ is induced by $\nicefrac {5} {\eps}$ points. We have then the following bound (by \Cref{lem:n_of_paths}):
        \[
            \Bigl\lvert\bigcup_{\substack{
                 I \subseteq [n]\\|I| = \nicefrac{4}{\eps}}} \M^{\eps}_{S(I)}\Bigr\rvert \le (2e)^{\nicefrac{10}{\eps}}\binom{n}{\nicefrac{4}{\eps}} \le 4^{\nicefrac{1}{\eps}}e^{\nicefrac{14}{\eps}}n^{\nicefrac{4}{\eps}} \le n^{\nicefrac {10}{\eps}},
        \]
        where we use the inequality $\binom{n}{k} \le (e\cdot \nicefrac{n}{k})^k,$ and the assumption that $n > 16$. Applying the logarithm yields the desired bound. 
    \end{proof}

        Consider the optimal mechanism $M^\star$ on the sample $S$, as constructed in \Cref{subsec:maximizer} and \Cref{cor:empirical_maximizer}; it undergoes the simplification procedure presented in \Cref{subsec:simplify} to generate the simplified mechanism $\app{H}{M^\star}$ (which is the mechanism actually played by the algorithm). We want to argue that, with high-probability, the profit generated by $\app{H}{M^\star}$ \emph{on the sample $S$} is a good approximation of its expected profit with respect to a new sample from $\D$. The mechanism $\app{H}{M^\star}$ is induced by a complete path on the point induced grid $G_{M^\star}^\eps$, where $\eps = 2^{-H}$; moreover, it is characterized by at most $\nicefrac{4}{\eps}$ points in $G_S$, and this clearly also works as an upper bound on the number of characterizing elements from $S$. This means that, for some set of indices $I$ of cardinality at most $\nicefrac{4}{\eps}$, the boundary of $M^\star$ is a complete path in $G_{S(I)}^{\eps}$. We thus have the following lemma. 

    \begin{lemma}
    \label{lem:few_corners}
        Fix any precision $\eps = 2^{-H}$ for some $H \in \mathbb N$, and consider $n$ i.i.d. samples $S$ from distribution $\D$. Then the simplified mechanism $\app{H}{M^\star}$ belongs to $\M_{S(I)}^{\eps}$ for some set $I$ of at most $\nicefrac{4}{\eps}$ indices.
    \end{lemma}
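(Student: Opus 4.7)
The plan is to identify a small subset $I \subseteq [n]$ whose coordinates already carry all the non-uniform-grid structure of $\partial \app{H}{M^\star}$, so that $\app{H}{M^\star}$ can be read off as a complete path in the grid $G_{S(I)}^{\eps}$.

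As a first step, I would invoke \Cref{cor:empirical_maximizer} to assume without loss of generality that $M^\star$ is supported on the point-based grid $G_S$. Then $\partial M^\star$ is a monotone axis-aligned staircase whose horizontal (resp.\ vertical) segments lie at heights (resp.\ abscissae) equal to coordinates of points in $S$. By the construction preceding \Cref{def:simplification}, the grid $G_{M^\star}^{\eps}$ is induced by $S_{M^\star}^{\eps}$, the union of the uniform-grid diagonal points $(i\eps,i\eps)$ and the intersections $\partial M^\star \cap G_\eps$. Any point of $\partial M^\star \cap G_\eps$ that is not already a uniform-grid corner must lie in the relative interior of a horizontal or vertical segment of $\partial M^\star$; hence exactly one of its coordinates is forced by $G_\eps$, and the other coincides with a coordinate of some point of $S$.

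The core step is a counting argument on these ``non-standard'' intersection points. By the monotonicity of $\partial M^\star$, each vertical line $x=i\eps$ of $G_\eps$ either misses $\partial M^\star$, meets it transversally in a single point, or shares a single subsegment with it. Under the convention that shared subsegments contribute only their two endpoints to $\partial M^\star \cap G_\eps$, each vertical line contributes at most $2$ non-standard intersection points; symmetrically for horizontal lines. Since $G_\eps$ has at most $\nicefrac{1}{\eps}+1$ lines in each direction, the total number of non-standard intersection points is $O(\nicefrac{1}{\eps})$, which I would show tightens to $\nicefrac{4}{\eps}$ after absorbing boundary terms coming from the left and top sides of the unit square where $\partial M^\star$ enters and exits. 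For each such intersection point I would pick a point of $S$ realizing its non-standard coordinate and collect the corresponding indices into $I$, yielding $|I| \le \nicefrac{4}{\eps}$.

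Finally, I would check that $G_{S(I)}^{\eps}$ contains every horizontal and vertical line used by $\partial \app{H}{M^\star}$: the uniform-grid lines are present by definition, while the non-standard ones are present through $S(I)$. Since the interpolation procedure of \Cref{def:simplification} uses only these lines and their intersections, $\partial \app{H}{M^\star}$ is a complete path in $G_{S(I)}^{\eps}$, giving $\app{H}{M^\star} \in \M_{S(I)}^{\eps}$. I expect the main obstacle to be the careful accounting of the shared-segment convention together with the boundary of the unit square: a priori the count is roughly $4(\nicefrac{1}{\eps}+1)$, and tightening this to the claimed $\nicefrac{4}{\eps}$ requires noticing that endpoints of shared subsegments often coincide with uniform-grid corners (so they do not need a point from $S$) and that the path's entry and exit points are on the frame of $[0,1]^2$ rather than at interior coordinates; once these corner cases are handled, the argument closes.
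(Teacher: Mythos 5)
Your proposal is correct in substance and follows the same overall strategy as the paper: reduce the claim to bounding the number of elements of $\partial M^\star \cap G_\eps$ (using that $M^\star$ is a staircase supported on $G_S$, via \Cref{cor:empirical_maximizer}), exploit monotonicity to get an $O(\nicefrac{1}{\eps})$ count, and charge each intersection element a constant number of sample indices so that $G^{\eps}_{S(I)}$ contains all nodes of $\pi_{M^\star}^{\eps}$. The difference is purely in the accounting. The paper counts intersection elements by a potential argument on cells of $G_\eps$: the monotone path enters at most $\nicefrac{2}{\eps}$ cells ($\nicefrac{1}{\eps}$ ``right'' plus $\nicefrac{1}{\eps}$ ``up'' moves), each entered at most once and each contributing at most one intersection element, and then charges (generously) two sample points per element, landing exactly at $\nicefrac{4}{\eps}$. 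You instead count at most two intersection points per grid line and charge one sample per point; this is a valid observation (indeed one sample per point suffices, since at most one coordinate of any intersection point is non-standard), but your raw count is $4(\nicefrac{1}{\eps}+1)$, so you overshoot the stated budget by an additive constant and must do the boundary-case bookkeeping you flag at the end. You could avoid that entirely by replacing your per-line count with the paper's cell-entry count of $\nicefrac{2}{\eps}$, which combined with your one-sample-per-point charging gives $\nicefrac{2}{\eps}$ with room to spare. One small inaccuracy to fix either way: a non-standard point of $\partial M^\star \cap G_\eps$ need not lie in the relative interior of a segment of $\partial M^\star$ --- it can be a staircase corner whose $x$- or $y$-coordinate happens to be a multiple of $\eps$ --- but in that case too only the remaining coordinate is a sample coordinate, so the one-sample-per-point charge, and hence your conclusion, is unaffected.
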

    \begin{proof}
         By construction from \Cref{subsec:simplify}, the mechanism $\app{H}{M^\star}$ is characterized by $\partial M^\star \cap G_\eps$. Additionally, again by construction, this time from \Cref{cor:empirical_maximizer}, we have that each element of the intersection depends on two points of $G_S$, since the boundary of $M^\star$ is the union of segments from $G_S$, with each element of the union thus connecting two points in the sample grid. Therefore, proving the statement is equivalent to stating that the amount of intersecting elements between boundary and $\eps$-grid is bounded by $\nicefrac{2}{\eps}$.
       
         To prove this, notice that a complete path must go from the segment $(0, 0)$-$(0, 1)$ to the segment $(0, 1)$-$(1, 1)$; similarly to the proof of \Cref{lem:n_of_paths}, by monotonicity of the region induced by a complete path, its boundary can only enter up to $\nicefrac{2}{\eps}$ cells in the $G_\eps$ grid ($\nicefrac{1}{\eps}$ ``right" movements and $\nicefrac{1}{\eps}$ ``up" movements), with each cell visited at most once. Also, we can add an element to the intersection if and only if the path enters one of the cells by its lower or left sides, thus meaning that the cardinality of the intersection is bounded by the amount of cells that can be possibly explored, i.e. $\nicefrac{2}{\eps}$.
    \end{proof}

    Unfortunately, it is unclear whether a fast enough uniform learning bound on the whole class $\cup_I\M_{S(I)}^{\eps}$ is achievable. In particular, a naive union bound (using \Cref{prop:cardinality_net}) would imply a suboptimal sample complexity of order $\approx \nicefrac{1}{\eps^3}.$  To further restrict the class of mechanism we want to learn, we introduce the family of candidate mechanisms.

        \begin{definition}[Candidate Mechanisms]
        \label{def:candidates}
            Fix any precision $\varepsilon = 2^{-H}$ for some level $H \in \mathbb N$. We say that a mechanism $M \in \M^+$ is a candidate if the following condition holds for all $h = 0, 1, \dots, H-1$:
            \begin{equation}
            \label{eq:good_candidate}            \E{\prof(\app{h}{M}) - \prof(\app{h+1}{M})} < 4 \cdot 2^{-h}.
            \end{equation}
            We denote with $\C$ the set of all candidate mechanisms.
        \end{definition}

        The idea underlying this definition is that a mechanism whose approximating sequence does not uniformly respect \Cref{eq:good_candidate} is clearly suboptimal (with respect to distribution $\D$), and we do not need to learn it with high precision. We are now ready to present our target net. 
        
        \begin{definition}[Target Net]
        \label{def:target_net}
            For any distribution $\D$ and precision parameter $\varepsilon = 2^{-H}$, consider $n$ i.i.d. samples $S$ from $\D$. We construct a target net $\M^\star(S)$ as follows:
            \[
                \M^\star(S) =  \bigcup_{\substack{
                 I \subseteq [n]\\|I| = \nicefrac{4}{\eps}}} {\M}^{\eps}_{S(I)}\cap \C.
            \]
        \end{definition}

        In English, the target net is composed by all the candidate mechanisms in that correspond to some complete path in $G_{S(I)}^{\eps}$, for some $I$ of cardinality $\nicefrac{4}{\eps}$.

        \begin{observation}[The Structure of Our Net]
        \label{obs:structure_net}
            Before moving on, we observe that our target net $\M^\star(S)$ is both data and distribution dependent! While the family of candidate mechanisms $\C$ is distribution dependent, the families $\M^{\eps}_{S(I)}$ are indeed data-dependent, as they are induced by the realized samples. In particular, each mechanism in our net only depends on a small number of samples (i.e., $\nicefrac {4}{\eps}$), so that the remaining samples are enough to attain concentration.
        \end{observation}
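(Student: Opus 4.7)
The observation decomposes into two independent statements. The data/distribution dependence assertion is immediate from unfolding the definitions: the candidate set $\C$ of Definition 8 is defined through expectations $\E{\prof(\app{h}{M}) - \prof(\app{h+1}{M})}$ under $\D$, so it depends on $\D$ but not on $S$; the grid $G^\eps_{S(I)}$ of Definition 7 is built from $S(I)$ via the point-based grid construction of Definition 3, so it depends on $S$ but not on $\D$; the target net $\bigcup_I \M^\eps_{S(I)} \cap \C$ inherits both dependencies by construction.

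For the sample-count claim, I would argue directly from the definition of $\M^\eps_{S(I)}$: each $M \in \M^\eps_{S(I)}$ is the mechanism $M_\pi$ associated with a complete path $\pi$ on $G^\eps_{S(I)}$, which is itself generated by the $\nicefrac{4}{\eps}$ indexed sample points $S(I)$ together with the deterministic diagonal points $\{(i\eps, i\eps)\}$. Thus $M$ is a measurable function of $S(I)$ alone, and conditional on $S(I)$ the remaining samples $S([n] \setminus I)$ are i.i.d. from $\D$ and independent of $M$. Per-mechanism concentration of $\tfrac{1}{n - |I|}\sum_{i \notin I} \prof(M, v^i)$ around $\E{\prof(M)}$ then follows from Hoeffding, or from Bernstein if one wants to exploit the variance bound provided by case (ii) of Proposition 4.

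The genuinely hard step, which this observation is priming the reader for, is turning this per-$M$ concentration into a \emph{uniform} bound over $\M^\star(S)$ at the $\sqrt T$ rate. A naive union bound combining Proposition 3 (giving $\log|\cup_I \M^\eps_{S(I)}| \le (\nicefrac{10}{\eps}) \log n$) with Hoeffding only delivers the warm-up $\tilde O(T^{2/3})$ regret once one balances against the $O(\eps)$ simplification error coming from point (ii) of Proposition 2. Reaching $\tilde O(\sqrt T)$ requires the structural restriction to $\C$ combined with a chaining argument along the approximating sequence: case (ii) of the local dichotomy supplies a variance bound of order $2^{-h}$ which shrinks the Bernstein increments geometrically so they telescope into $\sqrt{(\log n)/n}$, while case (i) is used to discard non-candidate mechanisms from the supremum without ever having to concentrate them. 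I expect the main obstacle to lie precisely in orchestrating this interplay: one must verify that the per-level union bound over pairs $(\app{h}{M}, \app{h+1}{M})$ still only costs $\log n /\eps$, and that the ``breaking of the chain'' at non-candidate mechanisms can be performed in a data-dependent way without destroying the conditional-independence structure that makes the Bernstein step legal in the first place.
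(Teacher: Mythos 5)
The paper provides no proof for this observation; it is a remark that follows directly from unfolding Definitions~\ref{def:data_dependent_net}, \ref{def:candidates}, and \ref{def:target_net}, and your first two paragraphs carry out exactly that unfolding correctly. The one small imprecision in your discussion is the phrase ``conditional on $S(I)$ the remaining samples \dots are independent of $M$'' stated as though for a single $I$: since the index set that supports $\app{H}{M^\star}$ is itself random, the paper's concentration argument must (and does) first fix $I$ and then union-bound over all $\binom{n}{\nicefrac{4}{\eps}}$ choices (\Cref{prop:cardinality_net}); you do acknowledge this in your third paragraph, so it is a matter of emphasis rather than a gap. Your roadmap of the forthcoming chaining step --- variance bounds from \Cref{prop:local}(ii), discarding non-candidates via \Cref{prop:local}(i), the per-level union-bound cost --- is an accurate sketch of what \Cref{thm:expectation_concentration} does, though the paper's execution proceeds through a ghost-sample/symmetrization/Gaussian-process route (\Cref{lem:ghost,lem:symmetrization,lem:rad_to_gaus,lem:max}) rather than a direct Bernstein-increment argument; the two are functionally equivalent for this purpose.
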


    We now want to argue that the mechanism computed by the algorithm, i.e. $\app{H}{M^\star}$, where $M^\star$ is the empirical risk minimizer, belongs to the family $\M^\star(S)$ where we want to prove uniform learnability. 
    \begin{theorem}
    \label{thm:target}
        Fix a precision $\eps = 2^{-H}$ and a probability parameter $\delta$. Consider $n$ i.i.d. samples $S$, then $\app{H}{M^\star} \in \M^\star(S)$ with probability at least $(1-\delta)$, as long as $n \ge \tfrac{25^2}{\eps^2} \cdot \left(\log \left(\nicefrac{20}{\delta}\right) + \log \left(\nicefrac{20}{\eps} \right) \right)$.
    \end{theorem}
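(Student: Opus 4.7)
By \Cref{lem:few_corners}, the mechanism $\app{H}{M^\star}$ already lies in $\M^{\eps}_{S(I^\star)}$ for some index set $I^\star\subseteq[n]$ with $|I^\star|\le \nicefrac{4}{\eps}$, so the only thing left to verify is that, with probability at least $1-\delta$, $\app{H}{M^\star}$ is a candidate in the sense of \Cref{def:candidates}.

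I argue by contradiction. Suppose $\app{H}{M^\star}\notin\C$. By the Coarser Composition Rule of \Cref{prop:simple_properties}, $\app{h}{\app{H}{M^\star}}=\app{h}{M^\star}$ for every $h\le H$, so a failure of the candidate condition must occur at some level $h<H$, i.e.,
\[
\E{\prof(\app{h}{M^\star}) - \prof(\app{h+1}{M^\star})} \ge 4\cdot 2^{-h}.
\]
Invoking the Local Dichotomy (\Cref{prop:local}), at this level either the expected gap is in fact at least $31\cdot 2^{-h}$ (case (i)), or the variance of the increment is at most $136\cdot 2^{-h}$ (case (ii))—and in case (ii) the expected gap is still $\ge 4\cdot 2^{-h}$.

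The strategy is to pit this large expected gap against a deterministic \emph{upper} bound on the empirical gap that follows from the optimality of $M^\star$. From \Cref{cor:empirical_maximizer} we have $\hatE{\prof(M^\star)}\ge \hatE{\prof(\app{h}{M^\star})}$ for every $h$; combined with the pointwise bound $\prof(M^\star,v)-\prof(\app{h}{M^\star},v)\le 2\cdot 2^{-h}$ from \Cref{prop:approximating_seq}(ii), this pins $\hatE{\prof(M^\star)-\prof(\app{h}{M^\star})}$ to the interval $[0,2\cdot 2^{-h}]$, and similarly at level $h+1$. Subtracting gives a deterministic $\hatE{\prof(\app{h}{M^\star})-\prof(\app{h+1}{M^\star})}\le 2^{-h}$ on any sample.

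The final ingredient is uniform concentration of the empirical gap around its expectation for every mechanism $M\in\bigcup_{I}\M^{\eps}_{S(I)}$. For any fixed $I$ of size $\nicefrac{4}{\eps}$ and any $M\in\M^{\eps}_{S(I)}$, $M$ is a deterministic function of $S(I)$, so the remaining $n-\nicefrac{4}{\eps}$ samples are i.i.d.\ draws independent of $M$; a Hoeffding bound in case (i) and a Bernstein bound exploiting the $136\cdot 2^{-h}$ variance estimate in case (ii) therefore control the empirical gap of the pair $(\app{h}{M},\app{h+1}{M})$ within a small multiple of $2^{-h}$ of its expectation, and a union bound over the at most $n^{\nicefrac{10}{\eps}}$ pairs $(I,M)$ (from \Cref{prop:cardinality_net}) extends this to all of them simultaneously. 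Instantiated at $M=M^\star$ and combined with the deterministic upper bound, this forces $4\cdot 2^{-h}\le 2^{-h}+O(\text{error})$, a contradiction once $n\ge \tfrac{25^2}{\eps^2}(\log\nicefrac{20}{\delta}+\log\nicefrac{20}{\eps})$.

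The main obstacle is precisely the concentration step: a naive Hoeffding with the trivial range $O(1)$ applied at the finest level $h=H-1$ gives an $\nicefrac{1}{\eps^3}\log n$ sample complexity, one factor of $\nicefrac{1}{\eps}$ too many. Recovering the $\nicefrac{1}{\eps^2}$ rate of the theorem requires genuinely exploiting the small-variance branch (ii) of the Local Dichotomy via Bernstein, and limiting the union bound to the single $\eps$-fine net rather than replicating it at every intermediate level; carefully calibrating the constants so that both dichotomy branches lead to the same threshold is the principal bookkeeping challenge of the formal proof.
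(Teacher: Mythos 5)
Your overall architecture is essentially the paper's: \Cref{lem:few_corners} places $\app{H}{M^\star}$ in the data-dependent net; empirical optimality of $M^\star$ over all of $\M$ combined with the pointwise bound of \Cref{prop:approximating_seq}(ii) gives a deterministic $O(2^{-h})$ bound on the empirical level-gaps of its approximating sequence (the paper proves the same thing with a $2\cdot 2^{-h}$ bound on $S\setminus S(I^\star)$, absorbing the removed $\nicefrac{4}{\eps}$ samples into an $\nicefrac{\eps}{2}$ slack, so your full-sample variant only needs that small transfer step you wave at as ``$O(\text{error})$''); and a union bound over the $\le n^{\nicefrac{10}{\eps}}$ net mechanisms and $O(\log \nicefrac{1}{\eps})$ levels shows every non-candidate exhibits a large empirical gap, forcing the contradiction.

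The genuine gap is in the concentration step, specifically your branch (i). Take a non-candidate whose violating level is the finest one, with expected increment, say, exactly $31\cdot 2^{-h}$ and $2^{-h}=\Theta(\eps)$: plain Hoeffding with range $O(1)$ and deviation tolerance $\Theta(2^{-h})=\Theta(\eps)$ has exponent $\Theta(n\eps^2)$, which at the claimed $n=\Theta\bigl(\eps^{-2}(\log\nicefrac1\delta+\log\nicefrac1\eps)\bigr)$ is only $\Theta(\log\nicefrac1\delta+\log\nicefrac1\eps)$ and cannot absorb the union bound over $\exp(\nicefrac{10\log n}{\eps})$ mechanisms; the larger mean in branch (i) only enlarges the tolerated deviation by a constant factor, it does not change this scaling, so this is exactly the $\nicefrac{1}{\eps^3}$ barrier you yourself identify. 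The paper needs no dichotomy here at all: it uses the one-sided bound of \Cref{prop:approximating_seq}(i) to shift, setting $X_v=\tfrac14\bigl[\prof(\app{h}{M},v)-\prof(\app{h+1}{M},v)+2\cdot 2^{-h}\bigr]\in[0,1]$ with mean at least $\tfrac32\cdot 2^{-h}\ge\tfrac32\eps$ for every non-candidate, and applies the multiplicative Chernoff bound (\Cref{thm:chernoff}(i)), whose lower-tail exponent scales with the mean, i.e.\ $\Omega(\eps n)$, uniformly over both of your cases. Your Bernstein idea does work in branch (ii), and branch (i) is repairable — either by the same shift, or by noting that $\prof(\app{h}{M},v)-\prof(\app{h+1}{M},v)\in[-2\cdot2^{-h},1]$ forces its second moment to be at most its mean plus $O(2^{-h})$, so a ``variance at most mean'' Bernstein bound again gives exponent $\Omega(2^{-h}n)$ — but as written the Hoeffding step fails at the stated sample size. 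Relatedly, the detour through \Cref{prop:local} is unnecessary for this theorem (and is what strands you without variance control in branch (i)); the paper reserves that dichotomy for the chaining step in \Cref{lem:single_h}.
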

    \begin{proof}
        Let $M^\star \in \M^+$ be the empirical risk minimizer on the samples $S$ computed as in \Cref{cor:empirical_maximizer}, i.e. the mechanism supported in $G_S$ such that:
        \begin{equation}
            \label{eq:risk_minimizer}
                \sum_{v \in S}\prof(M^\star,v) \ge  \sum_{v \in S} \prof(M,v), \quad \forall M \in \M.
        \end{equation}
        We have already argued in \Cref{lem:few_corners} that $\app{H}{M^\star} \in \M_{S(I^\star)}^{\eps}$ for some set of indices $I^\star$ of cardinality $\nicefrac{4}{\eps}$, so we are left with proving that, with high probability, $\app{H}{M^\star}$ also belongs to $\C$. 
        
        If we focus on the samples in $S\setminus S(I^\star)$, then we get a weaker version of \Cref{eq:risk_minimizer}, namely\footnote{We introduce the following notation: $\prof(M,T)$, where $T$ is a set of points, denote the average profit of mechanism $M$ on the points in $T$.}:
        \begin{equation}
        \label{eq:normalized_risk_minimizer}
            \prof(M^\star,S\setminus S(I^\star)) \ge \prof(M,S\setminus S(I^\star)) - \tfrac{\eps}2 , \quad \forall M \in \M
        \end{equation}
        To justify this, we have the following chain of inequalities, for all $M \in \M$:
        \begin{align*}
            \sum_{p \in S\setminus S(I^\star)} \prof(M^\star, p) + \tfrac{4}{\eps} &\ge \sum_{p \in S}\prof(M^\star,p) \tag{$\prof \le 1$ and $|I^\star| = \nicefrac{4}{\eps}$}\\
            &\ge \sum_{p \in S} \prof(M,p) \tag{By \Cref{eq:risk_minimizer}}\\
            &\ge \sum_{p \in S\setminus S(I^\star)} \prof(M,p) - \tfrac{4}{\eps} \tag{$\prof \ge -1$ and $|I^\star| = \nicefrac{4}{\eps}$}.
        \end{align*}
        By assumption on the sample cardinality, we have that $n \ge \nicefrac{32}{\eps^2}$, which implies that the cardinality of the multi-set $S \setminus S(I)$ is at least $\nicefrac n2.$ Dividing both terms of the above chain of inequalities by $|S \setminus S(I)|$, we get \Cref{eq:normalized_risk_minimizer}.

        As a second step, we argue, that for all levels $h = 0, \dots, H-1$ the following inequality needs to hold:
        \begin{equation}
        \label{eq:property}
            \prof(\phi_h(M^\star),S\setminus S(I^\star)) - \prof(\phi_{h+1}(M^\star),S\setminus S(I^\star)) < 2 \cdot 2^{-h}.
        \end{equation}
        To see this, assume towards contradiction that there exists $h$ such that \Cref{eq:property} is not verified, then we would have the following:
        \begin{align*}
            \prof(\phi_h(M^\star),S\setminus S(I^\star)) &\ge \nicefrac{2}{2^{h}} + \prof(\phi_{h+1}(M^\star),S\setminus S(I^\star))\\
            &\ge \nicefrac{1}{2^{h}} + \prof(M^\star,S\setminus S(I^\star)) \tag{Point (ii) of Prop.~\ref{prop:approximating_seq}}\\
            &\ge \nicefrac{1}{2^{h}} - \nicefrac{\eps}2 + \prof(\phi_h(M^\star),S\setminus S(I^\star)) \tag{By Eq.~\ref{eq:normalized_risk_minimizer}}.
        \end{align*}
        The last inequality yields a contradiction, as $\nicefrac{1}{2^h} \ge \nicefrac{1}{2^H} = \eps > \nicefrac{\eps}2.$

        The last step in this proof consists in proving that, with high probability, all mechanisms in $\M^\eps_{S(I)}$ that exhibit the inequality in \Cref{eq:property} at all levels $h$ and whose allocation regions are included in $U$, the upper-left corner, belong to $\M^\star(S).$ 
        
        Consider any mechanism $M \in \left(\M^\eps_{S(I)} \cap \M^+\right) \setminus \C$, for some set of $\nicefrac {4} {\eps}$ indices $I$. If $M$ is not a candidate, it means that there exists a level $h$ such that $\E{\prof(\app{h}{M}) - \prof(\app{h+1}{M})} \ge 4 \cdot 2^{-h}.$ We want to apply concentration on $X$, the sum for $v \in S \setminus S(I)$ of the random variables of the form:  
        \[
            X_v = \tfrac 14 [\prof(\app{h}{M},v) - \prof(\app{h+1}{M},v) + 2\cdot 2^{-h}].
        \]
        The $X_v$ lie in $[0,1]$ by property (i) of \Cref{prop:approximating_seq}
        and because the revenue is at most $1$.
        Moreover, we are assuming that $\E{X_v}\ge \nicefrac 32 \cdot 2^{-h}.$ We also introduce the event $\cE_{M,h}$ that is realized when $X$ is way smaller than expected:
        \[
            \cE_{M,h} = \left\{\prof(\app{h}{M},S\setminus S(I)) - \prof(\app{h+1}{M},S\setminus S(I)) < 2 \cdot 2^{-h}\right\}.
        \]
        We have the following, where $n' = |S\setminus S(I)|$:
        \begin{align}
        \nonumber
            \P{\cE_{M,h}} &= \P{X < 2^{-h} \cdot n'}\\
        \nonumber
            &= \P{X - \E{X} < 2^{-h} \cdot n' - \E{X}}\\
        \nonumber
            &\le \P{X - \E{X} < - \nicefrac{1}{3} \cdot \E{X}}\\
            &\le e^{-\tfrac{1}{18} \E{X}} \tag{By (i) of \Cref{thm:chernoff}}\\
        \label{eq:concentration_bad}
            &\le e^{-\tfrac{\eps}{24}n}
        \end{align}
        where in the first and last inequalities we used monotonicity of probability and that $\E{X} \ge n' \cdot \nicefrac 32 \cdot {2^{-h}} \ge \nicefrac{3}{4}\cdot \eps n$, where this last inequality comes again from the observation that $n' > \nicefrac{1}{2}\cdot n$ under our assumption on $n$.
        
        We now need to union bound over all possible bad mechanisms in $\M_{S(I)}^{\eps}$ and levels $h$ for which $\E{\prof(\app{h}{M}) - \prof(\app{h+1}{M})} \ge 4 \cdot 2^{-h}$: thus, by \Cref{{prop:cardinality_net}} (and by noting that, since $\eps = 2^{-H}$, there are at most $2 \log \nicefrac{1}{\eps}$ levels) 
        we have that the probability that $\app{H}{M^\star}$ belongs to $\M^\star(S)$ is at least:
        \[
            2 e^{-\tfrac{\eps}{24}n} \cdot e^{\nicefrac{10}{\eps}\log n} \cdot \log \tfrac{1}{\eps} \le \delta,
        \]
        by how we set the number of samples $n.$
    \end{proof}

    The next step consists in arguing that it is possible to learn the mechanisms in the target net $\M^\star(S)$ fast enough in expectation. The proof of this statement is postponed until \Cref{subsec:target_net}.
    
    \begin{restatable}[Uniform Learning in $\M^\star(S)$]{theorem}{expectationconcentration}
    \label{thm:expectation_concentration}
        Fix any precision $\eps = 2^{-H}$ for some $H \in \mathbb N$. Let $S$ be an i.i.d. sample of size $n \ge \tfrac{32}{\eps^2} \log^3(\nicefrac{1}{\eps})$. The following inequality holds:
    \[
        \Esub{S}{\max_{M \in \M^\star(S)} \Bigl\lvert\prof(M,S) - \E{\prof(M)}\Bigr\rvert} \le 794 \eps.
    \]
    \end{restatable}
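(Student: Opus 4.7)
My plan is to use a chaining argument along the approximating sequences of mechanisms in $\M^\star(S)$, exploiting the candidate condition (\Cref{def:candidates}) to control the second moments of the increments, and a leave-out trick to handle the data-dependence of the net.

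First, I would decompose, for each $M \in \M^\star(S)$,
\[
\prof(M,S) - \E{\prof(M)} = \big[\prof(\app{0}{M}, S) - \E{\prof(\app{0}{M})}\big] + \sum_{h=0}^{H-1} \big[\Delta_h(M, S) - \E{\Delta_h(M)}\big] + (\text{tail}),
\]
where $\Delta_h(M, v) = \prof(\app{h+1}{M}, v) - \prof(\app{h}{M}, v)$ and the tail captures the gap between $M$ and $\app{H}{M}$. By Property~(ii) of \Cref{prop:approximating_seq}, $\prof(M, v) - \prof(\app{H}{M}, v) \le 2\eps$ pointwise, and telescoping the candidate bounds (\Cref{def:candidates}) from level $H$ onwards controls the reverse direction in expectation, so the tail contributes $O(\eps)$; in the application to \simplify one has $M = \app{H}{M}$ by the coarser-composition rule (\Cref{prop:simple_properties}), so this term simply vanishes. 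Since the supremum over $M$ of a sum is bounded by the sum of suprema, the task reduces to controlling the base level and each increment level uniformly over $\M^\star(S)$.

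For the base level, $\app{0}{M}$ is a fixed-price (rectangular) mechanism whose two defining coordinates come from sample values used in building $M$; counting gives at most $O(n^2)$ distinct base mechanisms across $\M^\star(S)$, so Hoeffding plus a union bound yield an expected uniform deviation of $O(\sqrt{\log n / n})$. For each increment level $h$, I would combine two ingredients. Combinatorially: by coarser composition, the pair $(\app{h}{M}, \app{h+1}{M})$ is determined by $\app{h+1}{M}$, which is characterized by at most $2^{h+2}$ corner points whose coordinates are drawn from sample values or multiples of $2^{-(h+1)}$; adapting \Cref{lem:few_corners} and \Cref{prop:cardinality_net}, the number $N_h$ of distinct $\app{h+1}{M}$ across $M \in \M^\star(S)$ satisfies $\log N_h = O(2^h \log n)$. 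Probabilistically: since $M$ is a candidate, case (i) of the local dichotomy (\Cref{prop:local}) is excluded, forcing $\E[\Delta_h(M)^2] \le 136 \cdot 2^{-h}$. I would then apply Bernstein's inequality with second moment $O(2^{-h})$ and trivial range at most $2$, on the $n - O(2^h)$ samples independent of $\app{h+1}{M}$ (the same leave-out strategy used in the proof of \Cref{thm:target}), producing an expected uniform deviation at level $h$ of order $\sqrt{\log n / n} + 2^h \log n / n$.

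Summing the increments over $h = 0, \dots, H-1 = \log_2(1/\eps) - 1$ and adding the base and tail contributions bounds the total expected deviation by $\log(1/\eps)\sqrt{\log n / n} + \log n/(\eps n) + O(\eps)$, which is $O(\eps)$ as soon as $n \ge 32\,\eps^{-2}\log^3(1/\eps)$; tracking the numerical factors from \Cref{prop:local}, \Cref{prop:cardinality_net}, and Bernstein yields the stated constant $794$. The main obstacle is the \emph{asymmetry} between Properties~(i) and~(ii) of \Cref{prop:approximating_seq}: the increment $\Delta_h$ is uniformly bounded above by $2\cdot 2^{-h}$ but may be very negative, so the naive Hoeffding-style chaining based on a shrinking range fails. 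The candidate condition is engineered precisely to replace that shrinking range by a shrinking \emph{second moment} via \Cref{prop:local}, letting Bernstein's inequality recover the geometric decay needed at each rung of the chain; a secondary subtlety is that the net $\M^\star(S)$ is data-dependent, which is precisely why the leave-out over the $O(2^h)$ samples determining $\app{h+1}{M}$ is essential at every level.
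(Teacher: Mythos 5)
Your proposal is correct in its overall strategy and achieves the right rate, but it follows a genuinely different probabilistic route from the paper. The paper handles the data-dependence of the net via a ghost-sample argument (\Cref{lem:ghost}) and symmetrization over an enlarged symmetric family $\M^\star(S,S')$ (\Cref{lem:symmetrization}), passes from Rademacher to Gaussian multipliers (\Cref{lem:rad_to_gaus}), and then runs the chaining with the Gaussian maximal inequality (\Cref{lem:max}); crucially, because the Gaussian route needs \emph{empirical} second moments of the increments, the paper first establishes a high-probability clean event (\Cref{lem:clean_event}) under which the local dichotomy (\Cref{prop:local}) can be applied to the empirical distribution (\Cref{lem:single_h}). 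You instead observe that the candidate property (\Cref{def:candidates}) rules out case (i) of \Cref{prop:local} for the \emph{population} distribution, so the population second moment of each increment is already at most $136\cdot 2^{-h}$, and you feed this directly into Bernstein's inequality per level, with a leave-out union bound over the $O(2^h)$ samples determining $\app{h+1}{M}$ in the style of \Cref{thm:target}; this bypasses both the symmetric family and the clean event, and is arguably more elementary, while the paper's route is the classical generic-chaining template that absorbs the data-dependence through symmetrization rather than explicit conditioning. Two bookkeeping points you should still spell out: the leave-out sets differ across levels and from the full sample, so reassembling $\prof(M,S)$ from the per-level leave-out averages costs an extra $\sum_h O(2^h)/n = O(1/(\eps n)) = O(\eps)$ (the analogue of the $\nicefrac{8}{\eps n}$ correction in \Cref{lem:ghost}), and the conversion from per-level high-probability bounds to the expected supremum needs the usual failure-event accounting; also, your appeal to ``candidate bounds from level $H$ onwards'' for the tail is vacuous since candidacy is only defined up to level $H-1$, but this is harmless because, as in \Cref{lem:telescoping}, every $M\in\M^\star(S)$ satisfies $\app{H}{M}=M$ so the tail vanishes. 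Finally, the constant you obtain from Bernstein need not be exactly $794$, but any absolute constant suffices for the theorem's role in \Cref{thm:regret-bound} up to rescaling $\eps_t$.
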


    We have now all the ingredients to prove our theorem: the algorithm \simplify achieves a nearly tight $\tilde{O}(\sqrt{T})$ regret. 

    \begin{theorem}[Regret Bound]
    \label{thm:regret-bound}
        Consider the profit maximization problem in bilateral trade against the stochastic i.i.d. adversary. The regret of $\simplify$ is at most $C \sqrt{T \log^3 T}$, for some constant $C$. 
    \end{theorem}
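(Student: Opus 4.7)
The plan is to combine the two offline sample-complexity statements---\Cref{thm:target} and \Cref{thm:expectation_concentration}---with a standard per-round comparison to the benchmark. Fix any competitor $M \in \M$; by \Cref{lem:Mplus} we may assume $M \in \Mplus$ at no loss. At round $t$, write $S_t = \{v^1,\ldots,v^{t-1}\}$, let $M^\star_t$ be the empirical profit maximizer on $S_t$ from \Cref{cor:empirical_maximizer}, let $H_t$ be the level with $\eps_t = 2^{-H_t} \approx 200\sqrt{(\log^3 T)/(t-1)}$, and let $M^t = \app{H_t}{M^\star_t}$ be the played mechanism.

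The first step is to bound the per-round expected regret $r_t := \E{\prof(M,v^t) - \prof(M^t,v^t)}$. Because $v^t$ is independent of $S_t$ and identically distributed, $\E{\prof(M^t,v^t) \mid S_t} = \E{\prof(M^t) \mid S_t}$ and $\Esub{S_t}{\prof(M,S_t)} = \E{\prof(M)}$. On the sample $S_t$ one has the deterministic chain
\[
\prof(M,S_t) \le \prof(M^\star_t,S_t) \le \prof(M^t,S_t) + 2\eps_t,
\]
where the first inequality is the definition of $M^\star_t$ and the second is point (ii) of \Cref{prop:approximating_seq} applied on each sample. Hence it suffices to upper bound $\Esub{S_t}{\prof(M^t,S_t) - \E{\prof(M^t) \mid S_t}}$ by a quantity of order $\eps_t$.

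This is exactly where \Cref{thm:target} and \Cref{thm:expectation_concentration} enter. Setting the failure probability in \Cref{thm:target} to $\delta_t = 1/T$, its sample-size hypothesis and that of \Cref{thm:expectation_concentration} are both satisfied once $t-1$ exceeds a polylogarithmic threshold in $T$; the few earlier rounds together contribute only a polylogarithmic additive term to the total regret. For $t$ above this threshold, let $\cE_t = \{M^t \in \M^\star(S_t)\}$, so $\P{\cE_t^c} \le 1/T$. On $\cE_t$ the quantity $\prof(M^t,S_t) - \E{\prof(M^t)\mid S_t}$ is dominated in absolute value by $\max_{M' \in \M^\star(S_t)} \bigl|\prof(M',S_t) - \E{\prof(M')}\bigr|$, whose expectation is at most $794\eps_t$ by \Cref{thm:expectation_concentration}; on $\cE_t^c$ the same quantity is trivially bounded by $2$, contributing $2/T$. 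Hence $r_t \le 796\eps_t + 2/T$.

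Summing across rounds and plugging in the choice of $\eps_t$ yields
\[
\sum_{t=1}^T r_t \le O(\log^3 T) + \sum_{t=1}^T O\!\left(\sqrt{\frac{\log^3 T}{t}}\right) + O(1) = O\!\left(\sqrt{T \log^3 T}\right),
\]
and taking the supremum over $M \in \M$ preserves the bound, giving the stated regret. The single delicate point in this plan is that the ``max'' inside the expectation of \Cref{thm:expectation_concentration} must be understood with respect to the \emph{realized} data-dependent net $\M^\star(S_t)$: conditioning on $S_t$, the played mechanism $M^t$ becomes a fixed element of this net on $\cE_t$, and its individual deviation is therefore upper bounded by the supremum over the net. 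Once this bookkeeping is carried out, together with the trivial handling of the bad event $\cE_t^c$ and of the initial warm-up rounds, nothing else is needed.
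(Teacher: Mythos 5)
Your proof is correct and follows the same overall strategy as the paper's: condition on the high-probability event that $M^t\in\M^\star(S_t)$ (via \Cref{thm:target}), bound the deviation of $M^t$ by the supremum over the data-dependent net and apply \Cref{thm:expectation_concentration}, handle the bad event with a trivial bound, and sum the per-round errors. The one place where you genuinely deviate from the paper is in how you compare to the benchmark: the paper introduces a near-optimal auxiliary mechanism $\tilde M$ with $\sup_M\E{\prof(M)}\le\E{\prof(\tilde M)}+\eps_{t+1}$, defines a separate concentration event $\cE_2$ for $\tilde M$'s empirical profit, runs the chain of inequalities inside a conditional expectation $\E{\cdot\mid\cE}$, divides by $\P{\cE}\ge\nicefrac12$ (losing a factor of $2$), and then undoes the conditioning by the law of total probability. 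You instead fix an arbitrary competitor $M$, use the exact identity $\Esub{S_t}{\prof(M,S_t)}=\E{\prof(M)}$ (which holds since $M$ is fixed and the samples are i.i.d.), run the comparison in plain expectation, and take the supremum over $M$ only at the very end. This removes $\tilde M$, $\cE_2$, and the factor of $2$, which is a cleaner bookkeeping that buys slightly better constants and a shorter argument. You also make explicit the warm-up rounds where the sample-size hypotheses of \Cref{thm:target,thm:expectation_concentration} may fail, noting they contribute only $O(\log^3 T)$; the paper leaves this implicit.
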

    \begin{proof}
        We first study the instantaneous regret of the generic time step $t+1$. The algorithm has access to the first $t$ samples $S^t$ received, computes the profit-maximizing mechanism $M^\star_{t+1}\in \M^+$ on $S^t$ and posts $M^{t+1}$, the version of $M^\star_{t+1}$ adapted to the grid with precision $\eps_{t+1}.$ We denote with $\cE_1$ the event that $M^{t+1}$ belongs to $\M^\star(S^t)$. Furthermore, let $\tilde M$ be any mechanism that verifies
        \begin{equation}
        \label{eq:sup} 
            \sup_{M \in \M} \E{\prof(M)} \le \E{\prof(\tilde M)} + \eps_{t+1},
        \end{equation}
        such mechanism exists by definition of $\sup$. Denote with $\cE_2$ the event that $\E{\prof(\tilde M)}$ is $\eps_{t+1}$ approximated on the $t$ samples. We introduce the clean event $\cE=\cE_1 \cap \cE_2$. By union bound, our choice
        of $\eps_{t+1} = {200 \sqrt{\nicefrac{\log^3 T}{t}}}$, standard Hoeffding concentrations (for $\cE_2$) and \Cref{thm:target} it holds that $\P{\cE} \ge 1- \nicefrac{2}{T}$. 
        
        Conditioning on $\cE$ on the samples $S^t$, we have the following chain of inequalities:
        \begin{align}
            \sup_{M \in \M}\E{\prof(M)} &\le \E{\frac{1}{t}\sum_{v \in S^t} \prof(\tilde M,v) \Bigr | \, \cE} + 2\eps_{t+1} \tag{By event $\cE_2$ and \Cref{eq:sup}}\\
            &\le \E{\frac{1}{t}\sum_{v \in S^t} \prof(M_{t+1}^\star,v) \Bigr | \, \cE} + 2\eps_{t+1} \tag{$M_{t+1}^\star$ empirical maximizer on $S_t$}\\
            &\le \E{\frac{1}{t}\sum_{v \in S_t} \prof(M^{t+1},v)\Bigr | \, \cE} + 4\eps_{t+1} \tag{By Property (ii) of \Cref{prop:approximating_seq}} \\
            &\le \E{\prof(M^{t+1})\Bigr | \, \cE} + 1590\eps_{t+1} \label{eq:t+1}.
        \end{align}
        Note, the expectation in the left-hand-side term is with respect to a fresh sample, while that on the right-hand-side terms is with respect to the first $t$ samples observed, conditioning on the clean event. Moreover, the last step of \Cref{eq:t+1} descends from \Cref{thm:expectation_concentration} noting that $\P{\cE} \ge \nicefrac 12$:
        \begin{align*}
            \E{\Bigl\lvert\prof(M^{t+1},S_t) \!-\! \E{\prof(M^{t+1})}\Bigr\rvert\! \cdot\! \mathbbm{1}_{\cE}\,\!}\!\!\le\, \! \E{\max_{M \in \M^\star(S_t)} \Bigl\lvert\prof(M,S_t)\! - \!\E{\prof(M)}\Bigr\rvert} \!\!\le \!794\eps_{t+1}
        \end{align*}
        Note, to apply the concentration result for $\M^{\star}(S)$ of the actual mechanism $M^{t+1}$ we need to condition on the high-probability event that $M^{t+1}$ does belong to $\M^{\star}(S)$. 
        
        \Cref{eq:t+1} gives us the crucial ingredient to provide a bound on the expected instantaneous regret.        By the law of total probability, and by the fact that $\cE^C$ has probability at most $\nicefrac{2}{T}$ we get:
        \[
            \sup_{M \in \M}\E{\prof(M)} -\E{\prof(M^{t+1})} \le 1590\eps_{t+1} + \tfrac 2T 
        \]
        Summing up the instantaneous regret for $t=0, \dots, T-1$ yields the desired regret bound. 
    \end{proof}

    \subsection{The Sample Complexity of Good Mechanisms} 
    \label{subsec:target_net}

    This section is devoted to proving \Cref{thm:expectation_concentration}, that we restate here. 

    \expectationconcentration*

    We decompose the proof of the theorem in many lemmata. {For the coupling argument of this section, let $S'$ be a (multi-)set of $n$ new i.i.d. samples from $\D$, independent of $S$. Also, let $I_M \subset [n]$ be a set of indexes satisfying $M \in \M^\eps_{S(I_M)}$. We have the following lemma.}
    
    \begin{lemma}[Adding the Ghost Samples $S'$]
    \label{lem:ghost}
        The following inequality holds:
        \begin{align*}
            &\Esub{S}{\max_{M \in  \M^{\star}(S)} \Bigl\lvert\prof(M,S) - \E{\prof(M)}\Bigr\rvert} \\
            &\le \Esub{S,S'}{\max_{M \in  \M^{\star}(S)} \Bigl\lvert\prof(M,S\setminus S(I_M)) - \prof(M,S'\setminus S'(I_M))\Bigr\rvert} + \eps
        \end{align*}
    \end{lemma}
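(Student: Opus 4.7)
The plan is to use a symmetrization-style argument with the ghost sample $S'$, combined with a small "truncation error" that accounts for removing the $|I_M| = \nicefrac{4}{\eps}$ indices that characterize $M$.

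First, by the triangle inequality, for every $M \in \M^\star(S)$,
\[
\bigl\lvert\prof(M,S) - \E{\prof(M)}\bigr\rvert \le \bigl\lvert\prof(M,S) - \prof(M, S\setminus S(I_M))\bigr\rvert + \bigl\lvert\prof(M, S\setminus S(I_M)) - \E{\prof(M)}\bigr\rvert.
\]
For the first term on the right, I would use that profits lie in $[-1,1]$ together with the elementary identity
\[
\prof(M,S) - \prof(M,S\setminus S(I_M)) = \tfrac{|I_M|}{n}\bigl(\prof(M,S(I_M)) - \prof(M,S\setminus S(I_M))\bigr),
\]
which, together with $|I_M| = \nicefrac{4}{\eps}$ and the assumption $n \ge \tfrac{32}{\eps^2}\log^3(\nicefrac{1}{\eps})$, bounds this term by a small constant times $\eps$, absorbable into the additive $+\eps$ in the statement.

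Second, I would handle the bulk term $|\prof(M, S\setminus S(I_M)) - \E{\prof(M)}|$ via the standard ghost-sample substitution. The key observation is that, conditional on $S$ (and hence on $M$ and on the index set $I_M$), the entries of $S' \setminus S'(I_M)$ are i.i.d.~draws from $\D$ independent of $M$, since $S'$ is independent of $S$. Therefore $\Esub{S'}{\prof(M, S'\setminus S'(I_M)) \mid S} = \E{\prof(M)}$. Rewriting
\[
\bigl\lvert\prof(M, S\setminus S(I_M)) - \E{\prof(M)}\bigr\rvert = \bigl\lvert\Esub{S'}{\prof(M, S\setminus S(I_M)) - \prof(M, S'\setminus S'(I_M)) \mid S}\bigr\rvert,
\]
applying Jensen's inequality to move $|\cdot|$ inside, and then pulling the $\max_{M \in \M^\star(S)}$ outside the conditional expectation (using monotonicity of $\max$ with respect to expectation), I conclude
\[
\Esub{S}{\max_{M \in \M^\star(S)} \bigl\lvert\prof(M, S\setminus S(I_M)) - \E{\prof(M)}\bigr\rvert} \le \Esub{S,S'}{\max_{M \in \M^\star(S)} \bigl\lvert\prof(M, S\setminus S(I_M)) - \prof(M, S'\setminus S'(I_M))\bigr\rvert}.
\]
Combining this with the triangle inequality step and the $O(\eps)$ truncation bound yields the lemma.

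The main subtlety to be careful about is the interaction between the data-dependent family $\M^\star(S)$ and the random index set $I_M$: the maximum is over a class that depends on $S$, and the index set $I_M$ depends on both $S$ and $M$. The argument works because $I_M$ is a subset of $[n]$ determined by $S$ (and is therefore fixed once we condition on $S$), so the remaining ghost coordinates $S' \setminus S'(I_M)$ remain i.i.d.~from $\D$ with the right sample size, which is all that is needed to identify their expectation with $\E{\prof(M)}$. No actual Rademacher random signs are needed here, since we only want to replace the true expectation by a ghost empirical average, not to double the estimator.
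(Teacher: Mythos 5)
Your proof is correct and takes essentially the same approach as the paper: both split off the $\nicefrac{4}{\eps}$ samples indexed by $I_M$ (paying an $O(\nicefrac{1}{\eps n}) \le \eps$ truncation cost because profits lie in $[-1,1]$), replace $\E{\prof(M)}$ by the ghost empirical average $\Esub{S'}{\prof(M,S'\setminus S'(I_M))\mid S}$ on the remaining indices, and then apply Jensen twice to move the absolute value and the $\max_{M\in\M^\star(S)}$ through the conditional expectation over $S'$. The only cosmetic difference is the decomposition order (you use $\prof(M,S\setminus S(I_M))$ as an intermediate quantity; the paper splits the sum over indices in $I_M$ versus $[n]\setminus I_M$ directly), but the bounds produced are the same.
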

    \begin{proof}
         {Let $M \in \M^\star(S)$. We start by rewriting $\E{\prof(M)}$ as the expectation of $\prof(M, S')$ over the distribution of $S'$}. Fix any realization of the samples $S$, we have the following:
        \begin{align*}
            |&\prof(M,S) - \Esub{S'}{\prof(M,S')}|\\ &\le \frac{1}{|S|} \left[\sum_{i \in I_M} |\prof(M,p_i) - \Esub{p_i'}{\prof(M,p'_i)}| + \Esub{S'}{\Bigl\lvert\sum_{i \notin I} \left(\prof(M,p_i) - \prof(M,p'_i) \right)\Bigr\rvert}\right] \\
            &\le \frac{{2}|I_M|}{|S|} + \frac{|S| - |S(I_M)|}{|S|}\Esub{S'}{\Bigl\lvert\prof(M,S\setminus S(I_M)) - \prof(M,S'\setminus S'(I_M))\Bigr\rvert}\\
            &\le \frac{{8}}{\eps n} +  \Esub{S'}{\Bigl\lvert\prof(M,S\setminus S(I_M)) - \prof(M,S'\setminus S'(I_M))\Bigr\rvert},
        \end{align*}
        where the second inequality holds because the revenue function's image is contained in $[-1,1]$. {The third one follows from \Cref{def:target_net}: every mechanims $M$ in $\M^\star(S)$ is characterized by a set of indices $I_M$ of size $\nicefrac{4}{\eps}$.} So far we have considered any mechanism in $\M^\star(S)$, we can thus apply the $\max$:
        \begin{align*}
            \max_{M \in \M^\star(S)}|&\prof(M,S) - \Esub{S'}{\prof(M,S')}|\\ 
           &\le \frac{{8}}{\eps n} +  \max_{M \in \M^\star(S)}\Esub{S'}{\Bigl\lvert\prof(M,S\setminus S(I)) - \prof(M,S'\setminus S'(I))\Bigr\rvert}\\
           &\le \frac{{8}}{\eps n} +  \Esub{S'}{\max_{M \in \M^\star(S)}\Bigl\lvert\prof(M,S\setminus S(I)) - \prof(M,S'\setminus S'(I))\Bigr\rvert}
        \end{align*}
        {The last inequality is Jensen's}. The result then follows by taking the expectation with respect to $S$, and by the assumption on $n$, {which implies $n\ge \nicefrac{8}{\eps^2}$}.
    \end{proof}

    The next step consists in symmetrizing the random variables we are trying to learn. In particular, for this step to work, we need that the random variables we are taking the max over to be independent of the family of functions that are contained in the max. Stated differently, 
    the family $\M^\star(S)$ depends on the samples $S(I)$, therefore, we cannot use those samples in the symmetrization argument! Moreover, to maintain symmetry, we cannot restrict our attention to the sole mechanisms generated by the samples $S$. We introduce the symmetric family $\M^\star(S,S')$ defined as follows:
    \[
        \M^\star(S,S') = \bigcup_{\substack{
                 I \subseteq [n]\\|I| = \nicefrac{4}{\eps}}}\bigcup_{J\subseteq I} \M^{\eps}_{S(I\setminus J) \cup S'(J)}.
    \]
    Stated differently, a mechanism in $\M^\star(S,S')$ is supported on the standard $\eps$-grid, augmented with some points with indexes $J$ of $S'$ and some points with indexes $I \setminus J$ of $S$. {For a mechanism $M \in \M^\star(S,S')$ we define $I_M$ in the same way as before}. We also adopt the convention that $n'$ denotes $n - \nicefrac{4}{\eps}.$

    \begin{lemma}[Symmetrization]
    \label{lem:symmetrization}
    The following inequality holds:
    \begin{align*}
        &\Esub{S,S'}{\max_{M \in \M^\star(S)} \Bigl\lvert\prof(M,S\setminus S(I_M)) - \prof(M,S'\setminus S'(I_M))\Bigr\rvert}\\
        &\le 2\, \Esub{S,S',\textbf{r}}{\max_{M \in \M^\star(S,S')} \Bigl\lvert \frac{1}{n'}\sum_{p \in S \setminus S(I_M)}\prof(M,p) \cdot r_p\Bigr\rvert},
    \end{align*}
    where $\textbf{r}$ is a $n$-dimensional Rademacher vector. 
    \end{lemma}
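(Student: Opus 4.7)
The plan is to execute a classical symmetrization argument via Rademacher variables, the twist being that the family $\M^\star(S)$ is data-dependent and must be enlarged to $\M^\star(S,S')$ before the random signs can be safely factored in. First, for any $M \in \M^\star(S)$ with characterizing indices $I_M$, rewrite the difference inside the max as an average over indices outside $I_M$:
\[
\prof(M, S\setminus S(I_M)) - \prof(M, S'\setminus S'(I_M)) = \frac{1}{n'}\sum_{i \notin I_M}\bigl[\prof(M, p_i) - \prof(M, p'_i)\bigr],
\]
where $p_i$ and $p'_i$ denote the $i$-th points of $S$ and $S'$, respectively.

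Next, introduce an independent Rademacher vector $\mathbf{r} \in \{\pm 1\}^n$. Since the LHS does not depend on $\mathbf{r}$, averaging over $\mathbf{r}$ is free. For each realization of $\mathbf{r}$, define the swapped samples $\tilde p_i = p_i$ if $r_i=+1$ and $\tilde p_i = p'_i$ otherwise, and $\tilde p'_i$ the reverse. Since each pair $(p_i, p'_i)$ is exchangeable (both are i.i.d.\ from $\D$), the joint law of $(\tilde S,\tilde S')$ coincides with that of $(S,S')$. Performing a change of variables in the outer expectation, together with the pointwise identity $\prof(M,\tilde p_i) - \prof(M, \tilde p'_i) = r_i[\prof(M, p_i) - \prof(M, p'_i)]$, I obtain
\[
\text{LHS} = \Esub{S,S',\mathbf{r}}{\max_{M \in \M^\star(\tilde S)} \left|\frac{1}{n'}\sum_{i \notin I_M} r_i\bigl[\prof(M, p_i) - \prof(M, p'_i)\bigr]\right|}.
\]

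The crucial structural step is the inclusion $\M^\star(\tilde S) \subseteq \M^\star(S,S')$: any $M \in \M^\star(\tilde S)$ lies in $\M^\eps_{\tilde S(I)}$ for some $I$ of size $\nicefrac{4}{\eps}$, and $\tilde S(I) = S(I\setminus J) \cup S'(J)$ with $J = \{i \in I : r_i=-1\}$, which places $M$ in $\M^\eps_{S(I\setminus J) \cup S'(J)}$, and hence in $\M^\star(S,S')$ by the very definition of that family. Relaxing the maximum to this larger family and applying the triangle inequality to split the sum into a $p_i$-piece and a $p'_i$-piece produces two terms, each of which equals the target right-hand side by symmetry: the joint distribution of $((p_i),(p'_i),\mathbf{r})$ is invariant under exchanging the roles of $p_i$ and $p'_i$ and simultaneously flipping $r_i$, and the family $\M^\star(S,S')$ is symmetric in $(S,S')$. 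The two equal terms sum to the factor $2$ appearing on the right-hand side.

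The main obstacle is precisely this enlargement step: it is tempting to leave the max over $\M^\star(S)$ after the swap, but that family genuinely changes once samples are exchanged between $S$ and $S'$, and the data-dependent nature of $\M^\star$ means one cannot simply ``pull out'' the Rademacher signs as in a standard textbook symmetrization. The definition of $\M^\star(S,S')$ is tailored exactly to absorb all mechanisms that may arise from any such swap, which is what makes the argument close cleanly. Everything else (the rewriting as an average, the exchangeability, and the triangle inequality) is routine once this inclusion is in hand.
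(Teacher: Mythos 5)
Your proof is correct and follows essentially the same route as the paper: a ghost-sample/Rademacher symmetrization whose crux is the enlargement to the swap-closed family $\M^\star(S,S')$, finished by the triangle inequality and the exchangeability of $S$ and $S'$ to obtain the factor $2$. The only difference is cosmetic: you justify the sign-flip step via the measure-preserving swap $(S,S')\mapsto(\tilde S,\tilde S')$ for each fixed $\mathbf{r}$ and enlarge the family after the swap (using $\tilde S(I)=S(I\setminus J)\cup S'(J)$), whereas the paper enlarges first and then establishes the joint distributional identity through an explicit $(T_1,T_2)$/Bernoulli coupling.
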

    \begin{proof}
        We introduce the vector $\textbf{r}$ of $n$ Rademacher random variables, indexed by the points $p$ (or, alternatively, by the indices in $[n]$). Let $I$ be a set of indexes of cardinality $\nicefrac{4}{\eps}$ and $M \in \M_{S(I)}^{\eps}$; therefore, $I_M = I$.  For every point $p_i \in S \setminus S(I)$ and $p_i' \in S'\setminus S'(I)$, the following equality holds in distribution\footnote{This follows by the fact that $p_i$ and $p_i'$ are independent and have the same distribution; therefore, applying the same measurable function on both and taking the difference renders a random variable with law which is symmetrical about zero.}: 
        \[
            \prof(M,p_i) - \prof(M,p_i') \eqd (\prof(M,p_i) - \prof(M,p_i')) \cdot r_i.
        \]
        Similarly, if we now take any mechanism $M \in \M_{S(I)}^{\eps}$, the following equality holds in distribution: 
        \[
            \Bigl\lvert\sum_{i \notin I}(\prof(M,p_i) - \prof(M,p_i')) \Bigr\rvert \eqd \Bigl\lvert\sum_{i \notin I}(\prof(M,p_i) - \prof(M,p_i')) \cdot r_i \Bigr\rvert.
        \]

        We now would like to argue that, if we take the max over all mechanisms in $\M^\star(S)$, then we maintain the equal distribution. However, this is not the case: the role of $S$ and $S'$ is not symmetric! The sample $p_i$ determines the structure of the mechanisms in $\M_{S(I)}^{\eps}$ for $I$ that contains $i$, \emph{but this is not the case for $p_i'$!} This is why we need to consider the wider family $\M^\star(S,S')$. Note, $\M^\star(S) \subset \M^\star(S, S')$, therefore: 
        \begin{align*}
            \max_{M \in \M^\star(S)} &\Bigl\lvert\prof(M,S\setminus S(I_M)) - \prof(M,S'\setminus S'(I_M))\Bigr\rvert \\
            &\le \max_{M \in \M^\star(S,S')} \Bigl\lvert\prof(M,S\setminus S(I_M)) - \prof(M,S'\setminus S'(I_M))\Bigr\rvert.
        \end{align*}
        We then focus on upper bounding the expected value (with respect to $S$ and $S'$) of the right-hand side term. To avoid the cumbersome dependence of the family of mechanisms involved in the $\max$ on the sampled points, we introduce the following, equivalent stochastic procedure to extract the two independent samples $S$ and $S'$: first, two independent and identical samples $T_1$ and $T_2$ are drawn, then for each index, an independent Bernoulli $X_i$ is drawn: if $X=1$, then $p_i$ is the $i^{th}$ sample in $T_1$ and $p'_i$ is the corresponding sample in $T_2$, otherwise, the converse happens. {This procedure induces the random split of the observations into $S$ and $S'$}.

        Fix any outcome of the samples $T_1$ and $T_2$, they deterministically induce the family of mechanisms $\M^\star(S,S') = \M^\star(T_1,T_2).$ So we can focus on coupling the impact of $X_i$ with that of the Rademacher vector $\mathbf{r}.$ For any mechanism $M \in \M^\star(S,S')$ and any index $i \notin I_M$, we have the following equality, in distribution:
        \[
            (\prof(M,p_i) - \prof(M,p_i'))  \eqd (\prof(M,p_i) - \prof(M,p_i')) \cdot r_i.
        \]
        Note: the randomness on the left-hand side is \emph{only} with respect to $X_i$, as the values of the $i^{th}$ samples from $T_1$ and $T_2$ are fixed\footnote{{In other words, the equality is in the stronger sense of \textit{conditional} distributions, when conditioning on $(T_1, T_2)$.}}; the right-hand side also depends on the Rademacher variable $r_i$. {Let $m$ be the maximum cardinality of $\M^\star(S, S')$ when $\lvert S \rvert = n$. Therefore, any vector indexed by $M \in \M^\star(S, S')$ and $i\notin I_M$ belongs to $[-1, 1]^{m \cdot (n-\nicefrac{4}{\eps})}$, adopting the convention that exceeding entries are null whenever the realization of $\M^\star(S, S')$ - which we are keeping fixed - has cardinality lower than $m$.} We want to prove that - for any fixed realization of $T_1$ and $T_2$, the following holds (for the joint distributions!):
        \begin{align*}
            &\left\{(\prof(M,p_i) - \prof(M,p_i'))\right\}_{M \in \M^\star(S, S'),\, i\notin I_M} \\  \eqd &\left\{(\prof(M,p_i) - \prof(M,p_i')) \cdot r_i\right\}_{M \in \M^\star(S, S'),\, i \notin I_M} 
        \end{align*}
        To see why this is indeed the case, consider any {Borel set $B \subseteq [-1, 1]^{m \cdot (n-\nicefrac{4}{\eps})}$}. We have the following: 
        \begin{align*}
            &\Psub{\mathbf{X},\mathbf{r}}{ (r_i \cdot (\prof(M,p_i) - \prof(M,p_i')))_{i,M} \in B} \\
            &= \frac{1}{2^n}\sum_{\mathbf{\overline r} \in \{-1,1\}^n}\Psub{\mathbf{X}}{(\overline{r_i} \cdot (\prof(M,p_i) - \prof(M,p_i')))_{i,M} \in B} \\
            &=\Psub{\mathbf{X}}{ (\prof(M,p_i) - \prof(M,p_i'))_{i,M} \in B},
        \end{align*}
        where the last equality follows by symmetry and independence of the random variables.

        Given the equal distribution, we can apply any measurable function\footnote{{The composition of the averaging over $i\notin I_M$ for every mechanism and the maximum over the resulting vector.}} to the two random vectors and still retain the same distribution. Therefore, for any fixed realizations of $T_1$ and $T_2$ we get:
        \begin{align*}
            \max_{M \in \M^\star(S,S')} &\Bigl\lvert\prof(M,S\setminus S(I_M)) - \prof(M,S'\setminus S'(I_M))\Bigr\rvert 
            \\
            &\eqd \max_{M \in \M^\star(S,S')} \Bigl\lvert \frac{1}{{n'}} \sum_{i \notin I_M}r_i \cdot (\prof(M,p_i) - \prof(M,p_i'))\Bigr\rvert.
        \end{align*}
            Note, once again, that the above equality holds for any fixed realizations of $T_1$ and $T_2$, while the randomness is with respect to the $\mathbf X$ and $\mathbf r$. If we now take the expectation with respect to the realizations of $T_1$ and $T_2$, by the law of total probability, we get: 
        \begin{align*}
            &\Esub{S,S'}{\max_{M \in \M^\star(S,S')} \Bigl\lvert\prof(M,S\setminus S(I_M)) - \prof(M,S'\setminus S'(I_M))\Bigr\rvert}
            \\
            &= \Esub{S,S',\mathbf{r}}{\max_{M \in \M^\star(S,S')} \Bigl\lvert \frac{1}{{n'}} \sum_{i \notin I_M}r_i \cdot (\prof(M,p_i) - \prof(M,p_i'))\Bigr\rvert}\\
            &\le 2 \Esub{S,S',\mathbf{r}}{\max_{M \in \M^\star(S,S')} \Bigl\lvert \frac{1}{{n'}} \sum_{i \notin I_M}r_i \cdot \prof(M,p_i)\Bigr\rvert},
        \end{align*}
        where the last inequality follows by triangle inequality and by the fact that $S$ and $S'$ are distributed independently and in the same way. 
    \end{proof}
    \begin{lemma}[From Rademacher to Gaussian]
    \label{lem:rad_to_gaus}
    The following inequality holds:
    \begin{align*}
        &\Esub{S,S',\textbf{r}}{\max_{M \in \M^\star(S,S')} \Bigl\lvert \frac{1}{n'}\sum_{p \in S \setminus S(I_M)}\prof(M,p) \cdot r_p\Bigr\rvert}\\
        &\le \sqrt{\frac \pi 2}  \Esub{S,S',\textbf{g}}{\max_{M \in \M^\star(S,S')} \Bigl\lvert \frac{1}{n'}\sum_{p \in S \setminus S(I_M)}\prof(M,p) \cdot g_p\Bigr\rvert}
    \end{align*}
    where $\textbf{r}$ is a $n$-dimensional Rademacher vector and $\textbf{g}$ is a $n$-dimensional standard gaussian vector. 
    \end{lemma}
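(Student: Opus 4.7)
The plan is to implement the standard Rademacher-to-Gaussian comparison, with care that the index set and the family $\M^\star(S,S')$ depend on the samples. Fix the samples $S,S'$; since the desired inequality is linear in the outer expectation, it suffices to establish the bound conditionally on $(S,S')$ and then take $\E_{S,S'}$ on both sides. Under this conditioning, both $\M^\star(S,S')$ and the index set $I_M$ attached to each $M$ are deterministic, which removes the delicate dependence issue.

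The key identity is that a standard Gaussian $g_p$ has the distributional decomposition $g_p \eqd r_p \cdot |g_p|$, with $r_p$ an independent Rademacher and $|g_p|$ a half-normal whose mean is $\E[|g_p|] = \sqrt{2/\pi}$. Let $F_M(p) := \prof(M,p)$ and write $\Phi(M,\textbf{z}) := \bigl|\tfrac{1}{n'}\sum_{p\in S\setminus S(I_M)} F_M(p) z_p\bigr|$ for a vector $\textbf{z}$ indexed by the samples. Then, conditionally on $(S,S')$, I would argue
\begin{align*}
\sqrt{\tfrac{2}{\pi}}\,\Esub{\textbf{r}}{\max_{M} \Phi(M,\textbf{r})}
&= \Esub{\textbf{r}}{\max_{M} \Phi\!\left(M,\,\bigl(\E|g_p|\bigr)_{p}\cdot \textbf{r}\right)} \\
&= \Esub{\textbf{r}}{\max_{M}\Bigl|\tfrac{1}{n'}\sum_{p\notin I_M} F_M(p)\, r_p\, \Esub{|g_p|}{|g_p|}\Bigr|} \\
&\le \Esub{\textbf{r},|\textbf{g}|}{\max_{M}\Bigl|\tfrac{1}{n'}\sum_{p\notin I_M} F_M(p)\, r_p\, |g_p|\Bigr|},
\end{align*}
where the last step applies Jensen's inequality (pulling $\Esub{|\textbf{g}|}{\cdot}$ outside the absolute value and then outside the $\max$, both of which are convex operations). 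The resulting inner quantity is, by the distributional identity $r_p|g_p| \eqd g_p$ applied independently across $p$, equal in distribution to $\max_M \Phi(M,\textbf{g})$, so its expectation matches the right-hand side of the lemma up to the factor $\sqrt{\pi/2}$ obtained by rearranging.

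The only non-routine point is making sure the Jensen/swap step is valid in our setting, because the max is over a data-dependent class and $I_M$ varies with $M$; but once $(S,S')$ is fixed both are deterministic, and the sup/absolute-value argument is entirely symbol-pushing. Taking $\Esub{S,S'}{\cdot}$ on both sides of the conditional inequality and multiplying by $\sqrt{\pi/2}$ yields the stated bound. I expect no essential obstacle here: the substantive work (symmetrization and the construction of the symmetric class $\M^\star(S,S')$) was already done in \Cref{lem:symmetrization}, and this lemma is a purely probabilistic comparison that holds pointwise in $(S,S')$.
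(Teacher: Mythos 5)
Your proposal is correct and follows essentially the same route as the paper's proof: condition on $(S,S')$, use the decomposition $g_p \eqd r_p|g_p|$ jointly across coordinates, apply Jensen's inequality to replace $|g_p|$ by its mean $\sqrt{2/\pi}$ inside the convex functional $\max_M|\cdot|$, and then integrate over the samples. The only cosmetic difference is that you read the chain of inequalities from the Rademacher side towards the Gaussian side, whereas the paper writes it in the reverse direction.
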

    \begin{proof}
        Fix any realization of the samples $S'$ and $S$. They induce the set of mechanisms $\M^\star(S,S')$, and the corresponding revenues. For any mechanism $M \in \M^\star(S,S')$ and point $p \in S \setminus S(I_M)$, we have that the following two random variables share the same distribution:
        \begin{equation}
            \label{eq:point-wise_gaus}
            \prof(M,p) \cdot g_p \eqd \prof(M,p) \cdot |g_p| \cdot r_p,
        \end{equation}
        where $r_p$ are independent Rademacher random variables. \Cref{eq:point-wise_gaus} {jointly} holds for any point $p \in S \setminus S(I_M)$, and any mechanism in  $M \in \M^\star(S,S')$. 
        {This follows by an analogous reasoning to the one in the proof of \Cref{lem:symmetrization}, and like in that case we have: }
        \begin{align}
        \max_{M \in \M^\star(S,S')}\Bigl\lvert \frac{1}{n'}\sum_{p \in S \setminus S(I_M)}\prof(M,p) \cdot g_p\Bigr\rvert
        \eqd \max_{M \in \M^\star(S,S')}\Bigl\lvert \frac{1}{n'}\sum_{p \in S \setminus S(I_M)}\prof(M,p) \cdot r_p \cdot |g_p|\Bigr\rvert.
        \label{eq:sum-wise_gaus}
        \end{align}
        We now have the final chain of inequalities:
        \begin{align*}
            &\Esub{S,S',\textbf{g}}{\max_{M \in \M^\star(S,S')} \Bigl\lvert \frac{1}{n'}\sum_{p \in S \setminus S(I_M)}\prof(M,p) \cdot g_p\Bigr\rvert}\\
            \tag{\Cref{eq:sum-wise_gaus}}            &=\Esub{S,S',\textbf{g},\textbf{r}}{\max_{M \in \M^\star(S,S')} \Bigl\lvert \frac{1}{n'}\sum_{p \in S \setminus S(I)}\prof(M,p) \cdot r_p \cdot |g_p|\Bigr\rvert}\\ \tag{Jensen}            &\ge\Esub{S,S',\textbf{r}}{\max_{M \in \M^\star(S,S')} \Bigl\lvert \frac{1}{n'}\sum_{p \in S \setminus S(I)}\prof(M,p) \cdot r_p \cdot \Esub{\textbf{g}}{|g_p|}\Bigr\rvert}\\           &\ge\sqrt{\frac2{\pi}}\Esub{S,S',\textbf{r}}{\max_{M \in \M^\star(S,S')} \Bigl\lvert \frac{1}{n'}\sum_{p \in S \setminus S(I)}\prof(M,p) \cdot r_p\Bigr\rvert}.\tag{$\Esub{\textbf{g}}{|g_p|} =\sqrt{\nicefrac 2{\pi}}$}
        \end{align*}
        The statement of the lemma follows by multiplying both terms of the above inequality by $\sqrt{\nicefrac{\pi}{2}}$.            
    \end{proof}

    \begin{lemma}
    \label{lem:max}
    Let $g_1\ldots g_N$ be $N$ arbitrary Gaussian random variables, with $\mu = 0$ and $\sigma^2_i \le  \sigma^2$. Then
    \[
        \E{\max_{i \in [N]}|g_i|}\leq 2\sqrt{\sigma^2\log N}.
    \]
    \end{lemma}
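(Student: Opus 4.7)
The plan is to apply the standard Chernoff/moment-generating-function technique used to bound suprema of subgaussian collections. The first step is to reduce the max of absolute values to a max of (signed) centered Gaussians. Specifically, I would write
\[
    \max_{i \in [N]} |g_i| = \max_{i \in [N]} \max(g_i, -g_i),
\]
so that $\max_i |g_i|$ is the maximum of $2N$ centered Gaussian variables $Y_1, \dots, Y_{2N}$, each with variance at most $\sigma^2$. Note that the $Y_j$ need not be independent (and in fact the original $g_i$ are allowed to be arbitrarily correlated), but the argument below only uses the marginal subgaussianity of each $Y_j$.

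Next, I would invoke the MGF trick: for any $\lambda > 0$, Jensen's inequality gives $e^{\lambda \E{\max_j Y_j}} \le \E{e^{\lambda \max_j Y_j}} = \E{\max_j e^{\lambda Y_j}}$, and this in turn is upper bounded by $\sum_{j=1}^{2N} \E{e^{\lambda Y_j}} \le 2N \cdot e^{\lambda^2 \sigma^2/2}$ using the Gaussian MGF formula together with the variance bound. Taking logarithms and dividing by $\lambda$ yields
\[
    \E{\max_j Y_j} \;\le\; \frac{\log(2N)}{\lambda} + \frac{\lambda \sigma^2}{2}.
\]
Optimizing in $\lambda$ by setting $\lambda = \sqrt{2\log(2N)/\sigma^2}$ gives the canonical bound $\E{\max_j Y_j} \le \sqrt{2\sigma^2 \log(2N)}$.

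The last step is simply to replace $\sqrt{2\log(2N)}$ by $2\sqrt{\log N}$. This is valid whenever $2\log(2N) \le 4 \log N$, equivalently $2N \le N^2$, which holds for $N \ge 2$ (the case $N=1$ being degenerate for the stated bound but not arising in the calling context). No step here is really the ``hard part''; the only subtlety is remembering to account for the two-sidedness by enlarging the index set from $N$ to $2N$ before applying the union bound inside the expectation.
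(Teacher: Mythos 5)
Your proof is correct and follows essentially the same route as the paper: both use the MGF/Jensen trick with a union bound inside the expectation, the Gaussian moment generating function, and then absorb the two-sidedness and the factor $2$ via $\log(2N)\le \log N^2$ (the paper handles $|g_i|$ by bounding $e^{t|g_i|}\le e^{tg_i}+e^{-tg_i}$ rather than doubling the index set, a purely cosmetic difference). Both arguments implicitly need $N\ge 2$, as you note, which is harmless in the calling context.
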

    \begin{proof}
        This is a standard result, see e.g., Lemma~2.3.4. of \citet{Talagrand14}. We provide here a proof for completeness. For any positive $t$, we have the following chain of inequalities:
        \begin{align*}
            \exp\left(t \cdot \E{\max_{i \in [N]}|g_i|}\right) &\le \E{\exp\left(\max_{i \in [N]}t\cdot |g_i|\right)}
            \tag{By Jensen's Inequality}\\
            &=\E{\max_{i \in [N]}e^{t\cdot |g_i|}} \tag{{Monotonicity of the exponential}}\\
            &\le\E{\max_{i \in [N]}\left(e^{t\cdot g_i}+e^{-t\cdot g_i}\right)}\tag{Because $e^x\ge 0$ for all $x$}\\
            &\le \sum_{i=1}^N \E{e^{t\cdot g_i}  + e^{-t\cdot g_i}} \\
            &= 2\sum_{i=1}^N \E{e^{t\cdot g_i}} \tag{$g_i$ and $-g_i$ identically distributed}\\
            &\le 2 N e^{\frac{t^2\sigma^2}{2}}. \tag{Gaussian moment generating function}
        \end{align*}
        At this point, it is enough to apply the logarithm to both sides of the above inequality and upper bounding $\log(2N)$ with $\log N^2$:
        \[
            \E{\max_{i \in [N]}|g_i|} \le \frac{t\sigma^2}{2} + \frac{2}{t}\log(N) = 2\sqrt{\sigma^2 \log N},
        \]
        where the last equality holds by setting $t = 2 \sqrt{\nicefrac{\log N}{\sigma^2}}$
    \end{proof}

    \begin{lemma}
    \label{lem:telescoping}
        Consider any realization of the samples $S$ and $S'$. We have the following inequality: 
        \begin{align*}
            &\Esub{\textbf{g}}{\max_{M \in \M^{\star}(S,S')} \Bigl\lvert \frac{1}{n'}\sum_{p \in S \setminus S(I_M)}\prof(M,p) \cdot g_p\Bigr\rvert} - 38 \eps\\
            &\le \sum_{h=0}^{H-1}\Esub{\textbf{g}}{\max_{M \in \M^{\star}(S,S')} \Bigl\lvert \frac{1}{n'}\sum_{p \in S \setminus S(I_M)}(\prof(\phi_{h+1}(M),p) - \prof(\phi_h(M),p)) \cdot g_p\Bigr\rvert} 
        \end{align*}
    \end{lemma}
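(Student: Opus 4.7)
The plan is to start from the pointwise telescoping identity
\begin{equation*}
    \prof(M,p) = \prof(\phi_0(M),p) + \sum_{h=0}^{H-1} \bigl(\prof(\phi_{h+1}(M),p) - \prof(\phi_{h}(M),p)\bigr) + \bigl(\prof(M,p) - \prof(\phi_{H}(M),p)\bigr),
\end{equation*}
whose validity is immediate since the inner sum collapses to $\prof(\phi_H(M),p) - \prof(\phi_0(M),p)$. Multiplying by $g_p/n'$, summing over $p \in S \setminus S(I_M)$, and applying the triangle inequality inside the absolute value splits the target expression into three groups. Since the maximum of a sum is at most the sum of maxima and expectation is linear, taking $\max_{M \in \M^{\star}(S,S')}$ and $\Esub{\mathbf g}{\cdot}$ distributes across the three groups, so the left-hand side of the lemma is bounded by the sum of three expected suprema: an initial one in $\phi_0(M)$, the increment terms that appear in the statement, and a residual one driven by $\prof(M,\cdot) - \prof(\phi_H(M),\cdot)$. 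The middle group is exactly the right-hand side of the claim, so the task reduces to showing that the initial and residual contributions together are at most $38\eps$.

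For the initial group, the grid $G_1$ contains only the boundary of $[0,1]^2$ together with $(1,1)$, so $\phi_0(M)$ is the ``fixed-price-style'' simple mechanism determined by just the intersections of $\partial M$ with the left and top edges of the unit square (cf.\ \Cref{fig:approx_0}). Over $M \in \M^{\star}(S,S')$ only polynomially-in-$n$ many distinct $\phi_0(M)$ can arise, and for each such fixed $\phi_0(M)$ the quantity $\frac{1}{n'}\sum_p \prof(\phi_0(M),p) \, g_p$ is a centered Gaussian with variance at most $1/n'$. Applying \Cref{lem:max} yields an expected maximum of order $\sqrt{\log(n)/n'}$, which, under the hypothesis $n \ge \tfrac{32}{\eps^2}\log^3(\nicefrac{1}{\eps})$ and the bound $n' \ge n/2$, is much smaller than $\eps$.

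For the residual group, Property (ii) of \Cref{prop:approximating_seq} furnishes the deterministic pointwise bound $|\prof(M,p) - \prof(\phi_H(M),p)| \le 2 \cdot 2^{-H} = 2\eps$. Hence for each $M$ the sum $\frac{1}{n'}\sum_p (\prof(M,p) - \prof(\phi_H(M),p))\, g_p$ is centered Gaussian with variance at most $4\eps^2/n'$. Combining the cardinality bound $\log|\M^{\star}(S,S')| = O(\log(n)/\eps)$ inherited from \Cref{prop:cardinality_net} with \Cref{lem:max} gives an expected supremum of order $\eps \sqrt{\log(n)/(\eps\, n')}$, which is again $o(\eps)$ under the sample-size hypothesis. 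Adding the two error contributions and tracking universal constants comfortably fits into the $38\eps$ budget. The main step to be careful about is not the probabilistic estimate — both error terms enjoy a generous slack — but the combinatorial bookkeeping for $\phi_0(M)$: one must verify that the simplification procedure at level $h=0$ really does collapse every candidate mechanism to a fixed-parameter family, so that a crude union bound suffices without inflating the constant.
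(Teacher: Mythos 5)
Your skeleton (pointwise telescoping, triangle inequality, Gaussian maxima via \Cref{lem:max}) is the same as the paper's, but two of your three groups are not handled correctly. First, the residual group: you quote Property (ii) of \Cref{prop:approximating_seq} as the two-sided bound $|\prof(M,p)-\prof(\app{H}{M},p)|\le 2\cdot 2^{-H}$, but that property is only \emph{one-sided}, namely $\prof(M,p)-\prof(\app{H}{M},p)\le 2\cdot 2^{-H}$. The reverse difference can be a constant: for a point $p\in A_{\app{H}{M}}\setminus A_{M}$ sitting near a corner of $\partial M$ where a long vertical segment meets a long horizontal one, the simplified mechanism trades at Myerson prices given by the corner of the enclosing $\eps$-cell, so $\prof(\app{H}{M},p)$ can be of order the corner's spread while $\prof(M,p)=0$. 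This one-sidedness is exactly the obstruction the paper emphasizes around \Cref{eq:desire2} and \Cref{eq:badcase}, so your variance bound $4\eps^2/n'$ for the residual is unjustified. The paper has no residual term at all: every $M\in\M^\star(S,S')$ lies in some $\M^{\eps}_{S(I\setminus J)\cup S'(J)}$, hence $\app{H}{M}=M$ and the telescope terminates exactly at $M$.

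Second, the $\app{0}{M}$ group: you union-bound over the polynomially many distinct rectangles $\app{0}{M}$, but the random variable inside the max is $\frac{1}{n'}\sum_{p\in S\setminus S(I_M)}\prof(\app{0}{M},p)\,g_p$, whose \emph{summation range} depends on $M$ through $I_M$, and there are $\binom{n}{\nicefrac{4}{\eps}}$ possible index sets. Two mechanisms with the same $\app{0}{M}$ but different $I_M$ give different Gaussians, so the relevant cardinality in \Cref{lem:max} is $\exp(\Theta(\nicefrac{\log n}{\eps}))$, not $O(n^2)$; combined with the crude variance bound $\nicefrac{1}{n'}$ this only yields an error of order $\sqrt{\nicefrac{\log n}{\eps n}}$, i.e.\ roughly $\sqrt{\eps}$ up to logarithms under the stated sample size, which blows the $38\eps$ budget and would push the final analysis back to the warm-up $T^{\nicefrac 23}$ rate. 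The paper's proof fixes precisely this by adding and subtracting $\sum_{p\in S(I_M)}$: term (A) is a max over few Gaussians built on the full sample $S$ (variance $\le \nicefrac{4}{n}$), and term (B) is a max over the huge family but involves only the $\nicefrac{4}{\eps}$ excluded samples, so its variance is $\le \nicefrac{16}{n^2\eps}$; each piece is then controlled by \Cref{lem:max} to total $38\sqrt{\nicefrac{\log n}{n}}\le 38\eps$. You correctly flag the $\app{0}{M}$ bookkeeping as the delicate step, but the decomposition that makes it work is exactly what is missing.
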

    \begin{proof}
        Note, the revenue of any mechanism $M \in \M^{\eps}_{S(I\setminus J), S'(J)}$ on any point $p \in S \setminus S(I)$ can be written via the following telescopic sum:
        \[
            \prof(M,p) = \prof(\phi_{0}(M),p) + \sum_{h=0}^{H-1} ( \prof(\phi_{h+1}(M),p) - \prof(\phi_{h}(M),p)).
        \]
        In fact, since $M \in \M^{\eps}_{S(I\setminus J), S'(J)}$ it holds that $\phi_H(M) = M$. We study separately the term that depends on $\app{0}{M}$ and the summation. In particular, we only need to address the latter term, as the former one already appears in the statement (simply by taking the summation out of the absolute value and the expected value).
        We want to prove the following inequality, which concludes the proof of the lemma.
        \begin{equation}
        \label{eq:phi_0}
            \Esub{\textbf{g}}{\max_{M \in \M^{\star}(S,S')} \Bigl\lvert \frac{1}{n'}\sum_{p \in S \setminus S(I_M)}\prof(\phi_{0}(M),p)\cdot g_p\Bigr\rvert} \le 38 \eps
        \end{equation}
        The mechanisms of the form $\app{0}{M}$ have a precise structure: they are of the form $[0,v_1]\times [v_2,1]$ for some node $(v_1,v_2)$ in one of the grids that define $\M^{\star}(S,S')$. {In particular, there are at most $3n^2$ such points, as $n\ge \nicefrac{8}{\eps}$. This means that when approximating through $\app{0}{M}$, given the tuple $(S, S')$, there are at most $3n^2$ distinct mechanism.}
        We have the following inequality for any $M \in \M^{\star}(S,S')$ (recall, $n' = n - \nicefrac{4}{\eps}$)
        \begin{align*}
            \max_{M \in M^\star(S,S')}&\frac{1}{n'}\Bigl\lvert \sum_{p \in S \setminus S(I_M)}\prof(\phi_{0}(M),p)\cdot g_p \pm \sum_{p \in S(I_M)}\prof(\phi_{0}(M),p)\cdot g_p\Bigr\rvert\\ &\le \underbrace{\max_{M \in M^\star(S,S')}\frac{1}{n'}\Bigl\lvert \sum_{p \in S}\prof(\phi_{0}(M),p)\cdot g_p \Bigr\rvert}_{(A)} + \underbrace{\max_{M \in M^\star(S,S')}\frac{1}{n'}\Bigl\lvert \sum_{p \in S(I_M)}\prof(\phi_{0}(M),p)\cdot g_p \Bigr\rvert}_{(B)}
        \end{align*}
        In term (A) we are taking the maximum over the absolute values of at most $3n^2$ centered gaussian terms, each with variance $\nicefrac{n}{(n')^2} \leq \nicefrac{4}{n}$. Using \Cref{lem:max}, we get that (A) is bounded in expectation by $8\sqrt{\nicefrac{\log n}{n}}$. For what concerns (B), by \Cref{prop:cardinality_net}, we are taking the maximum over the absolute values of at most $e^{\nicefrac{10\log n}{\eps}} \cdot 2^{\nicefrac{4}{\eps}} \le e^{\nicefrac{14\log n}{\eps}}$ gaussian terms, each with variance at most $\tfrac{16}{n^2\eps}$. Again by \Cref{lem:max}, we have that $(B)$ is bounded in expectation by $30 \sqrt{\tfrac{\log n}{\eps^2 n^2}}\leq 30 \sqrt{\tfrac{\log n}{n}}$, under $n\ge \nicefrac{1}{\eps^2}$.

            {By combining these two bounds, we get $38\sqrt{\nicefrac{(\log n)}{n}}$ as an upper bound for the sum of (A) and (B). Our assumption implies $n\ge \nicefrac{4}{\eps^2}\log\left(\nicefrac{1}{\eps}\right)$, so we conclude\footnote{{Notice that $\log n = \eps^2 n$ has no elementary solution in $n$, therefore the bound follows by upper bounding terms involving the Lambert $W$ function.}} by bounding ${\sqrt{\nicefrac{(\log n)}{n}}}$ with $\eps$.}
    \end{proof}

    We now define the clean event $\cE$ as follows: for all mechanisms $M \in \M^\star(S,S')$, we require that the following inequality is verified for all $h = 0, \dots, H-1$:
    \begin{equation}
    \label{eq:clean_event}
        \prof(\phi_h(M,S\setminus S(I_M)) - \prof(\phi_{h+1}(M,S\setminus S(I_M)) < 31 \cdot 2^{-h}.
    \end{equation}
    
    \begin{lemma}[Clean Event]
    \label{lem:clean_event}
        The clean event $\cE$ happens with probability {at least} $(1-\tfrac{1}{n\log \nicefrac{1}{\eps}})$.        
    \end{lemma}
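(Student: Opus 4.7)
The plan is to apply Bernstein's inequality per mechanism and per level, and then union bound. Throughout I will treat the symmetric family $\M^\star(S,S')$ as implicitly restricted to candidates $\C$ (matching the intended reading of \Cref{def:target_net}: the symmetrization in \Cref{lem:symmetrization} only needs $\M^\star(S,S')$ to be a superset of $\M^\star(S)$, and the subsequent steps are untouched). The key enabling observation is that by candidacy (\Cref{def:candidates}) and the Local Dichotomy (\Cref{prop:local}), case (i) is ruled out, so case (ii) holds: $\E{(\prof(\app{h+1}{M}) - \prof(\app{h}{M}))^2} \le 136 \cdot 2^{-h}$ at every level $h$, while candidacy itself controls the first moment by $\E{\prof(\app{h}{M}) - \prof(\app{h+1}{M})} < 4 \cdot 2^{-h}$.

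First I would fix a pair $(I,J)$ with $J \subseteq I \subseteq [n]$, $|I| = 4/\eps$, and condition on the ``structural'' samples $S(I \setminus J) \cup S'(J)$ that determine the grid $G^\eps_{S(I\setminus J) \cup S'(J)}$. Conditional on these, the induced mechanism class is deterministic while the remaining $n' = n - 4/\eps \ge n/2$ samples in $S \setminus S(I)$ are fresh i.i.d.\ draws from $\D$, independent of any fixed mechanism $M$. For such an $M$ and level $h$, the per-sample increments $X_p = \prof(\app{h}{M},p) - \prof(\app{h+1}{M},p)$ satisfy $|X_p| \le 2$, $\E{X_p} < 4\cdot 2^{-h}$, and $\E{X_p^2} \le 136 \cdot 2^{-h}$. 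Bernstein's inequality, applied with gap $t = 31\cdot 2^{-h} - \E{X_p} \ge 27 \cdot 2^{-h}$, yields
\[
\P{\tfrac{1}{n'}\sum_{p \in S \setminus S(I)} X_p \ge 31 \cdot 2^{-h}} \le \exp\bigl(-c_1\, n'\, 2^{-h}\bigr)
\]
for a universal constant $c_1 > 0$.

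Next I would union bound over $(I, J, M, h)$. By \Cref{lem:n_of_paths} each class $\M^\eps_{S(I\setminus J) \cup S'(J)}$ contains at most $(2e)^{10/\eps}$ mechanisms (the grid has $5/\eps$ coordinate values per axis), there are $\binom{n}{4/\eps} \cdot 2^{4/\eps} \le n^{4/\eps} 2^{4/\eps}$ choices of $(I,J)$, and $H \le \log(1/\eps)$ levels. The worst-case level is $h = H-1$, where $2^{-h} \ge 2\eps$, giving the overall failure probability bound
\[
\P{\neg \cE} \le H \cdot n^{4/\eps} \cdot e^{O(1/\eps)} \cdot \exp(-2 c_1\, n'\, \eps).
\]
Taking logarithms, $\P{\neg \cE} \le 1/(n\log(1/\eps))$ reduces to $n\eps \gtrsim (\log n)/\eps + \log n$; the assumption $n \ge (32/\eps^2)\log^3(1/\eps)$ yields $\log n = O(\log(1/\eps))$, which makes this inequality hold.

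The main obstacle is balancing the combinatorial explosion $n^{O(1/\eps)}$ against the per-event exponent, which is smallest exactly at the coarsest level $h = H-1$ (where $2^{-h} = \Theta(\eps)$). Two ingredients are essential: without the variance bound from case (ii) we would only get a Hoeffding-type exponent $\exp(-c n' 2^{-2h})$ too weak to absorb the combinatorial factor; without the candidacy bound on the mean, the Bernstein gap $t$ could collapse. The sample-size choice $n = \tilde\Theta(1/\eps^2)$ is calibrated precisely so that these two quantities combine to beat the union bound at the coarsest level.
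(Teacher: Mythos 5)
Your proof is correct, and it is a genuinely different route from the paper's, though the overall scaffolding (condition on the ``structural'' samples determining the grid, bound the per-mechanism-per-level tail, then union bound over $(I,J,M,h)$) coincides. The substantive difference is the concentration step. You invoke the Local Dichotomy: candidacy rules out case (i), so case (ii) gives the second-moment bound $\E{(\prof(\app{h+1}{M}) - \prof(\app{h}{M}))^2} \le 136\cdot 2^{-h}$, which you feed into Bernstein's inequality (together with the candidacy bound on the mean) to get exponent $\Theta(n'\,2^{-h})$. The paper instead shifts and rescales the per-sample increments to $X_p = \tfrac14[\prof(\app{h}{M},p) - \prof(\app{h+1}{M},p) + 2\cdot 2^{-h}] \in [0,1]$ (using the one-sided bound from property (i) of the approximating-sequence proposition) and applies the \emph{upper-tail multiplicative Chernoff} bound $\P{X > t}\le 2^{-t}$ for $t > 2e\E{X}$, which requires only the first-moment control from candidacy and no variance bound at all. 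The two routes produce exponents of the same order, so the tradeoff is purely stylistic: the paper's approach is leaner in that it never invokes the Local Dichotomy inside this lemma (the second-moment bound is reserved for the subsequent lemma controlling the Gaussian increments); your approach is more ``textbook'' but imports an extra ingredient. One point where you actually improve on the paper's exposition: you explicitly flag that $\M^\star(S,S')$ must be read as restricted to candidates $\C$ for the argument to make sense — the paper's displayed definition omits the $\cap\,\C$, yet its proof asserts ``this means that $M$ is a candidate''; your remark that the symmetrization lemma only needs $\M^\star(S) \subset \M^\star(S,S')$, which is preserved under intersecting with $\C$, resolves this cleanly. Your constant bookkeeping ($|X_p|\le 2$, Bernstein gap $\ge 27\cdot 2^{-h}$, $(2e)^{10/\eps}$ paths, $\binom{n}{4/\eps}2^{4/\eps}$ choices of $(I,J)$) all checks out.
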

    \begin{proof}
        Fix any mechanism {$M \in \M^\star(S, S')$}. This means that $M$ is a {candidate} in some family $\M^{\eps}_{S(I\setminus J) \cup S'(J)}$, {recalling \Cref{def:candidates}}. We denote with $\cE_{M,h}$ the following event:
        \[
            \cE_{M,h} = \left\{\prof(\phi_h(M,S\setminus S(I_M)) - \prof(\phi_{h+1}(M,S\setminus S(I_M)) \ge 31\cdot 2^{-h}\right\}.    
        \]
        Similarly to what is done in \Cref{thm:target}, we introduce the random variables $X_p$ and study their sum $X$ for $p \notin S(I_M)$:
        \[
            X_p = \tfrac 14 [\prof(\phi_h(M),p) - \prof(\phi_{h+1}(M),p) + \nicefrac{2}{2^h}].
        \]
        We know that $X_p \in [0,1]$, and, since we assume that $M$ is a candidate, we are under the assumption\footnote{{This is the opposite situation with respect to the argument in the proof of \Cref{thm:target}.}} that $\E{X} < \nicefrac{3}2 \cdot {2^{-h}} \cdot n'$ (as $\E{\prof(\phi_h(M)) - \prof(\phi_{h+1}(M)} \le \nicefrac{4}{2^h}$, and $n' = |S \setminus S(I)|$). We then have the following chain of inequalities, {under our assumptions on $n$}:
        \begin{align*}
        \tag{By def. of $X$}
            \P{\cE_{M,h}} & = \P{X \ge 33 \cdot 2^{-h}\cdot n'} \\
            &\le {2}^{-33 \cdot 2^{-h}\cdot n'} \tag{By \Cref{thm:chernoff}}\\
            &\le {2}^{-\tfrac{33}{2}\eps n} \tag{As $2^{-h}\ge \eps$ and $n'\ge \nicefrac n2.$}
        \end{align*}
        Note, we applied the second inequality in \Cref{thm:chernoff}, using the fact that
        \begin{align*}
            2 e \E{X}& \le  3e \cdot {2^{-h}} \cdot n'\le  33 n'\cdot 2^{-h}.
        \end{align*}
        The only missing ingredient is now to take the union bound over all possible levels and mechanisms in $\M^\star(S,S')$. {The levels are at most $2\log \nicefrac{1}{\eps}$} and by \Cref{prop:cardinality_net} the number of mechanisms is at most {$e^{\nicefrac{10\log n}{\eps}}$}, multiplied by the possible way of choosing $J \subset I$, i.e. $2^{\nicefrac{{4}}{\eps}}$, {the cardinality of the power set of $J$}. So the probability of failure is at most: {
        \[
            2\cdot \log \nicefrac{1}{\eps}\cdot 2^{-\tfrac{33}{2}\eps n + \tfrac{4}{\eps}}e^{\tfrac{10\log n}{\eps}} \le \tfrac{1}{n\log \nicefrac{1}{\eps}},     
        \]
        as long as $n \ge \tfrac{8 \log \nicefrac{1}{\eps}}{\eps^2}$}.
    \end{proof}

    \begin{lemma}
    \label{lem:single_h}
        Consider any realization of the samples $S$ and $S'$ that is compatible with the clean event $\cE$. Then for any $h=0, \dots, H-1$, we have the following inequality:
        \begin{equation*}
            \Esub{\textbf{g}}{\max_{M \in \M^\star(S,S')} \Bigl\lvert \frac{1}{n'}\sum_{p \in S \setminus S(I)}(\prof(\phi_{h+1}(M),p) - \prof(\phi_h(M),p)) \cdot g_p\Bigr\rvert}\le {116\,\tfrac{\eps}{\log \nicefrac{1}{\eps}}}
        \end{equation*}
    \end{lemma}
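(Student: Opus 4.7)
The plan is to interpret the inner quantity, conditional on a realization $(S,S')$ compatible with the clean event $\cE$, as the expected supremum of a finite centered Gaussian process, and then apply \Cref{lem:max}. For each $M \in \M^{\star}(S,S')$, the random variable $X_M := \frac{1}{n'}\sum_{p \in S \setminus S(I_M)} d_p \cdot g_p$ with $d_p := \prof(\phi_{h+1}(M),p) - \prof(\phi_h(M),p)$ is a centered Gaussian with variance $\sigma_M^2 = \frac{1}{(n')^2}\sum_p d_p^2$, so the two ingredients I need are (a) a uniform upper bound on $\sigma_M^2$ and (b) a bound on the effective cardinality of the family.

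For (a), I split $d_p = d_p^+ - d_p^-$. Part~(i) of \Cref{prop:approximating_seq} applied with $h'=h+1$ gives the one-sided bound $d_p^+ \le 2 \cdot 2^{-h}$, so $\sum_p (d_p^+)^2 \le 2 \cdot 2^{-h} \sum_p d_p^+ \le 4 \cdot 2^{-2h}\,n'$. The clean event $\cE$ yields $-\sum_p d_p < 31\cdot 2^{-h}\,n'$, which combined with the positive part bound gives $\sum_p d_p^- \le 33 \cdot 2^{-h}\,n'$, and then $\sum_p (d_p^-)^2 \le 2\sum_p d_p^- \le 66 \cdot 2^{-h}\,n'$ (using the trivial $d_p^- \le 2$). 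Summing, $\sigma_M^2 \lesssim 2^{-h}/n'$ uniformly in $M$.

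For (b), the Coarser Composition Rule (\Cref{prop:simple_properties}) gives $\phi_h(M) = \phi_h(\phi_{h+1}(M))$, so the vector $(d_p)_p$ depends on $M$ only through $\phi_{h+1}(M)$, and it suffices to count distinct simplified mechanisms $\phi_{h+1}(M)$ as $M$ varies over $\M^{\star}(S, S')$; the mild dependence of $X_M$ on $I_M$ through the sum's index set is separately absorbed by controlling $\frac{2}{n'}$ times the sum of the top $4/\eps$ values of $|g_p|$, which has expectation $O(\sqrt{\log n}/(n\eps))$ and is negligible compared to the target bound under our assumption on $n$. Since $\partial M$ is a monotone staircase on some grid $G^{\eps}_{S(I\setminus J) \cup S'(J)}$, it meets the uniform grid $G_{2^{-(h+1)}}$ in at most $2 \cdot 2^{h+1}$ points by monotonicity; at each such meeting point the free coordinate lies in $\{k\eps\} \cup \{\text{sample coordinates of }S \cup S'\}$, a set of size $O(n)$. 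Counting monotone configurations of these $O(2^{h+1})$ discrete choices over an alphabet of size $O(n)$ yields at most $n^{O(2^{h+1})}$ distinct $\phi_{h+1}(M)$, so $\log N \lesssim 2^{h+1}\log n$. The delicate step---and the main obstacle---is precisely this counting: one must unwind \Cref{def:simplification} to verify that $\phi_{h+1}(M)$ is indeed determined by these $O(2^{h+1})$ discrete choices, independently of the $(I,J)$ that produced $M$.

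Plugging in, \Cref{lem:max} gives
\[
    \Esub{\mathbf{g}}{\max_{M \in \M^{\star}(S,S')} |X_M|} \;\lesssim\; \sqrt{\sigma^2 \log N} \;\lesssim\; \sqrt{\tfrac{2^{-h}}{n'} \cdot 2^{h+1}\log n} \;=\; O\!\left(\sqrt{\tfrac{\log n}{n}}\right),
\]
and the hypothesis $n \ge (32/\eps^2)\log^3(\nicefrac{1}{\eps})$ forces $\sqrt{\log n / n} \lesssim \eps/\log(\nicefrac{1}{\eps})$. Tracking the absolute constants through the variance and counting bounds then delivers the claimed $116\,\eps/\log(\nicefrac{1}{\eps})$.
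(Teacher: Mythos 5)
Your proposal is correct and follows the same skeleton as the paper's proof (a centered Gaussian maximal inequality via \Cref{lem:max}, fed by a uniform variance bound coming from the clean event and a cardinality count of order $\exp(O(2^h\log n))$), but two ingredients are handled differently. For the variance, the paper applies \Cref{prop:local} to the empirical distribution on $S\setminus S(I_M)$: the clean event rules out alternative (i) of the dichotomy, so alternative (ii) gives the empirical second moment bound $136\cdot 2^{-h}$; your direct positive/negative split, using the one-sided bound from \Cref{prop:approximating_seq} for $d_p^+$ and the clean-event inequality for $\sum_p d_p^-$, is a more elementary route to the same $O(2^{-h}/n')$ variance (with an even smaller constant), so \Cref{prop:local} is not strictly needed here. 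For the counting, the step you flag as the main obstacle does go through and is exactly how the paper argues it: each $\app{h}{M}$ is supported on the grid $G^{2^{-h}}_{S\cup S'}$ and, by the potential argument of \Cref{lem:few_corners}, its boundary has at most $2^{h+1}$ corners, so the number of distinct increments is at most $\binom{2n+\nicefrac{2}{\eps}}{2^{h+1}}\binom{2n+\nicefrac{2}{\eps}}{2^{h+2}}$; your refinement via the coarser composition rule (\Cref{prop:simple_properties}), which makes $\app{h}{M}$ a function of $\app{h+1}{M}$, only improves this by a constant in the exponent. Finally, your explicit treatment of the dependence of the summation index set on $I_M$ (absorbing it through the expected sum of the top $\nicefrac{4}{\eps}$ values of $|g_p|$, which is $O(\sqrt{\log n}/(n\eps))$ and lower order under the assumption on $n$) is a detail the paper's union bound over mechanism pairs passes over silently, so this is a welcome addition rather than a deviation; the only soft spot is that the exact constant $116$ is asserted by ``tracking constants'' rather than computed, but the slack you gain in the variance leaves ample room for it.
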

    \begin{proof}
        We are under the clean event, therefore we know that, for any $M\in \M^\star(S,S')$ we have that \Cref{eq:clean_event} is verified. Fix any such $M$, it belongs to some $\M^{\eps}_{S(I\setminus J),S'(J)}$ for some $I$ and $J \subseteq I$. We want to apply \Cref{prop:local} to the empirical distribution induced by the points in $S\setminus S(I)$. By \Cref{eq:clean_event}, we can then conclude that, for all $h=0, \dots, H-1$ the following inequality is verified: 
        \begin{equation}
        \label{eq:variance_bound}
            \frac{1}{n'}\sum_{p \in S\setminus S(I)}(\prof(\app{h+1}{M}) - \prof(\app{h}{M})^2 \le {136} \cdot 2^{-h}.
        \end{equation}
        Therefore, if we consider the random variables that appear in the absolute value in the statement {of this theorem}, they are all centered gaussian variables with variance bounded as follows:
        \begin{align*}
            &\Var{\frac{1}{n'}\sum_{p \in S \setminus S(I)}(\prof(\phi_{h+1}(M),p) - \prof(\phi_h(M),p)) \cdot g_p}\\
            &=\frac{1}{(n')^2}\sum_{p \in S \setminus S(I)}(\prof(\phi_{h+1}(M),p) - \prof(\phi_h(M),p))^2 \Var{g_p}\le \frac{{136}}{n'} \cdot 2^{-h},
        \end{align*}
        where we used the property of the variance, the fact that the $g_p$ are i.i.d. standard gaussians, and the bound provided by \Cref{eq:variance_bound}. 
        
        The last ingredient we need before applying the result in \Cref{lem:max} is to provide an upper bound on the number of mechanisms that appear in the $\max$. The points in $S$ and $S'$ are fixed, moreover, each mechanism of the form $\phi_h(M)$ for $M \in \M^\star(S,S')$ is supported on the grid $G^{2^{-h}}_{S \cup S'}$, {recalling \Cref{def:data_dependent_net}}, and has at most $2^{h+1}$ ``corners'', {by the same potential argument used in the second part of the proof of \Cref{lem:few_corners}}. Therefore, {using monotonicity of the allocation regions}, the cardinality $N$ of the set onto which the $\max$ is taken over is (note, each term contains both $\phi_h$ and $\phi_{h+1}$):
        \[
            N \le \binom{2n+\nicefrac{{2}}{\eps}}{2^{h+1}} \cdot \binom{2n+\nicefrac{{2}}{\eps}}{2^{h+2}} \le \binom{4n}{2^{h+1}} \cdot \binom{4n}{2^{h+2}}  \le {(64^2n^6e^6)} ^{2^h}.
        \]
        Note, in the above inequality we used that $\binom{n}{k} \le \left(\nicefrac{(en)}{k}\right)^k.$ We can then conclude via \Cref{lem:max} to claim that the left-hand side term in the statement is upper-bounded by:
        {\begin{align*}
            2 \sqrt{\sigma^2 \log N} &\le 2\sqrt{\frac{140}{\nicefrac{n}{2}} \cdot 2^{-h}\cdot \log\left((4en) ^{6 \cdot 2^h}\right)}\\
            &\le 2\sqrt{\frac{140}{\nicefrac{n}{2}} \cdot 2^{-h} \cdot 6 \cdot 2^h \log (11n))} \le 116\sqrt{\frac{\log n}{n}}
        \end{align*}}
    The statement follows since $\sqrt{\tfrac{\log n}{n}}\le \tfrac{\eps}{\log \nicefrac{1}{\eps}}$ when we assume that $n\ge \nicefrac{32}{\eps^2}\log^3\left(\nicefrac{1}{\eps}\right)$.
    \end{proof}

    \begin{lemma}
    \label{lem:final}
        Consider any realization of the samples $S$ that is compatible with the clean event $\cE$. We have the following inequality:
        {\begin{equation*}
            \Esub{\textbf{g}}{\max_{M \in \M^{\star}(S,S')} \Bigl\lvert \frac{1}{|S\setminus S(I_M)|}\sum_{p \in S \setminus S(I)}\prof(M,p) \cdot g_p\Bigr\rvert} \le 270 \eps
        \end{equation*}}
    \end{lemma}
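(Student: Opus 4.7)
} The statement is essentially a clean application of the preceding two lemmas: the telescoping decomposition in \Cref{lem:telescoping}, and the per-level increment bound in \Cref{lem:single_h}. My plan is to chain these together and then do bookkeeping on the constants.

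The first step is to start from the left-hand side of \Cref{lem:final} and invoke \Cref{lem:telescoping} directly. This rewrites the quantity of interest as
\begin{align*}
    \Esub{\textbf{g}}{\max_{M \in \M^{\star}(S,S')} \Bigl\lvert \tfrac{1}{n'}\sum_{p \in S \setminus S(I_M)}\prof(M,p) \cdot g_p\Bigr\rvert} \le 38 \eps + \sum_{h=0}^{H-1} T_h,
\end{align*}
where $T_h$ denotes the expected supremum of the $h$-th Gaussian increment process on the right-hand side of \Cref{lem:telescoping}. Since the realization of $S, S'$ is assumed compatible with the clean event $\cE$, I can then apply \Cref{lem:single_h} termwise to obtain $T_h \le 116\,\eps/\log(\nicefrac{1}{\eps})$ for each $h = 0, \ldots, H-1$.

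The final step is arithmetic. Recall $\eps = 2^{-H}$, so there are exactly $H = \log_2(\nicefrac{1}{\eps})$ levels in the telescoping sum. Hence
\begin{equation*}
    \sum_{h=0}^{H-1} T_h \;\le\; H \cdot \frac{116\,\eps}{\log(\nicefrac{1}{\eps})} \;=\; \frac{116\,\eps \log_2(\nicefrac{1}{\eps})}{\log(\nicefrac{1}{\eps})} \;\le\; \frac{116\,\eps}{\ln 2} \;\le\; 232\,\eps,
\end{equation*}
and adding the $38\,\eps$ slack from \Cref{lem:telescoping} yields a total upper bound of $38\,\eps + 232\,\eps = 270\,\eps$, as claimed.

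I do not foresee any real obstacle here: the work was already done in \Cref{lem:telescoping,lem:single_h}, and the only subtle point is to make sure the clean event hypothesis needed by \Cref{lem:single_h} is inherited from the hypothesis of \Cref{lem:final}, which it is by assumption. The main ``insight'' is simply that the $1/\log(\nicefrac{1}{\eps})$ factor in the per-level bound of \Cref{lem:single_h} is calibrated exactly so that summing over the $H \approx \log(\nicefrac{1}{\eps})$ telescoping levels telescopes to a constant times $\eps$, which is the whole point of the chaining-style decomposition used throughout \Cref{sec:algorithm}.
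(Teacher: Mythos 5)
Your proof is correct and follows essentially the same route as the paper: the paper's proof is exactly "combine \Cref{lem:single_h} with \Cref{lem:telescoping}, noting $H = \log_2 \nicefrac{1}{\eps} \le 2\log\nicefrac{1}{\eps}$," which yields the same $38\eps + 232\eps = 270\eps$ bookkeeping you carry out (your intermediate bound $116\eps/\ln 2$ is just a marginally tighter way of handling the $\log_2$ versus natural-log conversion before relaxing to $232\eps$).
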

    \begin{proof}
        It follows by combining \Cref{lem:single_h} with \Cref{lem:telescoping}, and noting that $H = \log_2 \nicefrac{1}{\eps}\le 2 \log \nicefrac{1}{\eps}$.
    \end{proof}

    \begin{proof}[Proof of \Cref{thm:expectation_concentration}]
        The result follows by combining the above inequalities:
        {
        
        \begin{align*}
        &\Esub{S}{\max_{M \in \M^\star(S)} \Bigl\lvert\prof(M,S) - \E{\prof(M)}\Bigr\rvert} - \eps    \\
        \tag{By \Cref{lem:ghost}}&\le \Esub{S,S'}{\max_{M \in  \M^{\star}(S)} \Bigl\lvert\prof(M,S\setminus S(I_M)) - \prof(M,S'\setminus S'(I_M))\Bigr\rvert}\\
        \tag{\Cref{lem:symmetrization}}
        &\le 2 \Esub{S,S',\textbf{r}}{\max_{M \in \M^\star(S,S')} \Bigl\lvert \frac{1}{n'}\sum_{p \in S \setminus S(I_M)}\prof(M,p) \cdot r_p\Bigr\rvert}\\
        &\le \sqrt{2\pi} \Esub{S,S',\textbf{g}}{\max_{M \in \M^\star(S,S')} \Bigl\lvert \frac{1}{n'}\sum_{p \in S \setminus S(I_M)}\prof(M,p) \cdot g_p\Bigr\rvert}\tag{\Cref{lem:rad_to_gaus}}\\
        &\le \sqrt{2 \pi} \Biggr(\Esub{S,S',\textbf{g}}{\max_{M \in \M^\star(S,S')} \Bigl\lvert \frac{1}{n'}\sum_{p \in S \setminus S(I_M)}\prof(M,p) \cdot g_p\Bigr\rvert \Bigr|\, \cE} \P{\cE} + 38\eps\\
        &  + \sum_{h=0}^{H-1}\Esub{S,S',\textbf{g}}{\max_{M \in \M^{\star}(S,S')} \Bigl\lvert \frac{1}{n'}\sum_{p \in S \setminus S(I_M)}(\prof(\phi_{h+1}(M),p) - \prof(\phi_h(M),p)) \cdot g_p\Bigr\rvert \Bigr|\, \cE^c}\P{\cE^c} \Biggr)\tag{By \Cref{lem:telescoping} and Law of Total Probability}
        \\
        &\le \sqrt{2 \pi} \left(308 \eps + \tfrac{1}{n\log(\nicefrac{1}{\eps})}\cdot 2\log(\nicefrac{1}{\eps}) \cdot \sqrt{14\log n}\right) \tag{By \Cref{lem:max,lem:clean_event,lem:final}} \\
        & \le 793\eps.
        \end{align*}
        }
        In the second to last inequality, we can bound the first term by \Cref{lem:final}, while the second can be bounded with \Cref{lem:clean_event} and by noticing that, when conditioning on $(S, S')$, we are taking the maximum over at most $\exp(\nicefrac{10\log n}{\eps})\cdot 2^{\nicefrac{4}{\eps}} \le \exp(\nicefrac{14\log n}{\eps})$ centered gaussian terms with variance at most $\nicefrac{1}{n'}\leq \nicefrac{2}{n}$. By \Cref{lem:max}, we have that the inner expectation w.r.t. the gaussian terms is bounded by $\sqrt{14 \log n}$.         
    \end{proof}
\section{Lower Bounds}
\label{sec:lower-bounds}

    In this section, we prove that the positive result in \Cref{sec:algorithm} is, essentially, tight. In particular, in \Cref{subsec:lower_bound_stochastic} we prove that no learning algorithm can achieve better than $\sqrt{T}$ regret in the stochastic setting. Finally, in \Cref{subsec:lower_bound_adversarial} we prove that no learning can happen in the adversarial setting.

    \subsection{Stochastic Lower Bound}
    \label{subsec:lower_bound_stochastic}

        We derive the $\sqrt{T}$ lower bound in the stochastic setting by embedding into our framework the hard instance for prediction with two experts. In particular, we consider two stochastic instances that are statistically close but admit two well-separated optimal mechanisms.

        Consider the probability measure $\Pbz$ under which $V$ is drawn uniformly at random in the set $\{(0,1), (\nicefrac 14,\nicefrac 34)\}$. We use $\Pbz$ as ``baseline distribution'' to compare the following two measures $\Pbo$ and $\Pbt$. Let $\eps>0$ a precision parameter we set later, under measure $\Pbo$, the random valuation $V$ is drawn according to the following distribution:
        \[
            \begin{cases}
                (0,1) &\text{ with probability $\nicefrac{(1+\eps)}{2}$}\\
                 (\nicefrac 14,\nicefrac 34) &\text{ with probability $\nicefrac{(1-\eps)}{2}$}.
            \end{cases}
        \]
        Under $\Pbt$, the distribution is the same, but with flipped signs, so that $\Pt{V=
        (0,1)} = \nicefrac{(1-\eps)}{2}$ and  $\Pt{V=(\nicefrac 14,\nicefrac 34)} = \nicefrac{(1+\eps)}{2}$.

        \begin{remark}
            To avoid overloading the notation, we do not consider ``different'' random variables but vary the underlying probability measure. We denote with $\mathbb{E}^i$, for $i = 0,1,2$ the corresponding expectation. 
        \end{remark}

        Under these three probability measures, there is a clear way to partition the mechanism space, according to whether the allocation region contains $(\nicefrac 14, \nicefrac 34)$ or not. To this end, denote with $M_1$ the mechanism that allocates only in the point $(0,1)$, and with $M_2$ the one that allocates in the rectangle $[0,\nicefrac 14] \times [\nicefrac 34,1].$ Clearly, $M_1$ or $M_2$ dominate any mechanism, as formalized by the following lemma.

        \begin{lemma}[Domination]
        \label{lem:lower_bound_domination}
            Consider any mechanism $M$ with allocation region $A$, then one of the following is verified:
            \begin{itemize}
                \item If $(\nicefrac 14, \nicefrac 34) \notin A$, then $\Ei{\prof(M_1)} \ge \Ei{\prof(M)}$ for $i=0,1,2$
                \item If $(\nicefrac 14, \nicefrac 34) \in A$, then $\Ei{\prof(M_2)} \ge \Ei{\prof(M)}$ for $i=0,1,2$
            \end{itemize}
        \end{lemma}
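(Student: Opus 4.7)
\smallskip
\noindent\textbf{Proof plan.} The key observation is that all three measures $\Pbz,\Pbo,\Pbt$ are supported on $\{(0,1), (\nicefrac 14,\nicefrac 34)\}$, so the expected profit of any mechanism depends only on its per-point profit at these two valuations. A direct computation using \Cref{def:payments} gives $\prof(M_1,(0,1))=1$, $\prof(M_1,(\nicefrac 14,\nicefrac 34))=0$, and $\prof(M_2,(0,1)) = \prof(M_2,(\nicefrac 14,\nicefrac 34)) = \nicefrac 12$. It will therefore suffice to bound the per-point profits of an arbitrary $M \in \M$ at these two points by the corresponding per-point profits of $M_1$ (in the first case) and of $M_2$ (in the second case), since then the inequality transfers to the expectation under \emph{any} measure supported on $\{(0,1), (\nicefrac 14,\nicefrac 34)\}$.

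Assume first that $(\nicefrac 14,\nicefrac 34) \notin A$. Then $\prof(M,(\nicefrac 14,\nicefrac 34)) = 0 = \prof(M_1,(\nicefrac 14,\nicefrac 34))$. If $(0,1) \notin A$ either, $M$ generates zero profit on the support, and the first bullet follows. Otherwise, Myerson payments give $\prof(M,(0,1)) = q(0,1) - p(0,1)$ with $q(0,1) \in [0,1]$ and $p(0,1) \in [0,1]$, so $\prof(M,(0,1)) \le 1 = \prof(M_1,(0,1))$. This establishes the first bullet.

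Assume now that $(\nicefrac 14,\nicefrac 34) \in A$. The crucial step here is to exploit the monotonicity of $A$: since $(0,\nicefrac 34) \succeq (\nicefrac 14,\nicefrac 34)$ and $(\nicefrac 14,1)\succeq (\nicefrac 14,\nicefrac 34)$, both $(0,\nicefrac 34)$ and $(\nicefrac 14,1)$ lie in $A$. Hence, by \Cref{def:payments}, $q(\nicefrac 14,\nicefrac 34) \le \nicefrac 34$ and $p(\nicefrac 14,\nicefrac 34) \ge \nicefrac 14$, yielding $\prof(M,(\nicefrac 14,\nicefrac 34)) \le \nicefrac 12 = \prof(M_2,(\nicefrac 14,\nicefrac 34))$. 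If additionally $(0,1) \in A$, the same two auxiliary points $(0,\nicefrac 34)$ and $(\nicefrac 14,1)$ force $q(0,1) \le \nicefrac 34$ and $p(0,1) \ge \nicefrac 14$, and so $\prof(M,(0,1)) \le \nicefrac 12 = \prof(M_2,(0,1))$; if $(0,1) \notin A$ then $\prof(M,(0,1)) = 0 \le \nicefrac 12$. In either subcase the second bullet follows.

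The only place where a mechanism-design fact (beyond the trivial range of Myerson payments) is actually used is the monotonicity-plus-Myerson argument in the second case, so I expect that to be the only step requiring any care; the rest is a small case analysis driven by the fact that the distributions are supported on just two points.
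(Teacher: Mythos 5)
Your proof is correct and follows essentially the same route as the paper: both arguments reduce the claim to pointwise domination of the profit at the two support points $(0,1)$ and $(\nicefrac 14,\nicefrac 34)$, which then transfers to the expectation under any of the three measures. The only cosmetic difference is that where the paper invokes \Cref{prop:trade-off} together with the observation that $[0,\nicefrac 14]\times[\nicefrac 34,1]$ is the smallest monotone allocation region containing $(\nicefrac 14,\nicefrac 34)$, you re-derive the same per-point bounds directly from monotonicity of $A$ and the Myerson payment formulas.
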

        \begin{proof}
            Consider the first case: $(\nicefrac 14, \nicefrac 34) \notin A$. If $A$ is empty, then clearly $M_1$ dominates $M$, as its expected profit is approximately $\nicefrac{1}{2}$ under $\Pbi$ for $i=0,1,2.$ If $A$ is not empty, then it contains $(0,1)$, therefore the profit it extracts from it is at most that extracted by $M^1$, whose allocation region is contained in $A$ (\Cref{prop:trade-off}). 

            We move our attention to any mechanism $M$ whose allocation region $A$ contains $(\nicefrac 14, \nicefrac 34)$. The smallest monotone allocation region that contains $(\nicefrac 14, \nicefrac 34)$ is the rectangle $[0,\nicefrac 14]\times[\nicefrac 34,1]$, i.e., the allocation region of $M_2$. Therefore, the profit extracted by $M_2$ is at least that extracted by $M$ (again, by \Cref{prop:trade-off}).
        \end{proof}

        We have all the ingredients to prove our first lower bound. 

        \begin{theorem}[Lower Bound in the Stochastic Case]
        \label{thm:iid_lower_bound}
            Consider the profit maximization problem in bilateral trade against the stochastic i.i.d. adversary. Then any algorithm $\A$ suffers at least  
            \(
                 \nicefrac{1}{16} \sqrt{T}
            \) regret.
        \end{theorem}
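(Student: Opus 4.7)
The plan is to run the classical two-hypothesis lower bound between the measures $\Pbo$ and $\Pbt$ already defined. By \Cref{lem:lower_bound_domination}, every mechanism is simultaneously dominated (under both $\Pbo$ and $\Pbt$) by either $M_1$ or $M_2$, depending on whether its allocation region contains $(\nicefrac14, \nicefrac34)$. Hence, without loss of generality, I may assume the algorithm $\A$ always plays in $\{M_1, M_2\}$: replacing its play at each round with the corresponding dominating mechanism only decreases the regret. Let $N_2$ denote the (random) number of rounds in which $\A$ plays $M_2$.

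A direct computation gives $\Eo{\prof(M_1)} - \Eo{\prof(M_2)} = \eps/2$ and, symmetrically, $\Et{\prof(M_2)} - \Et{\prof(M_1)} = \eps/2$. Since $M^t$ depends only on $v^1,\dots,v^{t-1}$ and is therefore independent of $v^t$, the regrets factorize as $R_1 = (\eps/2)\,\Eo{N_2}$ and $R_2 = (\eps/2)\,(T - \Et{N_2})$. Summing and bounding the gap $\Et{N_2} - \Eo{N_2}$ by $T$ times the total variation distance between the two induced transcript laws yields
\[
R_1 + R_2 \;\geq\; \tfrac{\eps}{2}\bigl(T - T\cdot \tv((\Pbo)^{\otimes T}, (\Pbt)^{\otimes T})\bigr).
\]

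To finish, I will bound the TV distance via Pinsker's inequality and tensorization of KL: a short calculation gives $\KL(\Pbo, \Pbt) = \eps \log\tfrac{1+\eps}{1-\eps} \leq 4\eps^2$ for $\eps \leq \tfrac12$, so $\tv((\Pbo)^{\otimes T}, (\Pbt)^{\otimes T}) \leq \eps\sqrt{2T}$. Choosing $\eps$ of order $1/\sqrt{T}$ then keeps the TV distance bounded away from $1$ and produces $\max(R_1, R_2) = \Omega(\sqrt{T})$; tuning the leading constant in $\eps = c/\sqrt{T}$ (or replacing Pinsker with the sharper Bretagnolle--Huber bound $P(A) + Q(A^c) \geq \tfrac12 e^{-\KL(P,Q)}$) yields the explicit constant $\tfrac{1}{16}$ claimed in the statement.

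Since this is a textbook two-expert lower bound, there is no substantial technical obstacle. The only bespoke ingredient is the domination structure of \Cref{lem:lower_bound_domination}, which collapses the continuous mechanism space $\M$ to a binary decision between $M_1$ and $M_2$ and thereby makes the classical information-theoretic machinery directly applicable; the remaining steps are routine bookkeeping.
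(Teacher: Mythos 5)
Your proposal is correct and follows essentially the same route as the paper: the same two-point hard instances $\Pbo,\Pbt$, the domination lemma (\Cref{lem:lower_bound_domination}) to reduce to the binary choice $\{M_1,M_2\}$, and a KL/Pinsker argument with $\eps \asymp T^{-1/2}$; the only cosmetic difference is that you compare $\Pbo$ and $\Pbt$ directly, whereas the paper routes both comparisons through the baseline $\Pbz$ and sums per-round total variation distances. One bookkeeping caveat: with your stated bound $\KL(\Pbo,\Pbt)\le 4\eps^2$ the optimization over $\eps=c/\sqrt{T}$ yields only $\sqrt{T}/(16\sqrt{2})$, and Bretagnolle--Huber actually gives a worse constant here, so recovering the advertised $\nicefrac{1}{16}$ requires the sharper estimate $\KL(\Pbo,\Pbt)=\eps\log\tfrac{1+\eps}{1-\eps}\approx 2\eps^2$ (or the paper's baseline trick with its smaller KL) --- a minor constant-tracking issue rather than a conceptual gap.
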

        \begin{proof}
            We prove that any learning algorithm $\A$ suffers $\Omega(\sqrt{T})$ regret rate against at least one between $\Pbo$ and $\Pbt$. For $i=0,1,2$, denote with $\Pbi_V$ the push-forward measure on $[0,1]^2$ (equipped with the Borel $\sigma$-algebra) induced by the random variable $V$ when the underlying probability measure is $\Pbi$. We start formalizing the intuition that the $\Pbo$ and $\Pbt$ are ``close'' to $\Pbz$ in a statistical sense by studying their Kullback-Leibler divergence $\KL$.
            \begin{claim}
            \label{cl:kl}
                 If $\eps \in (0,\nicefrac 12)$, it holds that $ \max\left\{\kl{\mathbb P_V^0}{\mathbb P_V^1}, \kl{\mathbb P_V^0}{\mathbb P_V^2}\right\} \le 2 \eps^2$.
            \end{claim}
            \begin{proof}[Proof of the Claim]
                We bound the $\KL$ divergence between $\Pbo$ and $\Pbz$; the other term is equal. 
                \begin{align*}
                    \kl{\mathbb P_V^0}{\mathbb P_V^1} &= \frac 12 \log \left(\frac{1}{1+\eps}\right) + \frac 12 \log \left(\frac{1}{1-\eps}\right) \tag{By definition}\\
                    &= - \frac{1}{2} \log (1-\eps^2) \le 2\eps^2,
                \end{align*}
                where the last inequality holds for any $\eps \in (0,\nicefrac 12).$
            \end{proof}

            Under the stochastic instance represented by $\Pbz$, both $M_1$ and $M_2$ are optimal (by \Cref{lem:lower_bound_domination}), and achieve expected profit: 
            \(
                \Ez{\prof(M_1)} = \Ez{\prof(M_2)} = \tfrac 12.
            \)
            When the instance is induced by $\Pbi$, the domination result in \Cref{lem:lower_bound_domination} tells us that the optimal mechanism is $M_i$, with the performance of $M_j$ for $j \neq i$ that is $\Theta(\eps)$ far:
            \begin{equation}
            \label{eq:profit_Pi}
                \begin{cases}
                    \Eo{\prof(M_1)} = \tfrac{1+\eps}{2}, \quad \Eo{\prof(M_2)} = \tfrac 12\\
                    \Et{\prof(M_1)} = \tfrac{1-\eps}{2}, \quad \Et{\prof(M_2)} = \tfrac 12
                \end{cases}
            \end{equation}  

            Fix now any deterministic learning algorithm $\A$, we prove that its expected regret with respect to the randomized instance that is either $\Pbo$ or $\Pbt$ with equal probability is $\Omega(\sqrt{T})$. Note, the randomized instance is constructed by first choosing $\Pbi$ uniformly at random and then sampling from it, \emph{not} by changing distribution at each time step. We remark that proving this result (restricting to deterministic algorithms) is enough to prove the theorem, as the randomized instance we use is independent of the algorithm used by the learner. 

            We denote with $N_i$ the number of times that algorithm $\A$ plays the mechanism $M_i$, for $i=1,2$. Note, there is no point for $\A$ to play any other mechanism, by \Cref{lem:lower_bound_domination}.
            Denote with $V_1, V_2, \dots$ the i.i.d. samples of $V$, and denote with $\Pbi_{V_1, \dots, V_t}$ the push-forward to $[0,1]^{2t}$. We have the following chain of inequalities:
            \begin{align}
            \nonumber
                |\Ei{N_i} - \Ez{N_i}| &\le \sum_{t=1}^T | \Pi{M^t = M_i} - \Pz{M^t = M_i}|\\
                &\le \sum_{t=2}^T ||\Pbi_{V_1, \dots, V_t} - \Pbz_{V_1, \dots, V_t}||_{\tv} \tag{By definition of total variation}\\
                &\le \sum_{t=2}^T \sqrt{\frac{t}{2}\kl{\mathbb P_V^i}{\mathbb P_V^0}} \tag{By Pinsker Inequality and i.i.d. design}\\
                &\le \eps T \sqrt{T}, \label{eq:Eo_vs_Ei}
            \end{align}
            where the last inequality follows by \Cref{cl:kl}.

            We are ready to bound the expected performance of $\A$ against the randomized instance constructed by picking either $\Pbo$ or $\Pbt$ with equal probability. Note, under $\Pbi$, the regret of $\A$ is induced by the $N_j$ time steps in which it plays mechanism $M_j$ for $j\neq i$, multiplied by the corresponding instantaneous regret of $\nicefrac{\eps}2$ (by \Cref{{eq:profit_Pi}}): 
            \begin{align*}
                \E{R_T(\A)} &= \frac{\eps}4\left(\Eo{N_2} + \Et{N_1}\right)\\
                &\ge \frac{\eps}4 \left(\Ez{N_1 + N_2} - 2 \eps T \sqrt T\right) \tag{By \Cref{eq:Eo_vs_Ei}}\\
                &= \frac{\eps}2 T \left(1 - 2 \eps \sqrt T\right) \tag{As $N_1 + N_2 = T$} \\
                &= \frac{1}{16}\sqrt{T}. \tag{By setting $\eps = \nicefrac{1}{4\sqrt{T}}$}
            \end{align*}
            This concludes the proof. 
        \end{proof}

    \subsection{Adversarial Lower Bound}
    \label{subsec:lower_bound_adversarial}

        This section is devoted to our adversarial lower bound. We not only prove that it is impossible to achieve sublinear regret in the adversarial case, but argue that no sublinear $\alpha$-regret is possible, for any constant $\alpha \in [1,\nicefrac 32)$.\footnote{For $\alpha = 1$ the definition of $\alpha$-regret (as in the statement of \Cref{thm:adv_lower_bound}) collapses to that of regret.} 

        \begin{theorem}[Lower Bound in the Adversarial Case]
        \label{thm:adv_lower_bound}
            Consider the profit maximization problem in bilateral trade in the adversarial setting. For any constant $\eps \in(0,1]$ and any learning algorithm, there exists an adversarial sequence such that the following $(\nicefrac 32-\eps)$-regret bound holds: 
            \[
            \sup_{M \in \M} \sum_{t=1}^T \prof_t(M,v^t) - \left(\frac 32-\eps\right) \cdot \sum_{t=1}^T\E{ \prof_t(M_t,v^t)} \ge  \frac{\eps}{4} T.
        \]
        \end{theorem}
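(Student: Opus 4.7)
My plan is to invoke Yao's minimax principle: exhibit a distribution over oblivious adversarial sequences such that the expected $(\nicefrac{3}{2}-\eps)$-regret of any deterministic learning algorithm is at least $\nicefrac{\eps T}{4}$. This yields the existence of a specific deterministic sequence on which the claimed bound holds for any (possibly randomized) algorithm.

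Inspired by \Cref{ex:fixed_grid}, I would define the distribution over hard sequences $\{\sigma_\delta\}$ parameterized by a secret $\delta$ drawn uniformly from a carefully chosen set of very small positive values $\{\delta_1,\ldots,\delta_N\}$. Conditional on $\delta$, the sequence $\sigma_\delta$ consists of $T/2$ copies of $(0,\nicefrac{1}{2})$ and $T/2$ copies of $(\delta,1)$ in a fixed interleaving. By the Myerson-payment formulas (\Cref{def:payments}), the triangle mechanism $T_\delta$ with vertices $(0,\nicefrac{1}{2})$, $(0,1)$, $(\delta,1)$ extracts exactly $\nicefrac{1}{2}$ from each $(0,\nicefrac{1}{2})$ and $1-\delta$ from each $(\delta,1)$, so $\sup_{M}\sum_t \prof(M,v^t)\ge \nicefrac{3T}{4}-o(T)$ for $\delta$ chosen sufficiently small.

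The core technical step is to upper bound the expected profit of any deterministic algorithm by approximately $T/2$. The key insight, extracted from \Cref{ex:fixed_grid}, is that any mechanism whose allocation boundary fails to coincide (up to negligible error) with the hypotenuse of $T_\delta$ extracts on average at most $\approx\nicefrac{1}{2}$ per round over the two valuation types. Since the learner must commit to each $M^t$ without knowing the hidden $\delta$, a careful averaging argument over the uniform random $\delta$, combined with a limit on how quickly $\delta$ can be effectively identified from past observations and payments, would yield $\Esub{\delta}{\sum_t\prof(M^t,v^t)}\le \nicefrac{T}{2}+o(T)$. Combining both bounds gives
\begin{equation*}
\Esub{\delta}{\sup_{M}\sum_t\prof(M,v^t)}-\left(\tfrac{3}{2}-\eps\right)\Esub{\delta}{\sum_t\prof(M^t,v^t)}\ge \tfrac{3T}{4}-\left(\tfrac{3}{2}-\eps\right)\tfrac{T}{2}-o(T)=\tfrac{\eps T}{2}-o(T)\ge \tfrac{\eps T}{4}
\end{equation*}
for $T$ large enough, and Yao's principle then extracts a deterministic witness sequence.

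The main obstacle is the second step: proving the $\nicefrac{T}{2}+o(T)$ upper bound on the learner's expected profit despite the learner's ability to adapt. In principle the learner's observations could let them identify $\delta$ and switch to $T_\delta$, so the construction must ensure that this identification is either too slow or too costly to beat the baseline. This likely requires a quantitative refinement of \Cref{ex:fixed_grid} translating the mismatch between the boundary of $A^{M^t}$ and the hypotenuse of $T_\delta$ into a per-round profit loss, together with a careful accounting, across the $T$ rounds, of how much information about the index $\delta$ is really leaked by the learner's chosen mechanisms.
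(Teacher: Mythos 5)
Your high-level plan is on the right track: Yao's minimax principle applied to a randomized adversarial sequence, comparing a benchmark that exploits a hidden threshold against a learner forced to make a binary commitment each round. This mirrors the paper's structure. However, your specific construction has a genuine gap that the paper's more intricate construction is specifically designed to avoid.

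The problem is that you draw a \emph{single} secret $\delta$ once, at the start, and then replay a fixed interleaving of $(0,\nicefrac 12)$ and $(\delta,1)$. In this model the learner observes the full valuation $v^t=(v_s^t,v_b^t)$ at the end of each round (full feedback, as in the rest of the paper). The very first time a round of type $(\delta,1)$ occurs, the learner sees $v_s^t=\delta$ exactly and can thereafter play the optimal triangle mechanism $T_\delta$. In any reasonable interleaving this happens within $O(1)$ rounds in expectation, so a trivial ``wait-then-copy-$T_\delta$'' learner already achieves $\nicefrac{3T}{4}-o(T)$ expected profit, not $\nicefrac{T}{2}+o(T)$ as your argument requires. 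Your final paragraph acknowledges this as the ``main obstacle,'' but it is not a matter of information ``leaking slowly'' from the learner's mechanism choices --- the information about $\delta$ is handed directly to the learner by the environment in one round, regardless of what $M^t$ the learner plays. There is no way to make identification slow in your construction.

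The paper closes this gap by replacing the fixed secret $\delta$ with a secret that is only fully determined at time $T$: it maintains a nested-interval (``ternary search'') sequence $[a_t,b_t]$ of geometrically shrinking width $\delta\cdot 3^{-t}$, and at each round flips a fair coin to decide whether to play $(b_t,\nicefrac 12)$ or $(a_t,1)$, updating the interval so that the eventual separating threshold $\tau$ lies inside $[a_{t+1},b_{t+1}]$ either way. After observing $v^1,\dots,v^{t-1}$ the learner knows the current interval but not the coin flip for round $t$, and since $(a_t,1)$ dominates $(b_t,\nicefrac 12)$ in the $\preceq$ order, \emph{no} mechanism $M^t$ can extract more than $\nicefrac 12$ expected profit on that round (trade-off in \Cref{prop:trade-off}). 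This is what makes the $\nicefrac{T}{2}$ upper bound on the learner unconditional. If you want to repair your proof you would need to replace the static $\delta$ with an adaptively revealed threshold along these lines; the remainder of your combination of bounds (benchmark at $\ge\nicefrac{3T}{4}-O(\delta T)$, learner at $\le\nicefrac T2$, yielding $\ge\nicefrac{\eps T}{4}$ after setting $\delta=\nicefrac{\eps}{4}$) then does go through.
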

        \begin{proof}
            We prove this lower bound via Yao's minimax principle: we construct a randomized (adversarial) instance against which any deterministic algorithm suffers expected linear $(\nicefrac 32-\eps)$ regret. Our hard instance is based on an auxiliary sequence constructed in Theorem~4 of \citet{AggarwalBDF24}.

            Let $\delta$ be a small auxiliary parameter we set later. We construct our randomized instance while maintaining two auxiliary sequences, $a_t$ and $b_t$, that are initialized as $a_1 = \nicefrac{\delta}{3}$ and $b_1 = \nicefrac{2\delta}{3}$; we also denote with $\Delta_t$ the difference $b_t-a_t$, which is maintained at $\delta \cdot 3^{-t}$ throughout. The sequence of valuations $(v_s^t,v_b^t)$ is generated iteratively as follows:
            {
            \begin{itemize}
                \item with probability $\nicefrac 12$, the valuations are $(v_s^t, v_b^t) = (b_t, \nicefrac 12)$, and the auxiliary sequence is updated as follows: $(a_{t+1}, b_{t+1}) = (b_t + \Delta_{t+1}, b_t + 2\Delta_{t+1})$
                \item with probability $\nicefrac 12$, the valuations are $(v_s^t, v_b^t) = (a_t, 1)$, and the next pair in the auxiliary sequence is $(a_{t+1}, b_{t+1}) = (a_t - 2\Delta_{t+1}, a_t - \Delta_{t+1})$.
            \end{itemize}  
            }
        Arguing similarly to \citet{AggarwalBDF24}, it is possible to verify that this random sequence is well defined, as all realizations of the pairs $(v_s^t,v_b^t)$ belong to the $[0,1]^2$ square. Crucially, {\em for any realization}, there exists a threshold $\tau$ that separates the valuations of the form $(b_t,\nicefrac 12)$ from those of the form $(a_t,1)$. Formally, denote with $L$ (for left) the time steps characterized by a $(a_t,1)$ valuation, and with $R$ (for right) the ones corresponding to $(b_t,\nicefrac 12)$ valuations, the following proposition holds (this is a restatement of Proposition~4 of \citet{AggarwalBDF24}, we provide a more direct proof in \Cref{app:sequence} for completeness).
        \begin{restatable}{proposition}{sequence}       \label{prop:properties}
            For any realization of the valuations and auxiliary sequence, we have:
            \begin{itemize}
                \item[(i)] $a_t \in [0,\delta]$ and $b_t \in  [0,\delta]$, for all $t = 1,\ldots, T$.
                 \item[(ii)] there exists a value $\tau \in  (0,\delta)$ such that $b_t < \tau$ for all $t \in R$, while $a_t > \tau$ for all $t \in L$. 
            \end{itemize}
        \end{restatable}  

        The threshold value $\tau$ can be used to design a mechanism $M^\star$ that allocates in $[0,\tau]\times[\nicefrac 12,1]$ united with the $(0,1)$-$(\delta,1)$ segment. $M^\star$ accepts all the bids of the form $(x,\nicefrac 12)$ for $x \le \tau$ (with profit at least $\nicefrac 12-\delta$), and all the bids of the form $(x,1)$ with $x \in [\tau, \delta]$ (with profit at least $1-\delta$).
        In each iteration, the expected profit that $M^\star$ extracts is thus at least 
        \begin{equation}
        \label{eq:prof_mstar}
            \E{\prof_t(M^\star)} \ge \tfrac{1}{2}\left(1-\delta\right) + \tfrac{1}{2}(\tfrac{1}{2}-\delta) =  \tfrac{1}2\left(\tfrac 32 -2\delta\right).
        \end{equation}
        
        Consider now any deterministic algorithm and its expected revenue at time $t$. Fix the history of the sequence up to time $t$ (excluded); there are two options for the next valuations, with equal probability: either $(b_t,\nicefrac 12)$ or $(a_t,1)$. Since the latter dominates the former, any mechanism $M^t$ that the learning algorithm posts can do one of three things: it never trades, it makes the trade happen for both possible valuations $(b_t, \nicefrac 12)$ and $(a_t, 1)$ (with an expected profit of at most $\nicefrac 12$), or only for $(b_t,\nicefrac{1}{2})$ (with an expected profit of at most $\nicefrac 12$). All in all, the expected revenue is at most 
        \begin{equation}
        \label{eq:prof_mt} 
            \E{\prof_t(M^t)} \le \tfrac 12.
        \end{equation}

        Combining \Cref{eq:prof_mstar} and \Cref{eq:prof_mt}, we see that the expected $\nicefrac 32 - \eps$ instantaneous regret at the generic time step $t$ is at least:
        \[
            \E{\prof_t(M^\star)} - \left(\tfrac 32 - \eps \right)\E{\prof_t(M^t)} \ge \tfrac{1}2\left(\tfrac 32 -2\delta\right) - \left(\tfrac 32 - \eps \right) \tfrac 12 = \tfrac{\eps}{4},
        \]
        where in the last equality we set $\delta = \nicefrac{\eps}{4}.$ Summing up over all the time steps $t$, yields the desired lower bound on the $(\nicefrac 32 - \eps)$-regret.
        \end{proof}

\section{The Joint Ads Problem}

    In this section, we showcase the applicability of our chaining analysis from \Cref{sec:algorithm} by providing an upper bound of $\sqrt{T}$ on the regret for the joint ads problem \citep{AggarwalBDF24}, in the i.i.d. stochastic setting. This result, together with the $\Omega(\sqrt{T})$ lower bound from \citet{AggarwalBDF24} provides a tight (up to poly-logarithmic factors) characterization of the minimax regret rate for the problem. For completeness, we recall that an $\Omega(T)$ lower bound in the adversarial setting is provided in \citet{AggarwalBDF24}.

    \subsection{The Joint Ads Problem}

        In the Joint Ads problem, two buyers participate in an auction for a non-excludable good (e.g., a public good or a shared ad slot that is either allocated to \emph{both} agents or to none). The goal of the mechanism designer is to maximize the revenue extracted from the trade. We now briefly introduce the problem; for further details we refer to \citet{AggarwalBDF24}.

        \paragraph{The Learning Protocol.} Formally, at each time step $t = 1, 2,\dots, T$, a new pair of agents arrives, characterized by private valuations $(v^t_1, v^t_2)$ in the $[0,1]^2$ square. Independently, the learner proposes a mechanism $M^t$ to the agents, who then declare bids $(b_1^t,b_2^t)$. A mechanism $M^t$ is composed by a non-excludable allocation rule $x^t: [0,1]^2 \to \{0,1\}$ (either both agents win the auction or both lose) and two payment functions $p_1^t,p_2^t : [0,1]^2 \to [0,1]$ that map bids to allocation and payment, respectively. The revenue of mechanism $M^t$ when the bids are $b^t_1$ and $b_2^t$ is defined as
        \[        
            \rev(M^t,b^t_1,b^t_2) = x^t(b^t_1,b^t_2) \cdot [p^t_1(b^t_1,b^t_2) + p^t_2(b^t_1,b^t_2)].
        \]
   
        \paragraph{Structure of Incentive Compatible Mechanisms.} As in the bilateral trade problem, the agents behave strategically, meaning that they strive to maximize their own quasi-linear utility $u^t_i(b_1^t,b_2^t) = v^t_i \cdot x^t(b_1^t,b_2^t) - p^t_i(b_1^t,b_2^t)$ for $i \in \{1,2\}$. Note that, differently from the bilateral trade problem, here the two agents are symmetric, as they are both interested in purchasing a good. We require the mechanisms $M^t$ to be DSIC and IR. The family of mechanisms $\M$ that respect these two properties can be easily characterized because this model falls within the framework of single parameter auctions \citep{myerson81}, as each agent is characterized by a single value representing its value for the good for sale. This tells us that $(i)$ an allocation rule is implementable if and only if it is monotone and $(ii)$ for each such allocation rule there exists a unique payment rule that completes it to a DSIC and IR mechanism.\footnote{As in bilateral trade, payments are unique only up to constant shifts. Since our goal is to maximize revenue, we adopt the maximum payments to achieve DSIC and IR.} For this reason, we only talk about valuations in the rest of the section, as the agents are incentivized to behave truthfully.

        \paragraph{Differences with Bilateral Trade.} There are two main differences with bilateral trade. First, the two agents are symmetric; this induces a different notion of monotonicity: valuations $(v_1,v_2)$ dominate $(v_1',v_2')$ if $v_i\ge v_i'$ for $i=1,2.$ Stated differently, if at least one of the two agents raises its bid, then the likelihood of having a trade does not decrease. Second, in the joint ads problem, the objective of the learner, i.e., the revenue, lies in $[0,2]$, as opposed to the profit in bilateral trade, which lives in $[-1,1]$. This is a \emph{crucial} distinction: while in both problems the event ``no trade'' induces zero gain to the learner, in bilateral trade, there is a notion of ``bad trade'' that induces \emph{negative} utility for the learner. On the contrary, in the joint ads problem, every trade guarantees at least non-negative payments \emph{to} the learner. 

        \begin{figure}[t!]
            \centering
                \includegraphics[width=0.4\linewidth]{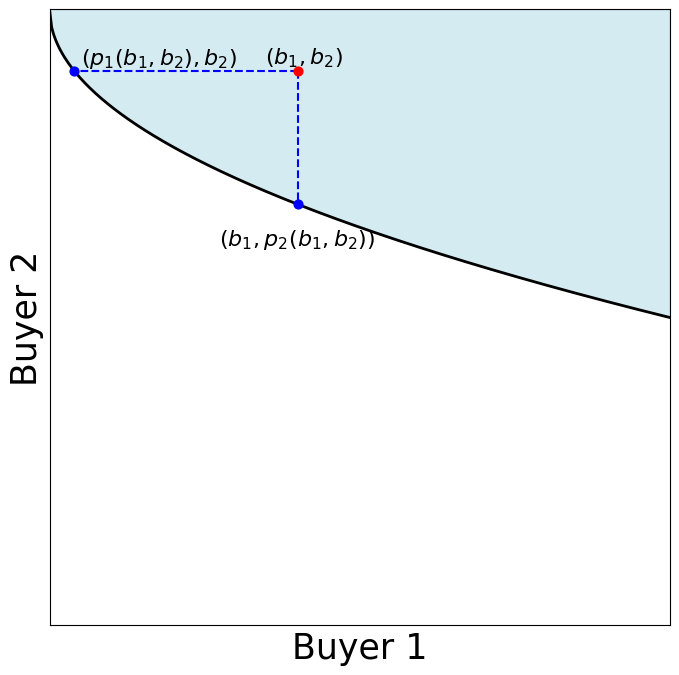}
            \caption{Visualization of the allocation region of a joint ad mechanism.}
            \label{fig:joint}
        \end{figure}
        \subsection{The Stochastic Algorithm}

            In this section, we argue that the algorithm \simplify easily adapts to this new problem and achieves the desired $\tilde O(\sqrt T)$ regret bound in the stochastic setting. From the technical point of view, the main difference with respect to the bilateral trade problem is that the revenue function lies in the $[0,2]$ interval, so that (i) observing ``no trade'' is the worst occurrence and (ii) the gap between the best revenue and no-trade is $2$ instead of $1$. Point (i) implies that we do not want to focus our attention only on mechanisms living in the bottom left triangle (note, with the notion of monotonicity induced by this problem, the ``low-value'' corner of the square is the one close to the origin), so that we do not restrict the mechanism space as in \Cref{lem:Mplus}. 

            For the joint ads problem, \simplify collapses to the algorithm used in \citet{AggarwalBDF24} (\augment). There is, however, one big difference: our chaining-based analysis allows us to prove a greatly improved sample complexity of $\nicefrac{1}{\eps^2}$, as opposed to the sample complexity of $\nicefrac{1}{\eps^4}$ implicit in \citet{AggarwalBDF24}. This difference is observable in the different choice of precision parameter: we set $\eps_t = 200 \sqrt{\nicefrac{\log^3 T}{t-1}}$, as opposed to approximately $t^{-\nicefrac{1}{4}}$. 

            \begin{theorem}
            \label{thm:regret-bound-joint-ads}
                Consider revenue-maximization in the joint ad problem against the stochastic i.i.d. adversary. The regret of $\augment$ is at most $C\sqrt{T \log^3 T}$, for some constant $C$.
            \end{theorem}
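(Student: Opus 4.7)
The plan is to mirror the proof of \Cref{thm:regret-bound} step by step, substituting the revenue of the joint ads model for the bilateral-trade profit $\prof$, flipping the monotonicity orientation (since here both agents are buyers and the ``low-value'' corner is the origin), and tracking the two quantitative changes that arise: the revenue lives in $[0,2]$ rather than $[-1,1]$, and we no longer need to restrict to the analog of $\Mplus$ because every trade contributes non-negatively to the objective.

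First I would redo the structural machinery of \Cref{subsec:simplify,subsubsec:simplify}. The single-parameter Myerson characterization still gives a bijection between DSIC\,+\,IR mechanisms and monotone allocation regions with payments determined by the boundary, so empirical revenue maximization reduces to a shortest-path computation on a point-induced grid, exactly as in \Cref{cor:empirical_maximizer}, once the edge weights are redefined to reflect that both Myerson payments flow \emph{to} the broker. The simplification procedure and the three properties in \Cref{prop:simplified_mechanisms} carry through verbatim, except that the $2\eps$ bound in point (ii) becomes $O(\eps)$ because the revenue is a sum of two payments that can each shift by at most $\eps$ when the boundary is snapped to $G_\eps$.

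Next I would re-establish the approximating-sequence properties (\Cref{prop:approximating_seq}) and the crucial local dichotomy (\Cref{prop:local}). The third case of the proof of \Cref{prop:approximating_seq}(ii), which handled the possibility of a \emph{negative} profit for $v \in A_{\app{h}{M}}\setminus A_M$, disappears entirely: here any such $v$ still yields non-negative revenue, and being contained in a rectangle of side $2^{-h}$ with sides on $\partial \app{h}{M}$ forces the revenue difference from $M$ to be $O(2^{-h})$. The local dichotomy then follows by the same partition-into-$B_i$ argument, with thresholds and constants rescaled by a factor of $2$ to accommodate revenue in $[0,2]$.

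With these adaptations, the target net (\Cref{def:target_net}) and the uniform-learning bound of \Cref{thm:expectation_concentration} go through with only cosmetic changes in structure: the ghost-sample step (\Cref{lem:ghost}), the symmetrization (\Cref{lem:symmetrization}), the Rademacher-to-Gaussian comparison (\Cref{lem:rad_to_gaus}), the telescoping decomposition along the approximating sequence (\Cref{lem:telescoping}), and the variance bound supplied by the local dichotomy (\Cref{lem:single_h}) each apply line-by-line with adjusted universal constants. The regret statement follows from the offline-to-online reduction used in \Cref{thm:regret-bound}: set $\eps_t=200\sqrt{\log^3 T/(t-1)}$, condition on the clean event that the simplified empirical maximizer lies in the target net and that a near-optimal fixed mechanism concentrates on the sample, and sum the $\tilde O(1/\sqrt{t})$ instantaneous regrets. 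The main obstacle is verifying that the dichotomy's ``bad'' branch—where many sample-level increments are large and negative—still certifies empirical suboptimality now that the $\Mplus$ restriction is gone; this is patched by repeating the argument behind \Cref{eq:property} in the proof of \Cref{thm:target} using that the empirical revenue is monotone in the allocation region (by the joint-ads analog of \Cref{prop:trade-off}), so a refined mechanism with too many large negative increments cannot be the empirical revenue maximizer and may therefore be excluded from the candidate class $\C$.
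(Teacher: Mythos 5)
Your proposal is correct and follows essentially the same route as the paper, which proves \Cref{thm:regret-bound-joint-ads} by running the argument of \Cref{thm:regret-bound} verbatim with $\prof$ replaced by $\rev$, the monotonicity/approximation direction flipped toward the origin, and constants adjusted because $\rev\in[0,2]$ while negative payoffs disappear. Your extra worry about the dichotomy's ``bad'' branch is already handled exactly as you describe, since the candidate-class argument in \Cref{thm:target} relies only on empirical optimality and the joint-ads analog of \Cref{prop:trade-off}, not on the $\Mplus$ restriction.
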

            \begin{proof}
                The proof of this result follows verbatim from the proof of \Cref{thm:regret-bound} by replacing the $\prof$ function with $\rev$, and by considering the different notion of monotonicity (this implies, for instance, that the approximating sequences approximate mechanisms from the bottom-left direction, and not from the bottom-right, see for instance \Cref{fig:joint}). Note, we do not need to worry about negative revenue (this simplifies some of the arguments), while we get an extra multiplicative $2$ in the regret rate as $\rev \in [0,2]$, while $|\prof| \in [0,1]$.
            \end{proof}

\bibliographystyle{plainnat}
\bibliography{references}
\newpage

\appendix

\section{Characterization of DSIC and IR Mechanisms}
\label{app:mechanisms}

    In this section, we prove the characterization of DSIC and IR mechanisms.

\characterization*
\begin{proof}
    Consider a mechanism with allocation region $A$ and payment functions $q(\bs,\bb)$ and $p(\bs,\bb)$. Let us use the shorthand $\trade{b_1}{b_2} = \ind{(\bs, \bb) \in A)}$ to denote the mechanism's decision whether a pair of bids $\bs,\bb$ should trade. 
    
    First assume that $A$ is monotone and that payments are Myerson payments. 
    We want to show that then $M$ is both DSIC and IR. We take into consideration the perspective of the buyer. We fix a bid $\bs$ for the seller and a valuation $\vb$ for the buyer such that bidding truthfully enables a trade, i.e., $(\bs, \vb) \in A$. In this case, overbidding with $b'_b > \vb$ leaves the utility invariant, since a trade still happens and $q(\bs, b'_b) = q(\bs, \vb)$. Specifically, in this case, $\util{\bs}{b'_b}{b} = \util{\bs}{\vb}{b} = \vb - \min\{y \in [0,1] \mid (\bs,y) \in A\}$, which is non-negative because $\min\{y \in [0,1] \mid (\bs,y) \in A_M\} \leq \vb$. Underbidding renders the same scenario, up until we reach bids $b'_b$ such that $(\bs, b'_b)\notin A_M$; this leads to the trade not happening and the buyer obtaining zero utility, thus losing the surplus described before. If, on the other hand, the buyer's value $\vb$ is such that with a truthful bid no trade happens, i.e., $(\bs, \vb) \notin A$, then any bid $b'_b$ such that $(\bs,b'_b) \not\in A$ keeps the utility at zero, while any bid $b'_b$ such that $(\bs,b'_b) \in A$ will entail a payment $> \vb$ and thus result in negative utility. Therefore, bidding truthfully is the dominant strategy independently of the value of $\vb$. For what concerns individual rationality, notice that we have already considered the relevant cases for the value of $\vb$ and argued that the induced utility $\util{\bs}{\vb}{b}$ is non-negative.
    The specular reasoning for the seller, shows that the mechanism is DSIC and IR also for the seller.

    We next establish 
    the reverse implication. 
    To this end, we show that 
    DSIC 
    and 
    IR 
    imply both monotonicity of $A$ and Myerson  payments. Let us consider 
    the perspective of the seller: we fix a bid $\bb$ for the buyer and two real numbers $0\leq y\leq z$, with $(y, z)\in [0, 1]^2$. At first, we consider $y$ as the real valuation of the seller, and $z$ a possibly non-truthful bid; incentive compatibility gives the inequality $p(y, \bb) - x(y, \bb)y \geq p(z, \bb) - x(z, \bb)y$. From now on, we disregard the $\bb$ argument since we are keeping it fixed. Conversely, if we let $z$ be the valuation and $y$ the bid, we have: $p(z) - x(z)z \geq p(y) - x(y)z$. By simply factoring out $z$ in the second inequality and $y$ in the first, we get both an upper bound and a lower bound for price changes: 
    \begin{equation}
    \label{eq:price_sandwich}
        z(x(z)-x(y))\leq p(z) - p(y) \leq y(x(z)-x(y))
    \end{equation}
    Notice that the assumption of $x$ increasing immediately leads to a contradiction with the inequalities above, so $x$ is weakly decreasing when taking as argument only the bid for the seller.
    An analogous argument applies to the the buyer, and shows that $x$ is weakly increasing when taking as argument only the bid of the buyer. We have thus established monotonicity. 
    
    To show the need for Myerson payments, let us focus again on the seller. Notice that $x(\cdot,\bb)$ can only take two values: either we are in the allocation region and the trade happens, or not. This immediately implies that the payment $p(\cdot,\bb)$ can only take two different values, one for the case where seller and buyer trade and one for the case where there is no trade. We next argue that if the discontinuity of $x(\cdot,\bb)$ happens at $y$ (i.e., $x(y,\bb) = 1$ while for any $z > y$ we have $x(z,\bb) = 0$), then the difference between these two payments must be $y$.
    To this end, note that 
    the lower and upper bounds from \Cref{eq:price_sandwich} both converge to $-y$ as $z\rightarrow y$ from above, meaning that $\lim_{z\rightarrow y^+} p(z)-p(y) = -y$. 
    We have thus shown that $p(\bs,\bb) = y + c_{\bb}$ for $\bs \leq y$ and $p(\bs,\bb) = c_{\bb}$ otherwise, where $c_{\bb} \geq 0$ is a non-negative constant that may depend on the buyer's bid.  
    By our requirement that $p(\bs,\bb) = 0$ when $(\bs,\bb) \not \in A$ and there is no trade, we must have $c_{\bb} = 0$.
    An analogous reasoning for the buyer completes the proof. 
    \end{proof}

    \begin{remark}\label{rem:wlog}
    The requirement that payments are bounded in $[0,1]$ and zero when there is no trade is without loss of generality, when focusing on DSIC and IR mechanisms and the goal of maximizing the broker's profit. This can be seen by considering general payment functions $p,q: [0,1]^2 \rightarrow \mathbb{R}$. Then an argument analogous to the one used to establish \Cref{thm:mechanisms} shows that the payments must be Myerson payments, up to constant terms $c_{\bb}$ and $c_{\bs}$ that are added to the seller's and buyer's payment, respectively. By IR these constant must be such that $c_{\bb} \geq 0$ and $c_{\bs} \leq 0$, and, to maximize profit, we want $c_{\bb} = c_{\bs} = 0$.
    \end{remark}

    \section{Efficiency Maximization and Budget Balance}
    \label{app:efficiency}
    In this section, we discuss the efficiency maximization version of the bilateral trade problem. Let $M$ be a mechanism with allocation region $A$, the social welfare and gain from trade induced by $M$ on agents with valuations $v_s$ and $v_b$ are defined as follows:
    \begin{align*}
        \gft(M,v_s,v_b) &= \ind{(v_s,v_b) \in A} (v_b - v_s) \tag{Gain-From-Trade}\\
        \sw(M,v_s,v_b) &= \ind{(v_s,v_b) \in A} (v_b - v_s) + v_s\tag{Social Welfare}
    \end{align*}

    Note that social welfare is an additive factor away from gain from trade; therefore, from the regret perspective, these two objectives are exactly equivalent. 

    \paragraph{Budget Balance.} Beyond DSIC and IR, another natural property required to efficiency maximizing mechanisms for bilateral trade is budget balance: we want infact to avoid situations in which the mechanism subsidizes the market. There are two main notions of budget balance in the literature \citep[see e.g.,][]{Cesa-BianchiCCF24,ColiniBaldeschiKLT16,BernasconiCCF24}: 
    \begin{align*}
        \prof(M,v_s,v_b) = 0 \quad \forall (v_s,v_b)\in [0,1]^2 \tag{Strong Budget Balance}\\
        \prof(M,v_s,v_b) \ge 0 \quad \forall (v_s,v_b)\in [0,1]^2 \tag{Weak Budget Balance}
    \end{align*}

    We have all the notions to formally recall the impossibility result by \citet{MyersonS83}: no mechanism can, at the same time, achieve full efficiency (e.g., maximize social welfare/gain from trade), individual rationality, incentive compatibility and (weak) budget balance. 

    \paragraph{Fixed-Price Mechanisms.} An important class of mechanisms is composed by \emph{fixed-price mechanisms}, where the broker simply posts two fixed prices $p$ to the seller and $q$ to the buyer, and the trade happens if and only if both agents agree. In the rest of the section, we clarify what we mean in the introduction by claiming that fixed price mechanisms ``characterize'' efficiency maximizing mechanisms. We start by recalling a folklore result on strong budget balance mechanisms.  

    \begin{proposition}[e.g., \citet{ColiniBaldeschiKLT16}]
    \label{prop:strong}
        A mechanism $M$ for bilateral trade is dominant-strategy incentive compatible, individually rational, and strongly budget balanced if and only if it is a fixed price mechanism for some price $p$, i.e., if its allocation region is $[0,p]\times [p,1]$.
    \end{proposition}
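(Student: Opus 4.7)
The plan is to prove both implications, with the reverse direction carrying the substance. The forward implication is routine: for a mechanism with allocation region $A = [0,p] \times [p,1]$ and the Myerson payments of \Cref{def:payments}, I would simply compute that the seller's payment is $\max\{x : (x,\bb) \in A\} = p$ for every trading $\bb \geq p$, and the buyer's payment is $\min\{y : (\bs,y) \in A\} = p$ for every trading $\bs \leq p$. The two payments coincide, so the broker's profit is identically zero and SBB holds; DSIC and IR then follow immediately from \Cref{thm:mechanisms} since $A$ is NW-monotone.

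For the converse, assume $M$ is DSIC, IR, and SBB, and restrict to the nontrivial case $A \neq \emptyset$ (the empty allocation being a degenerate no-trade mechanism). By \Cref{thm:mechanisms}, $A$ is monotone and the payments are Myerson. The SBB constraint translates to the pointwise identity
\[
    P(\bb) := \max\{x : (x,\bb) \in A\} \;=\; \min\{y : (\bs,y) \in A\} =: Q(\bs), \qquad \forall\,(\bs,\bb) \in A.
\]
The key insight is that the left-hand side depends only on $\bb$ and the right-hand side only on $\bs$, so both must collapse to a common constant $p$ across all of $A$; this constancy step is the heart of the argument.

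To establish constancy, I would take any two pairs $(\bs_1,\bb_1),(\bs_2,\bb_2) \in A$ and set $\bs^\star = \min(\bs_1,\bs_2)$. NW-monotonicity (\Cref{def:monotone_allocation}), applied at each of the two original pairs, yields both $(\bs^\star,\bb_1) \in A$ and $(\bs^\star,\bb_2) \in A$. Applying SBB at these two new points gives $P(\bb_1) = Q(\bs^\star) = P(\bb_2)$, so $P$ is constant on the second-coordinate projection of $A$. A symmetric argument with $\bb^\star = \max(\bb_1,\bb_2)$ shows that $Q$ is constant on the first-coordinate projection, and equality of the two constants is inherited from any single point of $A$. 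Call this common value $p$.

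Finally, I would determine the shape of $A$. The identity $P(\bb) = p$ for every $\bb$ in the projection forces $\bs \leq p$ whenever $(\bs,\bb) \in A$; dually, $Q(\bs) = p$ forces $\bb \geq p$, so $A \subseteq [0,p] \times [p,1]$. For the reverse inclusion, pick any $(\bs_0,\bb_0) \in A$: closedness of $A$ (adopted throughout the paper) makes the max in $P(\bb_0)=p$ attained, so $(p,\bb_0) \in A$; applying SBB at $(p,\bb_0)$ gives $Q(p)=p$, and by the same closedness $(p,p) \in A$. NW-monotonicity then sweeps out the entire rectangle $[0,p] \times [p,1] \subseteq A$, completing the characterization. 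The main obstacle is the constancy argument; the ``corner-witness'' trick of introducing $(\bs^\star,\bb_i)$ is what decouples the $\bs$ and $\bb$ dependencies and makes it go through.
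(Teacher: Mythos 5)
The paper does not supply its own proof of \Cref{prop:strong}; it is stated as a folklore result with a citation to \citet{ColiniBaldeschiKLT16}, so there is no paper argument to compare against. Your proof is correct and self-contained given \Cref{thm:mechanisms}: the forward direction is a routine payment computation, and the converse correctly translates strong budget balance (together with the reduction to Myerson payments) into the pointwise identity $P(\bb)=Q(\bs)$ on $A$, where $P$ depends only on the buyer's bid and $Q$ only on the seller's. The corner-witness device $\bs^\star=\min(\bs_1,\bs_2)$ (and its dual $\bb^\star=\max(\bb_1,\bb_2)$), combined with NW-monotonicity, is exactly what is needed to decouple the two coordinates and force both $P$ and $Q$ to a common constant $p$; the containment $A\subseteq[0,p]\times[p,1]$ then follows from $P\equiv p$ and $Q\equiv p$, and the reverse inclusion follows from closedness (so $(p,p)\in A$) plus monotonicity. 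You are also right to flag the empty allocation region as a degenerate case that is DSIC, IR, and SBB but not of the form $[0,p]\times[p,1]$; the proposition implicitly excludes it, and excluding it explicitly is the right move.
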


    Weak budget balance has a nice geometric interpretation following the characterization of \Cref{thm:mechanisms}. For any non-trivial mechanism $M$ (i.e., with non-empty allocation region), let $\overline s$ be the price paid to the seller and $\underline b$ be the price paid by the buyer, under valuation $(0,1)$. From a geometrical point of view, these two numbers correspond to the intersections of the boundary of $M$ with the left and top side of the $[0,1]^2$. We have the following result.

    \begin{proposition}[Characterization of Weak-Budget Balanced Mechansism]
    \label{prop:weak}
        A mechanism $M$ for bilateral trade that is dominant-strategy incentive compatible and individually rational enforces weak budget balance if and only if $\overline s \le \underline b$. Moreover, the social welfare and gain from trade of any such mechanism is dominated by the fixed price mechanism with allocation region $[0,\overline s] \times [\underline b,1].$
    \end{proposition}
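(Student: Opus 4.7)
The plan is to leverage the characterization from \Cref{thm:mechanisms}: any DSIC and IR mechanism $M$ is determined by a monotone (closed) allocation region $A$ with Myerson payments $p(\vs,\vb)=\max\{x:(x,\vb)\in A\}$ and $q(\vs,\vb)=\min\{y:(\vs,y)\in A\}$. Observe that for any non-trivial $M$, the point $(0,1)$ dominates every point of $[0,1]^2$ under $\preceq$, hence $(0,1)\in A$ by monotonicity, and $\overline{s}=p(0,1)$, $\underline{b}=q(0,1)$ are well-defined. The first direction of the biconditional is then immediate: weak budget balance at $(0,1)$ forces $q(0,1)\ge p(0,1)$, i.e.\ $\overline{s}\le \underline{b}$.

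For the converse, I would fix $(\vs,\vb)\in A$ and compare $p(\vs,\vb)$ to $\overline{s}$ and $q(\vs,\vb)$ to $\underline{b}$ using monotonicity. If $(x,\vb)\in A$, then $(x,1)$ dominates $(x,\vb)$ (since $1\ge \vb$) so $(x,1)\in A$; taking maxima gives $p(\vs,\vb)\le p(0,1)=\overline{s}$. Symmetrically, if $(\vs,y)\in A$, then $(0,y)\in A$ because $(0,y)$ dominates $(\vs,y)$, so $\{y:(\vs,y)\in A\}\subseteq\{y:(0,y)\in A\}$, and thus $q(\vs,\vb)\ge q(0,1)=\underline{b}$. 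Combining the two inequalities, $\prof(M,\vs,\vb)=q(\vs,\vb)-p(\vs,\vb)\ge \underline{b}-\overline{s}\ge 0$, giving weak budget balance.

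For the second assertion, let $M'$ be the fixed-price mechanism with allocation region $A'=[0,\overline{s}]\times[\underline{b},1]$. I would establish pointwise domination $\gft(M,v)\le \gft(M',v)$ (from which the claim for $\sw$ follows as it differs by the additive constant $\vs$). When $(\vs,\vb)\notin A$, $\gft(M,v)=0$ and $\gft(M',v)\ge 0$ because $A'$ lies in the region $\vb\ge \underline{b}\ge \overline{s}\ge \vs$. When $(\vs,\vb)\in A$, individual rationality gives $\vs\le p(\vs,\vb)\le \overline{s}$ for the seller and $\vb\ge q(\vs,\vb)\ge \underline{b}$ for the buyer (using the previous step); hence $(\vs,\vb)\in A'$ and both mechanisms realize the same gain $\vb-\vs$.

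The argument is essentially a geometric bookkeeping exercise, so I do not anticipate a real obstacle; the only subtle step is noting that $\overline{s}$ and $\underline{b}$ are defined via the extremal Myerson values at $(0,1)$ and that closedness of $A$ ensures the max and min are attained, which makes the two monotonicity comparisons tight.
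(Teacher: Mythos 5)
Your proof is correct and follows essentially the same route as the paper's: weak budget balance at $(0,1)$ gives one direction, the chain $p(\vs,\vb)\le \overline s\le \underline b\le q(\vs,\vb)$ (via monotonicity of $A$) gives the other, and the efficiency claim follows from $A\subseteq[0,\overline s]\times[\underline b,1]$ together with non-negativity of the gains from trade on that rectangle. You merely spell out the monotonicity comparisons and the pointwise $\gft$ domination that the paper states more tersely.
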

    \begin{proof}
        Clearly, if a mechanism does not respect $\overline s \le \underline b$, then it violates budget balance under valuation $(0,1)$. The other implication follows by noting that, for any valuation $(v_s,v_b)$ that results in a trade, we have:
        \(
            p(v_s,v_b) \le \overline s \le \underline b \le q(v_s,v_b). 
        \)

        Consider now the last part of the statement. Increasing the allocation region while retaining budget balance and monotonicity always (weakly) increases efficiency. Therefore taking the maximal monotone region compatible with $\underline b$ and $\overline s$ is clearly a (weak) improvement.
    \end{proof}

    \paragraph{Budget Balance and Profit.} Since the focus of this paper is on profit maximization, there is no reason to constrain the optimization procedure to DSIC and IR mechanisms that also enforce weak budget balance (clearly, the profit of any strongly budget balance mechanism is always $0$). In fact, there are simple examples in which there is a large gap between the expected profit of the best mechanism and the best budget balance one. 

    \begin{example}[Budget Balance is Suboptimal]
        Consider the uniform distribution over $(0,\nicefrac 14)$ and $(\nicefrac 34,1)$. The optimal mechanism allocates in the $(0,\nicefrac 14)$, $(\nicefrac 34,1)$, $(0,1)$ triangle, for an expected profit of $\nicefrac 14.$ Such mechanism is not budget balanced, as $\overline s = \nicefrac 34$ and $\underline b = \nicefrac 14$. The allocation region of any budget balance mechanism can either contain $(0,\nicefrac 14)$ or $(\nicefrac 34,1)$, for an expected profit of most $\nicefrac 18$.
    \end{example}

    \paragraph{Beyond Dominant-Strategy Incentive Compatibility.} The impossibility results of \citet{MyersonS83} hold in the more general setting where mechanisms only need to enforce incentive compatibility, individual rationality, and budget balance \emph{in expectation.} These mechanisms clearly do not respect the characterization provided in \Cref{thm:mechanisms,prop:strong,prop:weak}, but have also been studied. In particular, the first mechanism to achieve a constant factor approximation of the optimal gain from trade in the Bayesian setting \citep{DengMSW21} is BIC but not DSIC.

\section[The Learning Complexity of M]{The Learning Complexity of $\M$}
\label{app:complexity}

    In this section, we argue that it is not possible to uniformly learn the expected profit of \emph{all} mechanisms in $\M$ for bilateral trade. To provide some further intuition, we also explicitly prove that standard statistical-complexity measures are not helpful (VC dimension, pseudo-dimension, and Rademacher complexity). Formally, we have the following theorem. 

    \begin{theorem}
    \label{thm:uniform-learning}
        There exists a distribution $\D$ such that for any number $n$ of i.i.d. samples $S$, there exists a mechanism $M \in \M$ that respects the following two properties:
        \begin{itemize}
            \item The expected profit of $M$ under $\D$ is constant: $\E{\prof(M)} \ge \nicefrac 14$
            \item The empirical profit of $M$ on the samples $S$ is zero.
        \end{itemize}
    \end{theorem}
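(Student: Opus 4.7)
My plan is to make \Cref{ex:hard_distribution} rigorous. I take $\D$ to be the uniform distribution on the diagonal segment $L = \{(t, t+\nicefrac{1}{2}) : t \in [0,\nicefrac{1}{2}]\}$, which is the hypotenuse of the closed triangle $A$ with vertices $(0,\nicefrac{1}{2})$, $(0,1)$, $(\nicefrac{1}{2},1)$. Given any realization $S = \{s_1,\dots,s_n\}$ of $n$ i.i.d.\ samples from $\D$, I define $M_S$ as the mechanism whose allocation region is $A_S = A \setminus S$, equipped with the associated Myerson payments.

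The first step is to check that $A_S$ is a valid monotone allocation region. Monotonicity of $A$ is immediate from its defining inequality $y \ge x + \nicefrac{1}{2}$ together with the $\preceq$ relation. To see that removing the finite set $S$ preserves monotonicity, I will argue by contradiction: if $x \in A_S$ and $y \succeq x$ had $y = s \in S$, then $s \in L$ gives $s_2 = s_1 + \nicefrac{1}{2}$, and chaining $s_1 \le x_1$, $s_2 \ge x_2$, and $x_2 \ge x_1 + \nicefrac{1}{2}$ collapses every inequality to an equality, forcing $x = s$ and contradicting $x \in A_S$. By \Cref{thm:mechanisms} the resulting $M_S$ is DSIC and IR.

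Next I will compute the two profits. By construction $s_i \notin A_S$ for every $i$, hence $\prof(M_S, s_i) = 0$ and the empirical profit on $S$ is exactly zero. For the expected profit, a fresh $V = (v_1, v_1+\nicefrac{1}{2}) \sim \D$ avoids the finite set $S$ almost surely, and a short slice computation (any sample on the horizontal line through $V$ would have to equal $V$ itself, and similarly for the vertical slice) shows that the Myerson payments at $V$ are $p(V) = v_1$ and $q(V) = v_1 + \nicefrac{1}{2}$, so $\prof(M_S, V) = \nicefrac{1}{2}$ almost surely and $\E{\prof(M_S)} = \nicefrac{1}{2} \ge \nicefrac{1}{4}$.

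The one subtlety I anticipate is that $A_S$ is not closed, which appears to conflict with the convention stated in the remark preceding the theorem. I will clarify that closedness was introduced only as a WLOG simplification for positive results (taking closures never decreases profit), not as a restriction on the admissible class $\M$; as the slice computation above also shows that the $\max$ and $\min$ in the Myerson payment definition are well-defined and attained at every point in the support of $\D$, the mechanism $M_S$ is a bona fide element of $\M$.
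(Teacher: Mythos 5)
Your construction follows the paper's own proof in all essentials: the paper also takes $\D$ uniform on a segment parallel to the diagonal (there, the $(0,\nicefrac 34)$--$(\nicefrac 14,1)$ segment) and punctures the triangle above it at the sample points. The one place where you deviate --- removing the \emph{points} of $S$ rather than small neighborhoods of them --- is exactly where your argument has a gap. With $A_S = A\setminus S$ the allocation region is not closed, and the Myerson payments of \Cref{def:payments} are not defined at every bid pair: take a removed sample $s=(s_1,s_2)$ and a bid $(s_1,b_b)$ with $b_b>s_2$, which lies in $A_S$; the vertical slice $\{y:(s_1,y)\in A_S\}$ equals $(s_2,1]$, so the $\min$ in the definition of $q$ is not attained (symmetrically for $p$ on horizontal slices through a removed sample). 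Your closing claim that the payments are ``well-defined and attained at every point in the support of $\D$'' therefore does not establish that $M_S\in\M$: a mechanism must specify payments, and satisfy DSIC and IR, for \emph{all} bids in $[0,1]^2$, not only on the support of the distribution, and \Cref{thm:mechanisms} is stated for max/min payments on a region where these exist. Note also that you cannot appeal to the closure convention here, since the closure of $A_S$ is $A$ itself, which has nonzero empirical profit --- so the issue has to be confronted, not waved away as a WLOG.

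The gap is fixable in two ways. The paper's fix is to puncture with open $\ell^\infty$ balls $B_\delta(v)$, $v\in S$, for $\delta$ small enough that at least half of the support of $\D$ (in measure) remains in the allocation region: the region stays closed and monotone, the payments are literally those of \Cref{def:payments}, the empirical profit on $S$ is zero, and the expected profit is still at least $\nicefrac 14$. Alternatively, you can keep $A\setminus S$ but define $p,q$ via $\sup/\inf$; on this region these coincide with the payments of the full-triangle mechanism restricted to $A_S$ (namely $p=b_b-\nicefrac 12$, $q=b_s+\nicefrac 12$ whenever the trade happens), and one can then check DSIC and IR directly, the only delicate case being a seller (resp.\ buyer) whose true valuation pair is exactly a removed sample, who can deviate to trigger the trade but gains nothing strictly. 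Either repair is short, but as written the final step ``$M_S$ is a bona fide element of $\M$'' is not justified.
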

    \begin{proof}
        This proof formalizes the argument seen in \Cref{ex:hard_distribution}. Let $\D$ be the uniform distribution on the $(0,\nicefrac 34)$-$(\nicefrac 14,1)$ segment, and consider any realization of the $n$ samples $S$. 
        Denote with $\delta>0$ a small parameter we set later, and with $B_{\delta}(v)$ the generic $\ell^\infty$ open ball around $v$ of radius $\delta.$ Consider the mechanism $M$ whose allocation region is the  $(0,\nicefrac 34),$ $(\nicefrac 14,1),$ $(0,1)$ triangle minus the balls $B_{\delta}(v)$ for $v$ in $S.$ 
        
        By construction, the empirical profit of $M$ on the samples is zero, while the expected profit on $\D$ is at least $\nicefrac 14$, for $\delta$ small enough. In fact, $S$ is finite, so it is possible to take $\delta$ small enough so that at least half of the support of $\D$ falls in the allocation region of $M$. For such $\delta$, a trade happens with probability at least $\nicefrac 12$, with an expected profit of at least $\nicefrac 14$.  
    \end{proof}
\begin{definition}[VC Dimension for Range Spaces]
\label{def:vc_dimension}
Let $\X$ be a set, and $\G$ a collection of subsets of $\X$. The tuple $(\X, \G)$ is called a range space. For each finite set $S\subseteq X$ we define the projection of $\G$ onto $S$ as $\G_S = \left\{G \cap S : G\in \G\right\}$; we say that S is shattered by $\G$ if $\lvert \G_S\rvert = 2^{\lvert S\rvert}$, with $\G_S$ thus the power set of $S$. The VC-dimension of the range space $(\X, \G)$ is the cardinality of the largest $S$ that can be shattered by $\G$. If sets of arbitrary high cardinality can be shattered, we say that the VC-dimension is unbounded.
    
\end{definition}
\begin{theorem}[Unbounded VC Dimension]
    \label{thm:vc_dimension_monotone}
        The VC dimension of the collection of sets induced by monotone allocations regions is unbounded. 
    \end{theorem}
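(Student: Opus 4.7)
The plan is to exhibit, for every $n \in \mathbb{N}$, a set of $n$ points in $[0,1]^2$ that can be shattered by the class of monotone allocation regions, as defined in \Cref{def:monotone_allocation}. This immediately yields unbounded VC dimension per \Cref{def:vc_dimension}.

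For a fixed $n$, I would take the anti-diagonal ``staircase''
\[
S_n = \{p_1, \dots, p_n\}, \qquad p_i = \bigl(\tfrac{i}{n+1},\, 1 - \tfrac{i}{n+1}\bigr).
\]
The key structural property of $S_n$ is that its points are pairwise incomparable under $\preceq$: for $i \neq j$, going from $p_i$ to $p_j$ strictly increases one coordinate while strictly decreasing the other, so neither $p_i \preceq p_j$ nor $p_j \preceq p_i$ holds (recall $x \preceq y$ iff $x_1 \ge y_1$ and $x_2 \le y_2$).

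Given any target subset $T \subseteq S_n$, I would realize it via its up-left closure
\[
A_T \;=\; \bigl\{y \in [0,1]^2 \,:\, \exists\, x \in T,\ x \preceq y\bigr\}.
\]
This is monotone by transitivity of $\preceq$, and a fortiori a valid allocation region. Every $p_i \in T$ belongs to $A_T$ because $p_i \preceq p_i$. The only nontrivial step is verifying that no $p_j \notin T$ lies in $A_T$: membership would require some $p_i \in T$ with $p_i \preceq p_j$, but since $T \subseteq S_n$ and the points of $S_n$ are pairwise incomparable, no such $i \neq j$ can exist. Hence $A_T \cap S_n = T$, so every one of the $2^n$ subsets of $S_n$ is realized by a monotone region, which means $S_n$ is shattered.

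Since the construction works for arbitrary $n$, the VC dimension is unbounded. The main potential pitfall is the direction of the partial order $\preceq$ (the non-standard convention $x_1 \ge y_1,\, x_2 \le y_2$), but once the anti-diagonal staircase is chosen, incomparability is symmetric in the two coordinates and the argument is essentially a one-liner; no further technical difficulty arises.
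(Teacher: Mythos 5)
There is a genuine error in the choice of shattered set: the anti-diagonal points $p_i = (\tfrac{i}{n+1}, 1-\tfrac{i}{n+1})$ are \emph{not} pairwise incomparable under the paper's partial order $\preceq$ --- they form a chain. Recall that $x \preceq y$ means $x_1 \ge y_1$ and $x_2 \le y_2$. For $i > j$ we have $(p_i)_1 = \tfrac{i}{n+1} \ge \tfrac{j}{n+1} = (p_j)_1$ and $(p_i)_2 = 1 - \tfrac{i}{n+1} \le 1 - \tfrac{j}{n+1} = (p_j)_2$, so $p_i \preceq p_j$ for all $i > j$; the anti-diagonal is totally ordered. Under this (admittedly unusual) partial order, incomparability requires moving \emph{both} coordinates in the same direction, so it is the \emph{main} diagonal $\{(x,x)\}$ that forms an antichain, not the anti-diagonal you used. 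Your heuristic ``one coordinate up, one down $\Rightarrow$ incomparable'' applies to the standard product order $x_1 \le y_1,\ x_2 \le y_2$, but here the first coordinate is flipped, so that heuristic gives exactly the wrong answer.

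The error is fatal to the argument as written: taking $T = \{p_n\}$, the up-left closure $A_T = \{y : p_n \preceq y\}$ equals $[0, \tfrac{n}{n+1}] \times [\tfrac{1}{n+1}, 1]$, which contains all of $S_n$, so $A_T \cap S_n = S_n \neq T$ and the set is not shattered. The fix is simple: replace your staircase with points on the main diagonal, $p_i = (x_i, x_i)$ for distinct $x_i \in (0,1)$; these are pairwise incomparable, and the rest of your argument (up-left closure $A_T$ is monotone by transitivity, contains each $p_i \in T$, and misses each $p_j \notin T$ by incomparability) then goes through verbatim. This is essentially the paper's proof: the paper also uses diagonal points and the union of the rectangles $R_i = [0, x_i] \times [x_i, 1]$, which is exactly the up-left closure of $T$ in your notation.
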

    \begin{proof}
        Let $\mathcal{A}$ be the family of all monotone allocation regions. Consider any set $S$ of $n$ points on on the diagonal $\{(x, y): x = y\} \cap [0, 1]^2$, we prove that $S$ can be shattered by $\A.$ As an auxiliary construction, for any point $(x_i,x_i)$ in $S$, define the rectangle 
         \begin{equation}
         \label{eq:R_x}
            R_i = [0, x_i]\times[x_i,1].
        \end{equation} 
         For any subset $S'$ of $S$, consider the region $A' = \bigcup_{i:\, (x_i, x_i) \in S'} R_i$; to conclude the proof it is enough to argue that the following three properties are verified: (i) $A'$ is monotone, (ii) $A'$ contains all the points in $S'$, and (iii) $A'$ does not contain any point in $S\setminus S'$.

        We start proving monotonicity. Let $\hat v$ be any point in $[0,1]^2$ that dominates some $v'$ in $A'$, by definition $v'$ belongs to some $R_i$ with $(x_i, x_i) \in S'$, implying that $(x_i, x_i) \preceq v'$. By transitivity, $\hat v$ dominates $(x_i, x_i)$, meaning that it belongs to $R_i$ (since each $R_i$ is monotone), thus belonging to $A'$. We are left with proving points (ii) and (iii). The monotone region $A'$ is the union of the $R_i$ for $i: (x_i, x_i) \in S'$, and each $R_i$ only touches the main diagonal in the corresponding point $(x_i, x_i)$. Thus, the intersection of $A'$ with the main diagonal is exactly $S'$.
    \end{proof}

    \begin{figure}[t!]
    \centering
    \begin{subfigure}[t]{0.35\textwidth}
    \includegraphics[width=\textwidth]{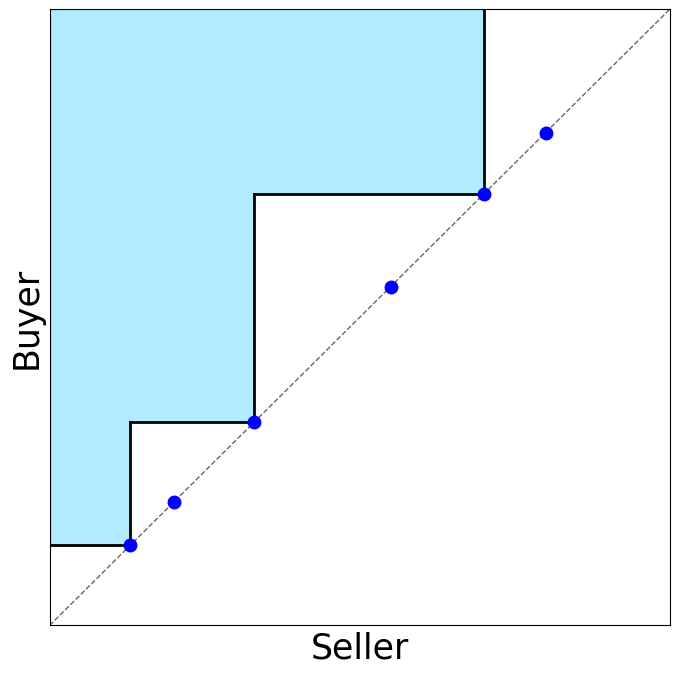}
    \end{subfigure}
    \hspace{1cm}
    \begin{subfigure}[t]{0.35\textwidth}
        \includegraphics[width=\textwidth]{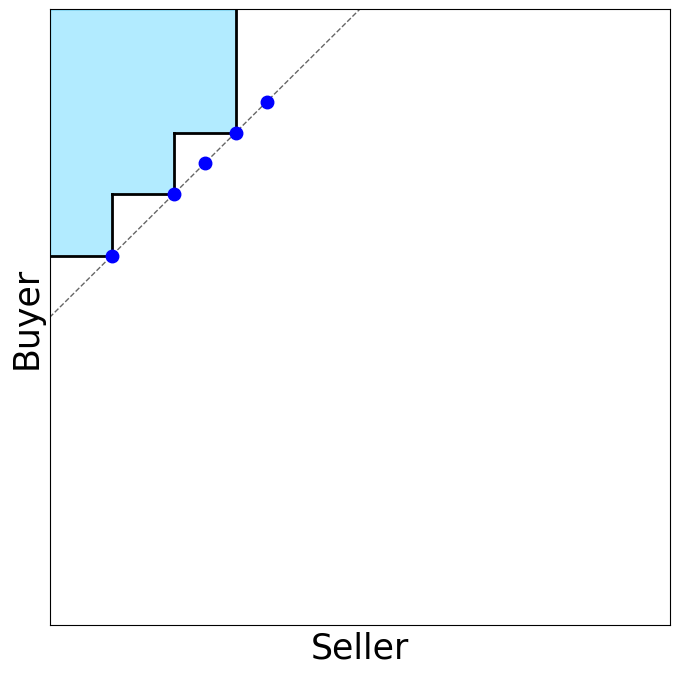}
    \end{subfigure}
        
    \caption{Visualization of the proofs of \Cref{thm:vc_dimension_monotone} (left) and \Cref{thm:rademacher_lower_bound,thm:pseudo_profit} (right).}
    \label{fig:complexity_plots}
    \end{figure}
    \begin{definition}[Pseudo-dimension]
    \label{def:pseudo_dimension}
        For a class of real-valued functions, we say that a set of points, here $S = \{\vi\}_{i=1}^m$, is pseudo-shattered if $\exists$ a vector $r \in \mathbb{R}^m$ such that $\forall\, b \in \{\pm 1\}^n \ \exists f_b \in \F$ such that $\text{sign}(f_b(\vi)-r_i) = b_i \ \forall \, \vi \in C$. The pseudo-dimension is the maximum cardinality of a set which is pseudo-shattered by $\F$.
    \end{definition}
    \begin{theorem}[Unbounded Pseudo-dimension]
    \label{thm:pseudo_profit}
    The pseudo-dimension of the class of profit functions $\F$ is unbounded.
    \begin{proof}
         Fix any set $S \subset C = \left\{(x, y): x = y-\frac{1}{2}\right\}\cap [0, 1]^2$, with $\lvert  S \rvert = m$. If $S$ is a set of valuations, we have that each valuation vector lies on $C$, meaning that the difference between its entries is always $\frac{1}{2}$. With respect to \Cref{def:pseudo_dimension}, here $\F =  \{\prof(M, \cdot), M\in \M\}$, i.e. the profit functions induced by valid mechanisms, as characterized in \Cref{thm:mechanisms}.

        Let us fix an arbitrary vector of signs $b$: this induces a partition of $S$ into $S'$ and $S\setminus S'$. We can build a mechanism $M'$ and a monotone allocation region $A'$ such that the corresponding profit function $\prof(M', \cdot) \in \F$ assigns $\nicefrac{1}{2}$ to points in $S'$ and $0$ otherwise. More precisely, almost identically to \Cref{thm:vc_dimension_monotone}, $A'=\cup_{i:\, \vi \in S'} R_i$, where the rectangles $R_i = [0, x_i]\times[y_i, 1]$ and $\vi = (x_i, y_i)$.  If we take $r \in \mathbb{R}^m$ to be the constant vector with entries equal to $\nicefrac 14$, we immediately have that $\text{sign}(\prof(M', \vi)-r_i) = b_i$. No assumption was made on $b$, so the same holds for any choice of $b$. Therefore, $S$ is pseudo-shattered by $\F$. The procedure is valid $\forall \, m \in \mathbb{N}$: the pseudo-dimension of $\F$ is thus unbounded.
    \end{proof}    
    
    \end{theorem}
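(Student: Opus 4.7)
The plan is to mimic and extend the construction from \Cref{thm:vc_dimension_monotone}, replacing the main diagonal by the parallel line $C = \{(x,y) \in [0,1]^2 : y = x + \nicefrac{1}{2}\}$. The crucial feature of $C$ is that distinct points on it are pairwise incomparable under the dominance order $\preceq$: if $v_i=(x_i,y_i)$ and $v_j=(x_j,y_j)$ are in $C$ with $v_i \neq v_j$, then $x_i < x_j$ forces $y_i = x_i + \nicefrac{1}{2} < x_j + \nicefrac{1}{2} = y_j$, so neither point dominates the other. This rigidity is what will let me (a) include/exclude each sample point from the allocation region independently, and (b) pin down Myerson payments at each included point.

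Fix an arbitrary $m \in \mathbb{N}$ and pick $m$ distinct points $S = \{v_1, \ldots, v_m\} \subset C$. Given an arbitrary sign pattern $b \in \{\pm 1\}^m$, set $S' = \{v_i : b_i = +1\}$ and, exactly as in the VC-dimension proof, define $A_{S'} = \bigcup_{v_i \in S'} R_i$ with $R_i = [0, x_i] \times [y_i, 1]$. Each $R_i$ is a monotone rectangle, so $A_{S'}$ is a monotone allocation region and, by \Cref{thm:mechanisms}, induces a valid DSIC-IR mechanism $M_{S'} \in \M$ whose payments are Myerson. I would then verify, using the pairwise incomparability on $C$, that $v_j \in R_i$ occurs only when $j = i$: if $v_j \in R_i$ then $x_j \le x_i$ and $y_j \ge y_i$, but the first inequality combined with $v_j \in C$ forces $y_j \le y_i$, so $v_j = v_i$.

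The profit evaluation then splits neatly in two: for $v_i \in S'$, the south-east boundary projection at $v_i$ lies at the corner $(x_i, y_i)$ of $R_i$ (no other $R_j$ extends the allocation region beyond that corner at height $y_i$ or at column $x_i$, again by incomparability), so $\prof(M_{S'}, v_i) = y_i - x_i = \nicefrac{1}{2}$; for $v_i \notin S'$, incomparability yields $v_i \notin A_{S'}$ and thus $\prof(M_{S'}, v_i) = 0$. Taking the constant threshold vector $r = (\nicefrac{1}{4}, \ldots, \nicefrac{1}{4})$ therefore gives $\mathrm{sign}(\prof(M_{S'}, v_i) - \nicefrac{1}{4}) = b_i$ for every $i$, so $S$ is pseudo-shattered by $\F$. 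Since $m$ is arbitrary, the pseudo-dimension is unbounded.

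The only delicate part, and the one I would write most carefully, is the Myerson price computation at an included point: one must rule out that some \emph{other} rectangle $R_j$ in the union shifts the south-east projection at $v_i$ by extending either the seller's reachable column or the buyer's reachable row. The incomparability lemma on $C$ is precisely what neutralizes this, so once the line $C$ is chosen, no genuine obstacle remains and the rest is a direct transcription of the VC-dimension argument with profit values $\{0, \nicefrac{1}{2}\}$ replacing the $0/1$ indicator.
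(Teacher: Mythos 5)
Your proposal is correct and follows essentially the same route as the paper: the same shifted line $C$, the same union of rectangles $R_i=[0,x_i]\times[y_i,1]$ over the positively-labeled points, and the same constant threshold $r_i=\nicefrac14$ separating profits $\nicefrac12$ and $0$. The extra care you take in verifying the pairwise incomparability on $C$ and the resulting Myerson payments $(p,q)=(x_i,y_i)$ at each included point is a welcome elaboration of a step the paper leaves implicit, but it is not a different argument.
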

    \begin{definition}[Rademacher Complexity]
    \label{def:rademacher}
    Let $\{\vi\}_{i=1}^m$ be a i.i.d. sample of size $m$, such that each $\vi \sim \D$, where $\D$ is a distribution over $\R^n$. Let $\F$ be generic class of real-valued functions on $\R^n$. Let $\{\sigma_i\}_{i=1}^n$ be a set of i.i.d Rademacher variables (i.e., $\sigma_i \in\{-1,+1\}$ with equal probability). The Rademacher complexity of the tuple $(\D, \F)$ is defined as follows:
            \[
                R_m(\D, \F) = \E{\sup_{f\in \F} \left\lvert\frac{1}{m} \sum_{i=1}^m \sigma_i f(\vi)\right\rvert}.        
            \]
\end{definition}
    \begin{theorem}[Lower Bound on Rademacher Complexity]
    \label{thm:rademacher_lower_bound}
        There exists a distribution over $[0, 1]^2$ such that the Rademacher complexity for the class $\F$ of profit functions is at least $\nicefrac 14$.
        \begin{proof}
        We use the same construction of \Cref{thm:pseudo_profit}.
        Consider the uniform distribution $\D$ over the segment
        $C = \left\{(x, y): x = y-\frac{1}{2}\right\}\cap [0, 1]^2$. The segment $C$ is characterized by a constant difference of $\nicefrac 12$ between the buyer and the seller valuations.

        First, we fix a realization of the i.i.d. sample $S=\{\vi\}_{i=1}^m$, where $v_i \sim \D$. Here we abuse notation using the same symbol for both a random variable and its realization. Therefore, as for the proof of \Cref{thm:pseudo_profit}, $S \subset C$; also, $\mathcal{F} = \{\prof(M, \cdot), M\in \M\}$. Note, we assume without loss of generality that the points in $S$ are disjoint: this happens almost surely.

            Fix any realization of the Rademacher random variables and denote with $S'$ the subset of $S$ associated to points whose $\sigma_i$ is $1$. Without loss of generality, we assume that $|S'|\ge \nicefrac{m}{2}$ (if this is not the case, then the same argument holds for $S\setminus S'$). Consider now the mechanism associated with the allocation region $A'=\cup_{i:\, \vi \in S'} R_i$, where the rectangles $R_i$ are defined as in the proof of \Cref{thm:pseudo_profit}. $A'$ contains all the points in $S'$ and no point in $S \setminus S'.$ Moreover, the associated mechanism $M'$ extracts exactly profit $\nicefrac{1}2$ from each point in $S'$ and $0$ from the points not in $S$. We then have: 
            \[
                \sup_{M\in \M} \left\lvert\frac{1}{m} \sum_{i=1}^m \sigma_i \cdot  \prof(M,\vi)\right\rvert \ge \left\lvert\frac{1}{m} \sum_{\vi \in S'} \sigma_i \cdot \prof(M',\vi)\right\rvert = \frac{1}{m} \sum_{\vi \in S'} \frac{1}{2} \ge \frac{1}{4},
            \]
            where in the last inequality we use the assumption $|S'| \ge \nicefrac{m}{2}$.

            If we now take the expectation with respect to the Rademacher random variables, we get a $\nicefrac{1}{4}$ lower bound. This means that the same bound also holds when taking an outer expectation with respect to the distribution of $S$, since it holds for almost every realization of $S$. We have then proved the desired statement: 
            \[
                \Esub{S,\sigma}{\sup_{M\in \M} \left\lvert\frac{1}{m} \sum_{i=1}^m \sigma_i \cdot  \prof(M,\vi)\right\rvert} \ge \frac{1}{4},
            \]
            Note, our bound crucially \emph{does not} depend on the number $m$ of samples.
        \end{proof}
    \end{theorem}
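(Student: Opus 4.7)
The plan is to mimic the construction used for the pseudo-dimension lower bound in \Cref{thm:pseudo_profit}. Let $\D$ be the uniform distribution on the segment $C = \{(x,y)\in[0,1]^2 : y = x + \nicefrac{1}{2}\}$, so every realized valuation pair enjoys a constant gap $\vb - \vs = \nicefrac{1}{2}$. For a fixed realization of the i.i.d.\ sample $S=\{v_i\}_{i=1}^m$ (whose points are almost surely distinct) and of the Rademacher vector $\sigma\in\{\pm 1\}^m$, I would split $S$ into $S^+ = \{v_i : \sigma_i = +1\}$ and $S^- = \{v_i : \sigma_i = -1\}$; by exchanging the roles of $+1$ and $-1$ if needed, I can assume $|S^+|\geq \nicefrac{m}{2}$, since this only changes the sign of the Rademacher sum inside the absolute value.

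Next, I would associate to every $v_i=(x_i,y_i)\in S^+$ the monotone rectangle $R_i = [0,x_i]\times[y_i,1]$ and define a candidate mechanism $M^\star$ whose allocation region is $A^\star = \bigcup_{v_i\in S^+} R_i$. The set $A^\star$ is monotone (a union of monotone sets), hence by \Cref{thm:mechanisms} it induces a valid DSIC and IR mechanism with Myerson payments.

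The key step is to compute $\prof(M^\star,v_i)$ on the sample. For $v_i\in S^+$, I must show that the south-east projection of $v_i$ on $\partial A^\star$ coincides with $(x_i,y_i)$, so that $q-p = y_i - x_i = \nicefrac{1}{2}$. The crucial observation is that all samples lie on a single increasing line: for $j\neq i$, the rectangle $R_j$ contains the horizontal slice $\{(x,y_i): x\leq x_j\}$ only when $y_j\leq y_i$, i.e.\ $x_j\leq x_i$, so the maximum $x$-coordinate achievable along $y=y_i$ inside $A^\star$ is precisely $x_i$ (attained by $R_i$ itself). The symmetric argument for the vertical slice yields buyer payment $y_i$. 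For $v_i\in S^-$, the same line-monotonicity argument implies $v_i\notin R_j$ for every $j\neq i$ (equality of coordinates would be needed), hence $v_i\notin A^\star$ and the profit is $0$.

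Combining these computations,
\[
\sup_{M\in\M}\Bigl\lvert\tfrac{1}{m}\sum_{i=1}^m \sigma_i\,\prof(M,v_i)\Bigr\rvert \;\geq\; \tfrac{1}{m}\sum_{v_i\in S^+}\tfrac{1}{2} \;\geq\; \tfrac{1}{4},
\]
uniformly over almost every realization of $S$ and $\sigma$. Taking expectations then yields the claim, and notably the bound is independent of $m$. The main obstacle I anticipate is exactly the Myerson-payment verification: since $A^\star$ is a nontrivial union of rectangles, one must rule out contributions from other $R_j$'s that could shift the projection of $v_i$ away from $(x_i,y_i)$; the line $C$ is what makes this projection calculation collapse to the intended values.
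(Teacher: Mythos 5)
Your proposal is correct and follows essentially the same route as the paper's proof: the uniform distribution on the shifted-diagonal segment, the mechanism whose allocation region is the union of the monotone rectangles $R_i$ over the positive-sign samples, the sign-flip argument to assume $|S^+|\ge \nicefrac m2$, and the resulting $\nicefrac 14$ bound after taking expectations. The only difference is that you carry out the Myerson-payment projection computation (seller payment $x_i$, buyer payment $y_i$, hence profit exactly $\nicefrac 12$ on $S^+$ and $0$ on $S^-$) explicitly, which the paper delegates to the constructions in \Cref{thm:vc_dimension_monotone,thm:pseudo_profit}.
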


\section{Concentration Inequality}

    In the main body, we use the following version of multiplicative Chernoff–Hoeffding bound that we use (restatement of Theorem 1.1. of \citet{DubhashiP09} for variables in $[0,1]$).

    \begin{theorem}
    \label{thm:chernoff}
        Let $X$ be defined as the sum of $m$ random variables $X_i$, independently distributed in $[0,1]$. Let $\eps \in (0,1)$ and $t > 2 e \E{X}$ be any precision parameters, then the following inequalities hold:
        \begin{itemize}\item[\textnormal{(i)}] $\P{X - \E{X} < - \eps \E{X}} \le \exp\left(-\tfrac 12 \eps^2\E{X}\right)$ 
            \item[\textnormal{(ii)}] $\P{X > t} \le 2^{- t}$ 
        \end{itemize}
    \end{theorem}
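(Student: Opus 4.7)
}
Both inequalities are standard multiplicative Chernoff–Hoeffding bounds, and the plan is to derive them via the exponential moment method (Cram\'er--Chernoff). The starting observation, used in both parts, is that for any real $\lambda$ and any $X_i \in [0,1]$,
\[
\E{e^{\lambda X_i}} \le 1 + \E{X_i}(e^{\lambda}-1) \le \exp\bigl(\E{X_i}(e^{\lambda}-1)\bigr),
\]
where the first inequality uses the convexity of $x \mapsto e^{\lambda x}$ on $[0,1]$ together with the endpoint bound $e^{\lambda x} \le 1 + x(e^{\lambda}-1)$, and the second is $1+y \le e^y$. By independence of the $X_i$, this gives
\[
\E{e^{\lambda X}} \le \exp\bigl(\E{X}(e^{\lambda}-1)\bigr).
\]

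For part (i), I would apply Markov's inequality in the form $\P{X < (1-\eps)\E{X}} = \P{e^{-\lambda X} > e^{-\lambda(1-\eps)\E{X}}} \le e^{\lambda(1-\eps)\E{X}}\E{e^{-\lambda X}}$ for $\lambda > 0$, plug in the MGF bound above with $-\lambda$ in place of $\lambda$, and optimize. Setting $\lambda = -\ln(1-\eps)$ collapses the bound to $\exp\bigl(\E{X}[(1-\eps)\ln(1-\eps)+\eps]\bigr)^{-1}$, and then the elementary inequality $(1-\eps)\ln(1-\eps)+\eps \ge \tfrac12 \eps^2$ on $\eps\in(0,1)$ yields exactly $\exp\bigl(-\tfrac12\eps^2 \E{X}\bigr)$. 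The only non-mechanical step here is justifying the scalar inequality, which follows by noting that the function $f(\eps)=(1-\eps)\ln(1-\eps)+\eps-\tfrac12\eps^2$ satisfies $f(0)=f'(0)=0$ and $f''(\eps)=\eps/(1-\eps)\ge 0$.

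For part (ii), I would again apply Markov's inequality, this time in the form $\P{X > t} \le e^{-\lambda t}\E{e^{\lambda X}}$ for $\lambda > 0$, use the MGF bound, and write $t=(1+\delta)\mu$ with $\mu=\E{X}$. Choosing $\lambda=\ln(1+\delta)$ (the standard optimizer) gives the canonical multiplicative Chernoff form
\[
\P{X > t} \le \Bigl(\tfrac{e^{\delta}}{(1+\delta)^{1+\delta}}\Bigr)^{\mu} = \Bigl(\tfrac{e\mu}{t}\Bigr)^{t} e^{-\mu} \le \Bigl(\tfrac{e\mu}{t}\Bigr)^{t}.
\]
The assumption $t > 2e\mu$ means $e\mu/t < 1/2$, so the right-hand side is bounded by $2^{-t}$, as required.

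The main (minor) obstacle is simply pinning down the two elementary analytic inequalities that close each computation: the convexity/tangent comparison $(1-\eps)\ln(1-\eps)+\eps \ge \eps^2/2$ for part (i), and the clean absorption of the stray $e^{-\mu}$ factor together with the threshold condition $t > 2e\mu$ for part (ii). Neither is deep; everything else is bookkeeping around the exponential moment method and independence. Since the statement is cited verbatim as Theorem~1.1 of \citet{DubhashiP09} (restricted to $[0,1]$-valued summands), I would also remark that no additional argument is needed beyond the specialization from $\{0,1\}$-valued variables to $[0,1]$-valued ones, which is exactly handled by the tangent inequality $e^{\lambda x}\le 1+x(e^\lambda-1)$ invoked at the start.
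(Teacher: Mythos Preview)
The paper does not actually prove this theorem: it merely states it as a restatement of Theorem~1.1 of \citet{DubhashiP09} specialized to $[0,1]$-valued summands. Your proposal therefore goes beyond what the paper does, supplying a complete and correct standard Cram\'er--Chernoff derivation; both the lower-tail computation (with the tangent bound on the MGF and the convexity check for $(1-\eps)\ln(1-\eps)+\eps\ge \eps^2/2$) and the upper-tail computation (rewriting the canonical Chernoff bound as $(e\mu/t)^t e^{-\mu}$ and invoking $t>2e\mu$) are accurate.
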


    \section[Proof of Proposition 7]{Proof of \Cref{prop:properties}}
    \label{app:sequence}

        For completeness, we report here a direct proof of \Cref{app:sequence} from \citet{AggarwalBDF24}.

        \sequence*

        \begin{proof}
            Consider any realization of the random sequence $(a_t,b_t)$. We start by providing a more direct characterization of its elements, summarizing the ``left'' and ``right'' decisions in an auxiliary variable $\gamma_t$, which has value $-1$ if at time $t$ the sequence goes left, and value $1$ otherwise. We also introduce the middle point $m_t$ of the segment $[a_t,b_t]$, and rewrite $a_t$ and $b_t$ accordingly:
            \begin{equation}
            \label{eq:equivalent}
                m_t = \frac{\delta}{2} + \sum_{s=1}^{t-1} \gamma_s \frac{\delta}{3^s}, \quad a_t = m_t - \frac{\Delta_{t}}{2}, \quad b_t = m_t + \frac{\Delta_{t}}{2}.
            \end{equation}
            \begin{claim}
                The new definition is equivalent to the old one. 
            \end{claim}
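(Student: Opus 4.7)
The plan is to prove the equivalence by induction on $t \geq 1$, showing that for every realization of the ``left/right'' choices the pair $(a_t, b_t)$ produced by the recursive definition coincides with the pair obtained from the closed-form expression in \Cref{eq:equivalent}, where $\gamma_s = +1$ encodes a ``right'' step at time $s$ and $\gamma_s = -1$ a ``left'' step. Since $\Delta_t = \delta/3^t$ is the same quantity in both formulations, it suffices to verify the identities for $a_t$ and $b_t$ (equivalently, for $m_t = (a_t + b_t)/2$ together with the fixed half-width $\Delta_t/2$).

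For the base case $t=1$, the closed-form yields $m_1 = \delta/2$ and $\Delta_1/2 = \delta/6$, so $a_1 = \delta/3$ and $b_1 = 2\delta/3$, matching the recursive initialization. For the inductive step, I would assume $a_t = m_t - \Delta_t/2$ and $b_t = m_t + \Delta_t/2$ with $m_t$ given by \Cref{eq:equivalent}, and split on the value of $\gamma_t$. In the right case ($\gamma_t = +1$), the recursive rule sets $a_{t+1} = b_t + \Delta_{t+1}$ and $b_{t+1} = b_t + 2\Delta_{t+1}$, so that the new midpoint equals $b_t + \tfrac{3}{2}\Delta_{t+1} = m_t + \tfrac{\Delta_t}{2} + \tfrac{3}{2}\Delta_{t+1}$. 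Using $\Delta_t = 3\Delta_{t+1}$, this simplifies to $m_t + 3\Delta_{t+1} = m_t + \delta/3^t$, which is exactly $m_{t+1}$ as defined by \Cref{eq:equivalent} under $\gamma_t=+1$. The left case is symmetric, yielding $m_{t+1} = m_t - \delta/3^t$. In both cases the half-width becomes $\Delta_{t+1}/2$, completing the inductive step.

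There is no real obstacle beyond bookkeeping of the geometric factors of $3$: the only thing to check carefully is that the update $b_t + \tfrac{3}{2}\Delta_{t+1}$ (resp. $a_t - \tfrac{3}{2}\Delta_{t+1}$) produces a shift of exactly $\delta/3^t$ in the midpoint, which is precisely the increment in the sum defining $m_{t+1}$. Once the base case and the two cases of the inductive step are verified, induction on $t$ yields the claimed equivalence for every $t$ and every realization.
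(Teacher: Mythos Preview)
Your proposal is correct and follows essentially the same approach as the paper: an induction on $t$ with a direct algebraic verification that the midpoint shifts by exactly $\pm\delta/3^t$ at each step. The only cosmetic differences are that the paper spells out the left case ($\gamma_t=-1$) while you spell out the right case, and the paper verifies the implication from the closed form to the recursion whereas you go in the opposite direction; neither difference is substantive.
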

            \begin{proof}[Proof of the Claim]
                We prove this result by induction. The base case is simple to verify, as $m_1=\nicefrac 12$, $a_1 = \nicefrac 13$, and $b_1 = \nicefrac 23$ in both cases.
                Consider now the generic time step $t+1$ and assume that the two definitions are equivalent up to time $t$; we prove that \Cref{eq:equivalent} is equivalent to the one in the main body for $t \in L$ and $\gamma_{t}=-1$ (the other case is analogous):
                \begin{align*}
                    a_{t+1} = m_{t+1} - \frac{\Delta_{t+1}}{2} = m_{t} - \frac{\delta}{3^t} - \frac{\Delta_{t+1}}{2} = a_t + \frac{\Delta_t}{2} -\frac{\delta}{3^t} - \frac{\Delta_{t+1}}{2}  = a_t - 2 \Delta_{t+1}\\
                    b_{t+1} = m_{t+1} + \frac{\Delta_{t+1}}{2} = m_t - \frac{\delta}{3^t} + \frac{\Delta_{t+1}}{2} = a_t + \frac{\Delta_t}{2} -\frac{\delta}{3^t} + \frac{\Delta_{t+1}}{2}  = a_t - \Delta_{t+1}.
                \end{align*}
                Note, all equalities follow by \Cref{eq:equivalent}, except the third equalities in both formulae that are due to the inductive hypotheses.
                \end{proof}
                The proof of point (i) is fairly direct; we prove it for $a_t$, the other case is analogous. Any $\gamma_t$ lives in $\{-1,1\}$, therefore we have:
                \begin{align*}
                    a_t &= m_t - \frac{\Delta_t}{2} = \frac{\delta}{2} + \sum_{s=1}^{t-1} \gamma_s \frac{\delta}{3^{s}} - \frac{\Delta_t}{2} \ge \frac{\delta}{2} - \sum_{s=1}^{t} \frac{\delta}{3^{s}} \ge \frac{\delta}{2} - \sum_{s=1}^{\infty} \frac{\delta}{3^{s}} = 0\\
                    a_t &= m_t - \frac{\Delta_t}{2} = \frac{\delta}{2} + \sum_{s=1}^{t-1} \gamma_s \frac{\delta}{3^{s}} - \frac{\Delta_t}{2}\le \frac{\delta}{2} + \sum_{s=1}^{\infty} \frac{\delta}{3^{s}} = \delta.
                \end{align*}
                As a preliminary step to tackle point (ii), we argue that the $a_t$ and $b_t$ are monotone \emph{when} restricted to $L$ and $R$, respectively.
                \begin{claim}
                    The sequence $a_t$ is decreasing for $t \in L$, while $b_t$ is increasing for $t \in R$.
                \end{claim}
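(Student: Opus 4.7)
The plan is to prove both claims in a unified manner by exploiting their symmetry: swapping $a \leftrightarrow b$ and $L \leftrightarrow R$ (equivalently, flipping every $\gamma_r$ and reflecting about $m_t$) maps one statement to the other. I will therefore focus on the first claim that $(a_t)_{t \in L}$ is strictly decreasing; the statement for $(b_t)_{t \in R}$ follows by the analogous argument with signs flipped. By transitivity of $>$, the claim further reduces to showing $a_t < a_s$ whenever $s < t$ are two consecutive elements of $L$, so that every intermediate index $u \in \{s+1,\ldots,t-1\}$ lies in $R$.

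The main step is to compute the per-step change $a_{u+1} - a_u$ in closed form in terms of $\Delta_u$, using the recursive updates given in the main body. Leveraging $\Delta_{u+1} = \Delta_u/3$ together with the identity $b_u - a_u = \Delta_u$, one obtains
\[
a_{u+1} - a_u = \begin{cases} -\tfrac{2}{3}\Delta_u & \text{if } u \in L, \\ +\tfrac{4}{3}\Delta_u & \text{if } u \in R. \end{cases}
\]
Telescoping from $s$ up to $t$, the lone $L$-step at time $s$ contributes a decrease of $\tfrac{2}{3}\Delta_s$, while the intermediate $R$-steps contribute a cumulative increase of $\tfrac{4}{3}\sum_{u=s+1}^{t-1} \Delta_u$. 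The remaining task is to evaluate the geometric sum $\sum_{u=s+1}^{t-1} \Delta_u$ in closed form as $(\Delta_s - 3\Delta_t)/2$, so that after substitution the two terms collapse exactly to
\[
a_t - a_s = -\tfrac{2}{3}\Delta_s + \tfrac{4}{3} \cdot \tfrac{\Delta_s - 3\Delta_t}{2} = -2\Delta_t < 0,
\]
which yields the desired strict inequality. The degenerate case $t = s+1$ is handled uniformly, since the empty sum vanishes and the identity $\Delta_s = 3\Delta_t$ still gives $a_t - a_s = -2\Delta_t$.

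The main obstacle is essentially careful bookkeeping: one must verify that the geometric contraction factor $1/3$ between successive $\Delta$'s (and hence the magnitudes $4/3$ per right-step versus $2/3$ per left-step) is tuned precisely so that the accumulated right-rebound between two consecutive $L$-times never undoes the initial left-decrease. The fact that the cancellation leaves exactly $-2\Delta_t$ (as opposed to a sign-indeterminate remainder) is what guarantees the strict monotonicity along $L$ regardless of how many consecutive $R$-indices occur between two members of $L$, and completes the proof of point (ii) of the proposition.
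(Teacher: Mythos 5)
Your proof is correct. It is essentially the same argument as the paper's: both reduce to comparing $a$-values between two consecutive $L$-indices and both rely on the geometric contraction $\Delta_{u+1} = \Delta_u/3$ to show the accumulated $R$-rebound cannot undo the initial $L$-drop. The small difference is technical: the paper works from the closed-form expression for the midpoint $m_t = \tfrac{\delta}{2} + \sum_{s<t}\gamma_s \delta 3^{-s}$ (established by a separate inductive claim) and then produces a one-sided lower bound on $a_{t_1}-a_{t_2}$, whereas you telescope the recursion directly, computing the exact per-step increments $a_{u+1}-a_u \in \{-\tfrac23\Delta_u, +\tfrac43\Delta_u\}$ and summing the geometric series in closed form to obtain the exact identity $a_t - a_s = -2\Delta_t$. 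Your route is a bit more self-contained (it bypasses the $m_t$ closed form) and yields an exact value rather than a bound, but the underlying bookkeeping is the same geometric-series computation; the symmetry reduction to the $a$-case in place of the $b$-case is also valid, since the $b$-updates are precisely the $a$-updates with signs flipped and $L \leftrightarrow R$ swapped.
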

                \begin{proof}[Proof of the Claim]
                    We prove the Claim for $a_t$, the other case follows by negating the expression in \Cref{eq:monotone_seq}. Consider any consecutive $t_1 \le t_2$ in $L$ (meaning that all integer in $(t_1,t_2)$ are in $R$, and so the corresponding $\gamma_t$ are positive), we have the following:
                    \begin{equation}
                    \label{eq:monotone_seq}
                        a_{t_1} - a_{t_2} = \sum_{s=t_1+1}^{t_2-1} \frac{\delta}{3^s} -\frac{\delta}{3^{t_1}}- \frac{\Delta_{t_1}}{2} + \frac{\Delta_{t_2}}2 \ge \sum_{s=t_1+1}^{t_2-1} \frac{\delta}{3^s} + \frac{\Delta_{t_2}}{2} > 0
                     \end{equation}
                \end{proof}
                Given the result of the last Claim, in order to prove point (ii) of the Proposition, we only need to argue that $a_{t_L} \ge b_{t_R}$, where $t_L$, respectively $t_R$, are the last ``left'', respectively ``right'' rime step. Once we have this result, it is enough to set $\tau$ in $(a_{t_L}, b_{t_R})$.

                We prove the desired inequality for the case in which $t_L > t_R$ (and thus $\gamma_t=-1$ for every integer $t$ in $(t_R,t_L]$).

                \begin{align*}
                    b_{t_R} - a_{t_L} &= -\sum_{s=t_{R}}^{t_L-1} \gamma_s \frac{\delta}{3^{s}} + \frac{\Delta_{t_L}}{2} + \frac{\Delta_{t_R}}{2} \tag{By \Cref{eq:equivalent}}\\
                    &=-\frac{\delta}{3^{t_R}}+\sum_{s=t_{R}+1}^{t_L-1} \frac{\delta}{3^{s}} + \frac{\delta}{2\cdot 3^{t_L}} + \frac{\delta}{2\cdot 3^{t_R}} \tag{As $\gamma_t=-1$ after $t_R$ and $\gamma_{t_R}=1$}\\
                    &< \sum_{s=t_{R}+1}^{t_L} \frac{\delta}{3^{s}} - \frac{\delta}{2\cdot 3^{t_R}} < \sum_{s=t_{R}+1}^{\infty} \frac{\delta}{3^{s}} - \frac{\delta}{2\cdot 3^{t_R}} \le 0,
                \end{align*}
                where the last inequality follows by the closed-form expression of the tail sum of a geometric series.

                When $t_R > t_L$ we have (recall that in this case $\gamma_t = 1$ for every integer $t\in (t_L, t_R])$: 
                \begin{align*}
                    b_{t_R} - a_{t_L} &= \sum_{s=t_{L}}^{t_R-1} \gamma_s \frac{\delta}{3^{s}} + \frac{\Delta_{t_L}}{2} + \frac{\Delta_{t_R}}{2} \tag{By \Cref{eq:equivalent}}\\
                    &=-\frac{\delta}{3^{t_L}}+\sum_{s=t_{L}+1}^{t_R-1} \frac{\delta}{3^{s}} + \frac{\delta}{2\cdot 3^{t_L}} + \frac{\delta}{2\cdot 3^{t_R}} \tag{As $\gamma_t=1$ after $t_L$ and $\gamma_{t_L}=-1$}\\
                    &< \sum_{s=t_{L}+1}^{t_R} \frac{\delta}{3^{s}} - \frac{\delta}{2\cdot 3^{t_L}} < \sum_{s=t_{L}+1}^{\infty} \frac{\delta}{3^{s}} - \frac{\delta}{2\cdot 3^{t_L}} \le 0,
                \end{align*}
        \end{proof}

\end{document}